\documentclass{scrartcl}
\usepackage[left=2cm,right=2cm,top=3cm]{geometry}
\usepackage{amsmath}
\usepackage{amsthm}
\usepackage{amssymb}
\usepackage{amsfonts}
\usepackage{mathrsfs}
\usepackage{color}
\usepackage[utf8]{inputenc}
\usepackage{enumitem}
\usepackage{multicol}
\usepackage[english]{babel}
\usepackage[babel]{csquotes}
\usepackage{mathtools}
\usepackage{rotating}%for landsacape page
\usepackage{latexsym}
\usepackage{epsfig}
\usepackage{verbatim}
\usepackage{thmtools,thm-restate}
\usepackage{listings} %Per inserire codice
\usepackage{makecell,booktabs,amsxtra}%for hard trees

\usepackage{adjustbox}%to scale table size 
\usepackage{ifthen}
\usepackage[position=top]{subfig}
\usepackage{pgfplots}
\pgfplotsset{xticklabels={}, yticklabels={} width=7cm,compat=1.8}
\usepackage{pgfplotstable}
\pgfmathsetseed{1138} % set the random seed
\pgfplotstableset{ % Define the equations for x and y
	create on use/x/.style={create col/expr={42+2*\pgfplotstablerow}},
	create on use/y/.style={create col/expr={(0.6*\thisrow{x}+130)+5*rand}}
}
\pgfplotstablenew[columns={x,y}]{30}\loadedtable

\usepackage[ruled]{algorithm2e}

\SetAlFnt{\small}
\SetAlCapFnt{\small}
\SetAlCapNameFnt{\small}
\SetAlCapHSkip{0pt}
\IncMargin{-\parindent}

\usepackage{url}
\usepackage[square,numbers,sort]{natbib}

\usepackage{pgfplots,pgfplotstable}%stacked bar charts
\pgfplotsset{compat=1.8}

\usepackage[bookmarks=false,colorlinks=true,linkcolor=blue,urlcolor=blue,citecolor=red,pdfborder={0 0 0.5}]{hyperref}

\theoremstyle{plain}
\newtheorem{theorem}{Theorem}[section]
\newtheorem{lemma}[theorem]{Lemma}

\newtheorem{corollary}[theorem]{Corollary}
\newtheorem{conjecture}[theorem]{Conjecture}

\declaretheoremstyle[
notefont=\bfseries, notebraces={}{},
bodyfont=\normalfont\itshape,
headformat=\NAME \NOTE
]{nopar}
\declaretheorem[style=nopar,name=Corollary]{corollary*}
\declaretheorem[style=nopar,name=Theorem]{theorem*}

\theoremstyle{definition}
\newtheorem{definition}[theorem]{Definition}

\newtheorem*{remark*}{Remark}
\newtheorem{observation}[theorem]{Observation}

\newtheorem*{acknowledgements*}{Acknowledgements}

\newcount \mycount
\newcount \mycountb

\newcommand{\connect}[2]{\ensuremath{
{\leftrightsquigarrow}_{#2}}}

\DeclareMathOperator{\N}{\mathbb N}

\DeclareMathOperator{\bA}{{\mathbb A}}

\DeclareMathOperator{\B}{{\mathbb B}}
\DeclareMathOperator{\bB}{{\B}}
\DeclareMathOperator{\bP}{{\mathbb P}}
\DeclareMathOperator{\bC}{{\mathbb C}}
\DeclareMathOperator{\bD}{{\mathbb D}}
\DeclareMathOperator{\bF}{{\mathbb F}}

\DeclareMathOperator{\bT}{{\mathbb T}}
\DeclareMathOperator{\T}{{\bT}}

\DeclareMathOperator{\bI}{{\mathbb I}}

\renewcommand{\P}{\mathbb P}

\DeclareMathOperator{\Pol}{Pol}

\DeclareMathOperator{\Csp}{CSP}

\newcommand{\HM}[1]{\operatorname{HM}(#1)}
\newcommand{\Ele}[1]{\operatorname{Ele}(#1)}
\newcommand{\stCon}{{\operatorname{st-Con}}}

\providecommand{\dotdiv}{% Don't redefine it if available
  \mathbin{% We want a binary operation
    \vphantom{+}% The same height as a plus or minus
    \text{% Change size in sub/superscripts
      \mathsurround=0pt % To be on the safe side
      \protect\ooalign{% Superimpose the two symbols
        \noalign{\kern-.45ex}% but the dot is raised a bit
        \hidewidth$\smash{\cdot}$\hidewidth\cr % Dot
        \noalign{\kern.45ex}% Backup for vertical alignment
        $-$\cr % Minus
      }%
    }%
  }%
}
\providecommand{\cupdot}{% Don't redefine it if available
  \mathbin{% We want a binary operation
    \vphantom{+}% The same height as a plus or minus
    \text{% Change size in sub/superscripts
      \mathsurround=0pt % To be on the safe side
      \protect\ooalign{% Superimpose the two symbols
        \noalign{\kern-.4ex}% but the dot is raised a bit
        \hidewidth$\smash{\cdot}$\hidewidth\cr % Dot
        \noalign{\kern.4ex}% Backup for vertical alignment
        $\cup$\cr % Minus
      }%
    }%
  }%
}
\theoremstyle{definition}
\newtheorem{examplex}[theorem]{Example}
\newenvironment{example}
{\pushQED{\qed}\examplex}
{\popQED\endexamplex}

\theoremstyle{definition}

\makeatletter
\newcommand\footnoteref[1]{\protected@xdef\@thefnmark{\ref{#1}}\@footnotemark}
\makeatother

\makeatletter
\newcommand*{\rom}[1]{\expandafter\@slowromancap\romannumeral #1@}
\makeatother

\makeatletter
\def\blfootnote{\xdef\@thefnmark{}\@footnotetext}
\makeatother

\usepackage{setspace}

\usepackage{epigraph}
\usepackage{siunitx}
\newcounter{anzahl}
\def\nodeDist{0.4}

\newcommand{\tikzpathFreeVariables}[1]{
\begin{tikzpicture}[scale=0.5]
\setcounter{anzahl}{0}

\foreach \y/\c [count=\xi from 1] in #1 {
    \ifthenelse{\c=0}{
    \node[var-f] (\xi) at (\nodeDist*\xi,\nodeDist*\y) {};}{
    \node[var-b] (\xi) at (\nodeDist*\xi,\nodeDist*\y) {};}
    
    \setcounter{anzahl}{\xi}
    }
    
\foreach \y [count=\yi from 1,count=\yii from 2] in {2,...,\arabic{anzahl}} 
    \path (\yi) edge (\yii);
\end{tikzpicture}
}

\newcommand{\toEdge}{\mathbin
{\begin{tikzpicture}[baseline =-1mm]
    \path[->] (0,0) edge (0.37,0);
\end{tikzpicture}}}
\newcommand{\fromEdge}{\mathbin
{\begin{tikzpicture}[baseline =-1mm]
    \path[<-] (0,0) edge (0.37,0);
\end{tikzpicture}}}
\tikzstyle{var-b}=[circle,fill,draw=white,inner sep=0pt,minimum size=3.5pt]
\tikzstyle{bullet}=[circle,fill,draw=white,inner sep=0pt,minimum size=3.2pt]
\tikzstyle{var-f}=[circle,draw,inner sep=0pt,minimum size=3pt]
\usetikzlibrary{arrows,arrows.meta}
\usetikzlibrary{bending,calc}

\newcommand{\HMThree}{
    \begin{tikzpicture}
        \node at (0,1) {$p_0$};
        \node (0) at (0,0) {$x$};
        
        \node at (2,1) {$p_1$};
        \node (1) at (2,0) {$\begin{array}{cc}
             x&x  \\
             y&x\\
             y&y
        \end{array}$};
        
        \node at (4,1) {$p_2$};
        \node (2) at (4,0) {$\begin{array}{cc}
             x&x  \\
             y&x\\
             y&y
        \end{array}$};
        
        \node at (6,1) {$p_3$};
        \node (3) at (6,0) {$\begin{array}{cc}
             x&x  \\
             y&x\\
             y&y
        \end{array}$};
        
        \node at (8,1) {$p_4$};
        \node (4) at (8,0) {$y$};

        \path 
            (0) edge (1)
            (1) edge (2)
            (2) edge (3)
            (3) edge (4)
            ;

        \node[var-b] (braid0a) at (1,-1.5) {};
        \node[var-b] (braid0b) at (1,-2.5) {};
        
        \node[var-b] (braid1a) at (3,-1.5) {};
        \node[var-b] (braid1b) at (3,-2.5) {};
        
        \node[var-b] (braid2a) at (5,-1.5) {};
        \node[var-b] (braid2b) at (5,-2.5) {};
        
        \node[var-b] (braid3a) at (7,-1.5) {};
        \node[var-b] (braid3b) at (7,-2.5) {};
        \path[->,>=stealth']
            (braid0a) edge (braid1a)
            (braid0b) edge (braid1a)
            (braid0b) edge (braid1b)
            
            (braid1a) edge (braid2a)
            (braid1b) edge (braid2a)
            (braid1b) edge (braid2b)
            
            (braid2a) edge (braid3a)
            (braid2b) edge (braid3a)
            (braid2b) edge (braid3b)
        ;
    \end{tikzpicture}
}

\newcommand{\EleThree}{
    \begin{tikzpicture}
        \node at (0,1.5) {$e_0$};
        \node (0) at (0,0) {$\begin{array}{c}
             
             x_1\\
             x_2\\
             x_3
        \end{array}$};
        
        \node at (2,1.5) {$e_1$};
        \node (1) at (2,0) {$\begin{array}{cc}
             x_1&y  \\
             z&y\\
             z&x_1\\
             x_2&x_2\\
             x_3&x_3
        \end{array}$};
        
        \node at (4,1.5) {$e_2$};
        \node (2) at (4,0) {$\begin{array}{cc}
             x_1&x_1  \\
             x_2&y\\
             z&y\\
             z&x_2\\
             x_3&x_3
        \end{array}$};
        
        \node at (6,1.5) {$e_3$};
        \node (3) at (6,0) {$\begin{array}{cc}
             x_1&x_1  \\
             x_2&x_2\\
             x_3&y\\
             z&y\\
             z&x_3
        \end{array}$};
        
        \node at (8,1.5) {$e_4$};
        \node (4) at (8,0) {$\begin{array}{cc}
             x_1\\
             x_2\\
             x_3\\
        \end{array}$};
        \path 
            (0) edge (1)
            (1) edge (2)
            (2) edge (3)
            (3) edge (4)
            ;

        \node[var-b] (braid0a) at (1,-1.5) {};
        \node[var-b] (braid0b) at (1,-2.5) {};
        
        \node[var-b] (braid1a) at (3,-1.5) {};
        \node[var-b] (braid1b) at (3,-2.5) {};
        
        \node[var-b] (braid2b) at (5,-2.5) {};
        
        \node[var-b] (braid3b) at (7,-2.5) {};
        \path[->,>=stealth']
            (braid0a) edge (braid1a)
            (braid0b) edge (braid1a)
            (braid0b) edge (braid1b)
            
            (braid1b) edge (braid2b)
            
            (braid2b) edge (braid3b)
        ;

        \node[var-b] (braid0a) at (1,-3.0) {};
        
        \node[var-b] (braid1a) at (3,-3.0) {};
        \node[var-b] (braid1b) at (3,-4.0) {};
        
        \node[var-b] (braid2a) at (5,-3.0) {};
        \node[var-b] (braid2b) at (5,-4.0) {};
        
        \node[var-b] (braid3b) at (7,-4.0) {};
        \path[->,>=stealth']
            (braid0a) edge (braid1a)
            
            (braid1a) edge (braid2a)
            (braid1b) edge (braid2a)
            (braid1b) edge (braid2b)
            
            (braid2b) edge (braid3b)
        ;

        \node[var-b] (braid0a) at (1,-4.5) {};
        
        \node[var-b] (braid1a) at (3,-4.5) {};
        
        \node[var-b] (braid2a) at (5,-4.5) {};
        \node[var-b] (braid2b) at (5,-5.5) {};
        
        \node[var-b] (braid3a) at (7,-4.5) {};
        \node[var-b] (braid3b) at (7,-5.5) {};
        \path[->,>=stealth']
            (braid0a) edge (braid1a)
            
            (braid1a) edge (braid2a)
            
            (braid2a) edge (braid3a)
            (braid2b) edge (braid3a)
            (braid2b) edge (braid3b)
        ;
    \end{tikzpicture}
}

\tikzset{
	zigzagedge/.style={
		decoration={zigzag, amplitude=0.3mm, segment length=1.2mm}
	}
}

\newcommand{\figureBlocks}{
\begin{tikzpicture}[scale=0.55,rotate=-90]
		\def\colA{black}
		\def\colB{green!85!black}
		\def\colC{orange!90!black}
		\def\colD{blue}
		\def\colE{cyan!80!black}
		\def\colF{red}
		\node[\colA] (13a) at (-1,1) {$y$};
		\node[\colA] (14a) at (-3,1) {$x$};
		
		\node[\colA] (11b) at (0,3) {$y$};
		\node[\colA] (12b) at (-1,3) {$y$};
		\node[\colA] (13b) at (-2,3) {$y$};
		\node[\colA] (14b) at (-3,3) {$x$};

		\node[\colA] (21a) at (0,5) {$y$};
		\node[\colA] (22a) at (-1,5) {$y$};
		\node[\colA] (23a) at (-2,5) {$x$};
		\node[\colA] (24a) at (-3,5) {$x$};

		\node[\colA] (23c) at (-3,7) {$x$};

		\node[\colA] (31b) at (0,9) {$y$};
		\node[\colA] (32b) at (-1,9) {$y$};
		\node[\colA] (33b) at (-2,9) {$y$};
		\node[\colA] (34b) at (-3,9) {$x$};

		\node[\colA] (41a) at (0,11) {$y$};
		\node[\colA] (42a) at (-1,11) {$x$};
		\node[\colA] (43a) at (-2,11) {$x$};
		\node[\colA] (44a) at (-3,11) {$x$};

		\node[\colA] (41b) at (0,13) {$y$};
		\node[\colA] (43b) at (-2,13) {$x$};

		\node[\colA] (51) at (0,15) {$y$};
		\node[\colA] (52) at (-1,15) {$y$};
		\node[\colA] (53) at (-2,15) {$x$};
		\node[\colA] (54) at (-3,15) {$x$};

		\node[\colA] (61) at (0,17) {$y$};
		\node[\colA] (62) at (-1,17) {$x$};
		\node[\colA] (63) at (-2,17) {$x$};
		\node[\colA] (64) at (-3,17) {$x$};

		\node[\colA] (71) at (0,19) {$y$};
		\node[\colA] (73) at (-2,19) {$x$};

		\path[->]
		(11b) edge (21a)
		(12b) edge (22a)
		(13b) edge (23a)
		(14b) edge (24a)
		(12b) edge (21a)
		(14b) edge (23a)

		(31b) edge  (41a)
		(32b) edge  (42a)
		(33b) edge  (43a)
		(34b) edge  (44a)
		(33b) edge  (42a)
		(34b) edge  (43a)

		(51) edge  (61)
		(52) edge  (62)
		(53) edge  (63)
		(54) edge  (64)
		;
		
		\path
		(14a) edge[decorate,zigzagedge] (14b)
		(13a) edge[decorate,zigzagedge] (12b)
		(13a) edge[decorate,zigzagedge] (13b)
		
		(23a) edge[decorate,zigzagedge] (34b)
		(21a) edge[decorate,zigzagedge] (31b)
		(21a) edge[decorate,zigzagedge] (32b)
		(24a) edge[decorate,zigzagedge] (23c)
		
		(41a) edge[decorate,zigzagedge] (41b)
		(42a) edge[decorate,zigzagedge] (43b)
		(43a) edge[decorate,zigzagedge] (43b)

        (43b) edge[decorate,zigzagedge] (54)
        (43b) edge[decorate,zigzagedge] (53)
        (41b) edge[decorate,zigzagedge] (52)
        (41b) edge[decorate,zigzagedge] (51)
        
        (64) edge[decorate,zigzagedge] (73)
        (62) edge[decorate,zigzagedge] (73)
        (61) edge[decorate,zigzagedge] (71)
        
		;
\end{tikzpicture}}

\newcommand{\figureCaterpillar}{
\begin{tikzpicture}[scale=0.55]
\node[] at (1,0) {$\dots$};
\node[] at (7,0) {$\dots$};
\node[] at (13,0) {$\dots$};
\node[] at (19,0) {$\dots$};
\node[var-b, label={below:$v_{i_c}$}] (b) at (3,0) {};
\node[var-b, label={below:$v_{i_c+1}$}] (c) at (5,0) {};

\node[var-b, label={below:$v_{i_{c+1}}$}] (d) at (9,0) {};
\node[var-b, label={below:$v_{i_{c+1}+1}$}] (e) at (11,0) {};

\node[var-b, label={below:$v_{i_{c+2}}$}] (f) at (15,0) {};
\node[var-b, label={below:$v_{i_{c+2}+1}$}] (g) at (17,0) {};
\path[->,>=stealth']
(b) edge[very thick] (c)
(d) edge[very thick] (e)
(f) edge[very thick] (g)
;
\end{tikzpicture}}

\newcommand{\figureBraid}{
\begin{tikzpicture}[scale=0.55]

        \node at (0,1.7) {$p_c$};
\node (1) at (0,0) {$\begin{array}{cc}
		x&x  \\
		y&x\\
		y&y
	\end{array}$};

\node at (6,1.7) {$p_{c+1}$};
\node (2) at (6,0) {$\begin{array}{cc}
		x&x  \\
		y&x\\
		y&y
	\end{array}$};
    
\node at (12,1.7) {$p_{c+2}$};
\node (3) at (12,0) {$\begin{array}{cc}
		x&x  \\
		y&x\\
		y&y
	\end{array}$};
\path 
	(-3,0) edge (1)
	(1) edge (2)
	(2) edge (3)
	(3) edge (15,0)
	;
\end{tikzpicture}}

\newcommand{\fs}[1]{f}%\operatorname{fs}_{#1}}
\newcommand{\ts}[1]{t}%\operatorname{ts}_{#1}}

\usetikzlibrary{decorations.pathmorphing}

\title{%A Natural Occurrence of 
Almost Symmetric Linear Arc Monadic Datalog and Transitive Tournaments}%Caterpillar Dualities}% Constraint Satisfaction Problems}
\author{Sebastian Meyer and Florian Starke\thanks{
Both authors have received funding from the European Research Council (Project POCOCOP, ERC Synergy Grant
101071674). Views and opinions expressed are however those of the authors only and do not necessarily reflect those of the European Union or the European Research Council Executive Agency. Neither the European Union nor the granting authority can be held responsible for them.}}
\date{Juni 2026}

\begin{document}

\maketitle

\abstract{We introduce \emph{$n$-almost symmetric} Datalog and study $n$-almost symmetric linear arc monadic Datalog. We characterize the finite relational structures whose constraint satisfaction problem is solved by this Datalog fragment as those that can be primitive positively constructed from the transitive tournament on $n+2$ vertices. We also give characterizations in terms of a certain homomorphism duality (which we call \emph{$n$-fixed unfolded caterpillar duality}) and in universal-algebraic terms (the existence of \emph{$k$-absorptive operations} and of operations forming an \emph{elevator chain of length $n+1$}). This article generalizes the results from Bodirsky and Starke about symmetric linear arc monadic Datalog.}

\section{Introduction}

Datalog fragments have been used a lot to study constraint satisfaction problems (CSPs) that are solvable in polynomial time. Well known fragments are \emph{linear Datalog}, \emph{symmetric linear Datalog}, and \emph{arc monadic Datalog}, better known as \emph{arc consistency}. In 2011 Carvalho, Dalmau, and Krokhin combined two of those restrictions and characterized the CSPs solved by linear arc monadic (lam) Datalog~\cite{lattice-ops}. Recently, Bodirsky and Starke used these results to study the CSPs solved by symmetric linear arc monadic (slam) Datalog
~\cite{bodirskyStarke2024symmetriclineararcmonadicUnfoldedCaterpillar}. One important tool when studying CSPs are \emph{primitive positive (pp-)constructions}. They give rise to a quasi order on finite relational structures which respects the computational complexity of the corresponding CSPs. The characterizations state that the CSP of a finite relational structure is solved by lam Datalog if and only if the structure can be pp-constructed by $\stCon$ and it is solved by slam Datalog if and only if the structure is pp-constructed by $\T_2$, the transitive tournament on two elements. It is well known that the transitive tournaments ordered by pp-constructability form the following chain
\[\T_1<\T_2<\T_3<\T_4<\dots<\stCon.\]
In this article we introduce $n$-almost symmetric Datalog and study $n$-almost symmetric linear arc monadic (s$_n$lam) Datalog. We show that the CSP of a finite relational structure is solved by s$_n$lam Datalog if and only if the structure can be pp-constructed by $\T_{n+2}$, generalizing the results from Bodirsky and Starke about slam Datalog.

\section{Preliminaries}

We write $[n]$ for the set $\{1,\dots,n\}$ and $[m,n]$ for the set $\{m,m+1,\dots,n\}$. 
We say that a tuple $t \in A^k$, for $k \in {\mathbb N}$, is \emph{injective} if $t$ is injective when viewed as a function from $[k]$ to $A$. 

\subsection{Structures, Homomorphisms, and CSPs}\label{sec:prelimsStructures}
We assume familiarity with the concepts of relational structures and first-order formulas from mathematical logic, as introduced for instance in~\cite{Hodges}. All structures we consider will have a finite domain and a relational signature. Let $\tau$ be a relational signature and $\bB$ be a $\tau$-structure. We say that $\bB$ is \emph{injective} if for all relationsymbols $R$ of $\tau$ all tuples in $R^{\bB}$ are injective. The structure $\bB$ is \emph{connected} if for all $b,b'\in B$ there are relationsymbols $R_1,\dots,R_n\in\tau$ and tuples $t_1\in R_1^{\bB},\dots,t_n\in R_n^{\bB}$ such that $b$ is an element of $t_1$, $b'$ is an element of $t_n$, and $t_i$ and $t_{i+1}$ have at least one element in common for all $i\in[n-1]$.  Let $\bA$ be another $\tau$-structure. A \emph{homomorphism} from $\bA\to\bB$ is a map $h\colon A\to B$ such that for all $R\in\tau$ we have $h(R^{\bA})\subseteq R^{\bB}$. The homomorphism $h$ is an \emph{embedding} if it is injective and $h(R^{\bA})= R^{\bB}$ for all $R$. An \emph{endomorphism} is a homomorphism from a structure to itself. We write $\bA\to\bB$ if there exists a homomorphism from $\bA$ to $\bB$ and $\bA\not\to\bB$ if there does not exist one. 
For $n\geq1$ a \emph{polymorphism of $\bB$ of arity $n$} is a homomorphism from $\bB^n$ to $\bB$. We denote the set of all polymorphisms of $\bB$ by $\Pol(\bB)$. 
The structures $\bA$ and $\bB$ are called \emph{homomorphically equivalent} if $\bA\to\bB$ and $\bB\to\bA$. The \emph{constraint satisfaction problem of $\bB$}, denoted $\Csp(\bB)$, is the set $\{\bA\mid \bA\text{ is a finite $\tau$-structure and $\bA\to\bB$}\}$. For example let $\stCon$ be the $\{\operatorname{source},\operatorname{target},E\}$-structure  with domain $\{s,t\}$ and relations $\operatorname{source}^\stCon\coloneqq\{s\}$, $\operatorname{target}^\stCon\coloneqq\{t\}$, and $E^\stCon\coloneqq\{(s,s),(s,t),(t,t)\}$. Then $\Csp(\stCon)$ consists of all finite directed graphs that have some vertices labeled with target and some vertices labeled with source such that there is no directed path from a source vertex to a target vertex. Hence, $\Csp(\stCon)$ encodes the complement of the directed reachability problem. 

The structure $\bB$ is a \emph{core} if all its endomorphisms are embeddings. It is well known that for each finite structure $\bB$ there is a (up to isomorphism) unique structure $\bC$ that is a core and is homomorphically equivalent to $\bB$, $\bC$ is called \emph{the} core of $\bB$.
\subsection{Primitive Positive Formulas}

A $\tau$-formula is called \emph{primitive positive (pp)} if it only uses atomic formulas ($x=y$ is allowed), conjunction, and existential quantification. There is an easy way to translate between finite $\tau$-structures and pp-$\tau$-formulas. Let $\bB$ be a $\tau$-structure, the \emph{canonical conjunctive query of $\bB$} is the pp-formula with variables $B$ that is the conjunct of all formulas $R(b_1,\dots,b_m)$ for $R\in\tau$ and $(b_1,\dots,b_m)\in R^{\bB}$. 
For a pp-$\tau$-formula $\phi$ let $\phi'$ be an equivalent pp-$\tau$-formula that is of the form $\exists x_1,\dots,x_n\, \bigwedge\text{atomic formula}$. Construct a $\tau$-structure in the following way. Take all variables (free and quantified) of $\phi'$ as the domain, a tuple $(y_1,\dots,y_m)$ is in relation $R$ if $\phi'$ contains the conjunct $R(y_1,\dots,y_m)$. Finally, identify all elements $x,y$ for which $\phi'$ contains the conjunct $x=y$ or $y=x$. The resulting structure is called the \emph{canonical database of $\phi$} and unique up to isomorphism. 

%\subsection{Primitive Positive Constructions}
%not needed? just mention as a characterization and in main thm
\subsection{Minions and Minor Conditions}
Let $\bB$ be $\tau$-structure. A \emph{clone} is a set of operations, i.e., maps from $B^n$ to $B$, that contains all projections and is closed under composition. Note that $\Pol(\bB)$ is a clone. 
%introduce idempotent?
Let $\bA$ be another $\tau$-structure. A \emph{polymorphism of $(\bA,\bB)$ of arity $n$} is a homomorphism from $\bA^n$ to $\bB$.  By $\Pol(\bA,\bB)$ we denote the set of all polymorphisms of $(\bA,\bB)$. Note that $\Pol(\bB,\bB)=\Pol(\bB)$. Let $\sigma\colon[n]\to[m]$ and $t=(a_1,\dots,a_m)\in A^m$, then $t_\sigma\coloneqq(a_{\sigma(1)},\dots,a_{\sigma(n)})\in A^n$. For a map $f\colon A^n\to B$ the \emph{minor of $f$ given by $\sigma$} is the map
$f_\sigma\colon A^m\to B, t\mapsto f(t_\sigma)$. A \emph{Minion} is a set of functions from $A^n$ to $B$ that  is closed under taking minors. Note that $\Pol(\bA,\bB)$ is a minion.   
Let $\mathcal M$ and $\mathcal N$ be minions. A map $\xi\colon \mathcal M\to\mathcal N$ is a  \emph{minion homomorphism} if $\xi$ preserves the arity of the operations and it preserves minors, i.e., for all $f\in\mathcal M$ of arity $n$ and all $\sigma\colon[n]\to[m]$ we have $\xi(f_\sigma)=(\xi(f))_\sigma$.
In this article we will be only concerned with minions of the form $\Pol(\bB)$. However, seeing that these clones are special minions makes it more plausible that we care about minion homomorphism instead of clone homomorphisms. 

Let $\tau$ be a set of function symbols each equipped with an arity. A \emph{minor condition} is a finite set of \emph{height-one identities}, i.e., expressions of the form \[f(x_1,\dots,x_n)\approx g(y_1,\dots,y_m)\]
where $f,g\in\tau$ of arities $n$ and $m$, respectively, and $x_1,\dots,x_n,y_1,\dots,y_m$ are not necessarily distinct variables. A minion $\mathcal M$ \emph{satisfies} a minor condition $\Sigma$, denoted $\mathcal M\models\Sigma$, if there is an assignment $\tau\to\mathcal M,f\mapsto f^{\mathcal M}$ that preserves the arity and for each height-one identity $f(x_1,\dots,x_n)\approx g(y_1,\dots,y_m)$ in $\Sigma$ and for each assignment $a\colon\{x_1,\dots,x_n,y_1,\dots,y_m\}\to B$ we have 
\[f^{\mathcal M}(s(x_1),\dots,s(x_n))=g^{\mathcal M}(s(y_1),\dots,s(y_m)).\]
Let $\Sigma$ and $\Sigma'$ be minor conditions. We say that $\Sigma$ \emph{implies (for finite clones)} $\Sigma'$ if every clone over a finite domain that satisfies $\Sigma$ also satisfies $\Sigma'$. Similarly we define implication for minions.
It is easy to see that minion homomorphisms compose to minion homomorphisms, hence we can quasi-order finite structures via their polymorphism clones, i.e., $\bA\leq\bB$ if there is a minion homomorphism from $\Pol(\bA)$ to $\Pol(\bB)$. This poset has been extensively studied~\cite{maximal-digraphs,vucajBodirskytwoElementPPPoset,wonderland,ZhukFVConjecture}.  
There is the following well known connection between minion homomorphisms, minor conditions, and primitive positive constructions (we will not introduce those here, for a definition see for example~\cite{starkeThesisDigraphsmoduloprimitivepositive}).

\begin{theorem}\label{thm:PPisMinorIsMinionHom}[Theorem 4.12 in~\cite{PCSP}]
	Let $\bA$ and $\bB$ be finite $\tau$-structures. Then the following are equivalent.
	\begin{enumerate}
		\item There is a minion homomorphism from $\Pol(\bB)$ to $\Pol(\bA)$.
		\item Every minor condition that holds in $\Pol(\bB)$ also holds in $\Pol(\bA)$.
		\item There is a pp-construction of $\bA$ in $\bB$.
	\end{enumerate}
\end{theorem}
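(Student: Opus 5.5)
The plan is to prove the cycle (3)$\Rightarrow$(1)$\Rightarrow$(2)$\Rightarrow$(1)$\Rightarrow$(3), with the central object being the \emph{free structure} $\bF$ of $\Pol(\bB)$ generated by $\bA$. Its domain is the set of all $|A|$-ary polymorphisms of $\bB$; enumerate $A=\{a_1,\dots,a_k\}$ so that arguments are indexed by $A$. For each $R\in\tau$ of arity $r$, a tuple $(f_1,\dots,f_r)$ is in $R^{\bF}$ if there is a polymorphism $g$ of $\bB$ of arity $|R^{\bA}|$, with arguments indexed by the tuples $t\in R^{\bA}$, such that $f_i$ is the minor of $g$ given by $t\mapsto t_i$ for every $i\in[r]$.

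\textbf{(3)$\Rightarrow$(1).} A pp-construction is a pp-definition in a power $\bB^m$, followed by homomorphic equivalence with $\bA$, so I would verify each step yields a minion homomorphism and compose.
\begin{itemize}
\item Polymorphisms of $\bB$ preserve pp-definable relations, so $\Pol(\bB)\subseteq\Pol(\bC)$ for any $\bC$ pp-definable in $\bB$.
\item For powers, $f\mapsto f^{(m)}$, which applies $f$ coordinatewise, is a minion homomorphism $\Pol(\bB)\to\Pol(\bB^m)$.
\item If $h\colon\bC\to\bA$ and $e\colon\bA\to\bC$, then $f\mapsto h\circ f\circ(e,\dots,e)$ is a minion homomorphism $\Pol(\bC)\to\Pol(\bA)$, since minors commute with post-composition and with coordinatewise pre-composition.
\end{itemize}

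\textbf{(1)$\Rightarrow$(2).} This is immediate. If $f\mapsto f^{\Pol(\bB)}$ witnesses $\Sigma$, then $f\mapsto\xi(f^{\Pol(\bB)})$ witnesses $\Sigma$ in $\Pol(\bA)$, because $\xi$ preserves arities and minors, and a height-one identity is exactly an equality between two minors.

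\textbf{(2)$\Rightarrow$(1).} This is the main step. Let $\Sigma$ be the finite minor condition with
\begin{itemize}
\item one symbol $s_f$ for every $f\in\bF$,
\item one symbol $s_{g}$ for every polymorphism $g$ of arity $|R^{\bA}|$ (one family for each $R\in\tau$),
\item the identities $s_{f_i}\approx (s_g)_{t\mapsto t_i}$ recording the relations of $\bF$.
\end{itemize}
This set is finite because $B$ is finite, and $\Pol(\bB)$ satisfies $\Sigma$ tautologically. By (2), $\Pol(\bA)$ satisfies $\Sigma$. Evaluating the witness for $s_f$ at the tuple $(a_1,\dots,a_k)$ gives a map $h\colon F\to A$, and the identities together with $s_g\in\Pol(\bA)$ show that $h$ is a homomorphism $\bF\to\bA$. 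Now define
\[
\xi(g)(x_1,\dots,x_n)\coloneqq h\bigl(g(\pi_1,\dots,\pi_n)\bigr),\qquad\text{where } x_i=a_{j_i}\text{ and } \pi_i \text{ is the } j_i\text{-th } |A|\text{-ary projection}.
\]
The composite $g(\pi_1,\dots,\pi_n)$ is a minor of $g$ and hence lies in $F$. Then:
\begin{itemize}
\item $\xi(g)\in\Pol(\bA)$, by the definition of $R^{\bF}$: for $n$ tuples in $R^{\bA}$, the relevant composites are minors of a single polymorphism of arity $|R^{\bA}|$;
\item $\xi$ preserves minors, which is a direct computation.
\end{itemize}

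\textbf{(1)$\Rightarrow$(3).} Given $\xi$, its values on the $|A|$-ary operations again give a homomorphism $\bF\to\bA$: send $f\mapsto\xi(f)(a_1,\dots,a_k)$. In the other direction, $a_j\mapsto\pi_j$ is a homomorphism $\bA\to\bF$. Finally, $\bF$ is pp-definable in the power $\bB^{B^{|A|}}$. Its domain is the set of tuples indexed by $B^{|A|}$ satisfying the pp-formula "is a polymorphism", and $R^{\bF}$ is pp-defined by existentially quantifying the table of $g$, which is also a pp-condition. So $\bA$ is homomorphically equivalent to a pp-power of $\bB$, which is a pp-construction.

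I expect the main obstacle to be the bookkeeping in the definition of $\bF$ and of $R^{\bF}$. It must be arranged so that both "$h$ is a homomorphism" and "$\xi(g)$ is a polymorphism of $\bA$" reduce cleanly to the identities of $\Sigma$; all other steps are routine.
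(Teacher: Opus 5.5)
Your proposal is correct and follows essentially the standard free-structure argument behind the cited result; the paper gives no proof of its own and only cites Theorem 4.12 of \cite{PCSP}. That argument runs: encode $\bF$ as a finite minor condition to get $\bF\to\bA$, turn that homomorphism into a minion homomorphism, and note that $\bF$ is a pp-power of $\bB$ homomorphically equivalent to $\bA$. Two phrasings need tightening. First, a pp-power has domain all of $B^m$, and its relations are those whose flattenings to $B$ are pp-definable in $\bB$; this is not the same as relations pp-definable in the product structure $\bB^m$, which is a strictly weaker notion that would not contain $\bF$. So in (3)$\Rightarrow$(1) you should argue directly that $f^{(m)}$ preserves such flattened relations. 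Second, in (1)$\Rightarrow$(3) you should take the full domain $B^{B^{|A|}}$; the non-polymorphisms are then isolated elements, which $h$ can send anywhere.
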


The following family of minor conditions is central to the article.
\begin{definition}\label{def:hm}
For $n \geq 1$, the \emph{quasi Hagemann-Mitschke condition of length $n$}, called $\HM n$, consists of the identities %a sequence of ternary operations  $p_1,\dots,p_{n}$ on $D$ that satisfy
\begin{align*}
p_0(x) & \approx p_1(x,y,y) \\
p_{i}(x,x,y) & \approx p_{i+1}(x,y,y)  \tag*{$\text{ for all } i \in \{1,\dots,n-1\}$} \\
p_n(x,x,y) & \approx p_{n+1}(y).
\end{align*}
For $n=0$ we define $\HM0$ to be the condition consisting of the identity $p_0(x)\approx p_{1}(y)$.
%The respective height one condition is abbreviated by $\HM(n)$.
\end{definition}
Note that the polymorphism clone of a structure $\bB$ satisfies $\HM0$ if and only if $\bB$ has a constant polymorphism. 
In Figure~\ref{fig:HM3} there is a visualization of $\HM3$ that makes it clear why $p_0(x)\approx p_{1}(y)$ is a natural choice for $\HM0$.

\begin{figure}
    \centering
    \HMThree
    \caption{A visualization of $\HM3$ (top) and a 3-braid (bottom), which is another way to visualize $\HM3$.}
    \label{fig:HM3}
\end{figure}

\subsection{Indicator Structures}
In this section we present a well known tool to test whether the polymorphism clone of a structure satisfies a minor condition. 
\begin{definition}\label{def:indicatorStructure}
Let $\bB$ be a relational $\tau$-structure and let $\Sigma$ be a minor condition with a single function symbol $f$ of arity $m$. 
Let $\sim$ be the smallest equivalence relation on $B^m$ such that 
 $a \sim b$ if $\Sigma$ contains $f(x_1,\dots,x_m)\approx f(y_1,\dots,y_m)$
and there is a map $s \colon \{x_1,\dots,x_m,y_1,\dots,y_m\} \to B$ with 
$a = (s(x_1),\dots,s(x_m))$ and $b = (s(y_1),\dots,s(y_m))$. Note that if $a\sim b$, then $\Sigma$ implies $f(a)\approx f(b)$, here the entries of $a$ and $b$ are viewed as variables.
Then the \emph{indicator structure of $\Sigma$ with respect to $\bB$} is the $\tau$-structure $\bB^m/_{\sim}$, where the tuple $([a_1]_\sim,\dots,[a_n]_\sim)$ is relation $R^{\bB^m/_{\sim}}$ if there are $a'_1\in[a_1]_\sim,\dots,a'_n\in[a_n]_\sim$ such that $(a'_1,\dots,a'_n)\in R^{\bB^m}$. 
%The domain of $\bB^m/_{\sim}$ are the equivalence classes of $\sim$. 
\end{definition}

The following is straightforward from the definitions. 

\begin{lemma}\label{lem:ind} 
Let $\bB$ be a structure and $\Sigma$ be a minor condition with a single function symbol. Then 
$\bB$ has a polymorphism satisfying $\Sigma$ if and only if the indicator structure of $\Sigma$ with respect to $\bB$ has a homomorphism to $\bB$.
\end{lemma}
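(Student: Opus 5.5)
We prove both directions by building the two maps explicitly; each inverts the other via the quotient map. Let $\pi\colon B^m\to B^m/_{\sim}$, $a\mapsto[a]_\sim$, be the quotient map. Write $\bI$ for the indicator structure. By construction of the relations of $\bI$, $\pi$ is a homomorphism from $\bB^m$ to $\bI$: if $(a_1,\dots,a_n)\in R^{\bB^m}$, then taking $a'_i=a_i$ witnesses $([a_1]_\sim,\dots,[a_n]_\sim)\in R^{\bI}$.

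For the forward direction, suppose $f\colon\bB^m\to\bB$ is a polymorphism satisfying $\Sigma$. We first check that $f$ is constant on $\sim$-classes. Since $\sim$ is the smallest equivalence relation containing the generating pairs, it suffices to check generating pairs. So let $a=(s(x_1),\dots,s(x_m))$ and $b=(s(y_1),\dots,s(y_m))$ for an identity $f(x_1,\dots,x_m)\approx f(y_1,\dots,y_m)$ in $\Sigma$ and some assignment $s$. Because $f$ satisfies $\Sigma$, evaluating this identity at $s$ gives $f(a)=f(b)$. Hence $h([a]_\sim)\coloneqq f(a)$ is well defined.

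It remains to show that $h$ is a homomorphism $\bI\to\bB$. Let $([a_1]_\sim,\dots,[a_n]_\sim)\in R^{\bI}$. By definition of $R^{\bI}$ there are representatives $a'_i\in[a_i]_\sim$ with $(a'_1,\dots,a'_n)\in R^{\bB^m}$. Since $f$ is a polymorphism, $(f(a'_1),\dots,f(a'_n))\in R^{\bB}$. This tuple equals $(h([a_1]_\sim),\dots,h([a_n]_\sim))$, because $h$ does not depend on the choice of representative.

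For the converse, suppose $h\colon\bI\to\bB$ is a homomorphism, and set $f\coloneqq h\circ\pi$. As a composition of homomorphisms $\bB^m\to\bI\to\bB$, $f$ is a polymorphism of $\bB$ of arity $m$. To see that $f$ satisfies $\Sigma$, take any identity of $\Sigma$ and any assignment $s$. The two sides evaluate to tuples $a,b$ with $a\sim b$ by definition of $\sim$. Hence $\pi(a)=\pi(b)$, and so $f(a)=f(b)$.

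No step is a real obstacle; the only point needing care is well-definedness in the forward direction, which is handled by reducing to the generating pairs of $\sim$. The identities in $\Sigma$ all involve the single symbol $f$, so each has the form $f(\dots)\approx f(\dots)$ as assumed in the definition of $\sim$, and every identity of $\Sigma$ is therefore accounted for.
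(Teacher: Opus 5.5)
Your proof is correct. It is the routine unwinding of the definitions that the paper has in mind when it calls the lemma straightforward (the paper gives no further proof): the kernel of a polymorphism satisfying $\Sigma$ contains the generating pairs of $\sim$, so the polymorphism factors through the quotient map, and conversely composing a homomorphism $\bI\to\bB$ with the quotient homomorphism $\bB^m\to\bI$ gives a polymorphism satisfying $\Sigma$.
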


It is well known that for clones over a finite domain the restriction to minor conditions with a single function symbol is inessential, for a proof see Lemma 2.21 in \cite{bodirskyStarke2024symmetriclineararcmonadicUnfoldedCaterpillar}.
\begin{lemma}\label{lem:WlogSingleFunctionSymbol}
	 Let $\Sigma$ be a minor condition. Then there exists a minor condition $\Sigma'$ with a single function symbol such that 
	a clone over a finite domain satisfies $\Sigma$ if and only if it satisfies $\Sigma'$. 
\end{lemma}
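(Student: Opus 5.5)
The plan is to merge all function symbols of $\Sigma$ into one symbol of large arity, and to store "which symbol" in the pattern of variables fed to that symbol.

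Let $\Sigma$ use function symbols $f_1,\dots,f_k$ of arities $m_1,\dots,m_k$, and let $m \coloneqq m_1+\dots+m_k$. Introduce one symbol $g$ of arity $m$ whose coordinates are split into consecutive blocks $I_1,\dots,I_k$, with $|I_j|=m_j$. The intended translation is
\[
f_j(x_1,\dots,x_{m_j}) \;\mapsto\; g(\bar z^{(j)}),
\]
where $\bar z^{(j)}$ places $x_1,\dots,x_{m_j}$ on block $I_j$. The remaining blocks need some filler. Plain constants would break height-one form, since the filler must be variables. A cleaner route is to place the same variable tuple $x_1,\dots,x_{m_j}$ in every block, each padded or truncated by repeating $x_1$. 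That gives no way to tell the symbols apart, however.

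So we instead use the standard trick of adding fresh universally quantified variables $u_1,\dots,u_k$. The symbol $f_j(\bar x)$ is translated to
\[
g(\bar x \text{ on } I_j,\; u_j \text{ on all other blocks}).
\]
Each identity $f_i(\bar x)\approx f_j(\bar y)$ of $\Sigma$ then becomes $g(\dots)\approx g(\dots)$, now also involving $u_i$ and $u_j$. This defines $\Sigma'$.

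\textbf{$\Sigma'\Rightarrow\Sigma$.} Suppose a clone $\mathcal C$ over a finite domain $B$ satisfies $\Sigma'$ via $g$. Set
\[
f_j(\bar x)\coloneqq g(\bar x \text{ on } I_j,\; x_1 \text{ elsewhere}).
\]
This is a minor of $g$, obtained by identifying $u_j$ with $x_1$, so it lies in $\mathcal C$. If the identity pairs $f_i$ with $f_j$ for $i\ne j$, we substitute $u_i=x_1$ and $u_j=y_1$, and the translated identity yields exactly the required identity. If $i=j$, both sides use the same $u_i$. We may substitute $u_i=x_1$ on one side, but the other side then needs $u_i=y_1$, which may differ from $x_1$. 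This is the obstacle.

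\textbf{Fixing the obstacle with finiteness.} To repair the case $i=j$ we use that the domain is finite. Replace $g$ by an operation that ignores the filler blocks. I expect this to be the main step.

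Concretely, index the filler differently. Let the arity be $m+|B|$, where the extra coordinates hold an enumeration of all elements of the domain. The filler positions are then forced to take fixed values, chosen consistently on both sides of each identity.

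In height-one form this is achieved by adding variables $w_b$ for $b\in B$. The translation of $f_j(\bar x)$ becomes $g(\bar x \text{ on } I_j,\; w_{b_0} \text{ on the other blocks},\; w_b \text{ on extra coordinate } b)$, where $b_0$ is a fixed element of $B$.

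This makes $\Sigma'$ depend on $|B|$. That is acceptable only if the statement allows it, and it does not: $\Sigma'$ must be uniform. So instead I would follow the approach of Lemma 2.21 in \cite{bodirskyStarke2024symmetriclineararcmonadicUnfoldedCaterpillar}: put $\bar x$ on block $I_j$ and put $x_1$ on all other blocks. The case $i=j$ now causes no trouble, because substituting directly gives $f_j(\bar x)\approx f_j(\bar y)$.

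\textbf{Separating the symbols.} The remaining issue is the converse direction, $\Sigma\Rightarrow\Sigma'$. Given $f_1,\dots,f_k$ in the clone, we must build $g$, and different symbols with identical variable patterns must be distinguishable to $g$.

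Here we use the clone structure, which is closed under composition. Define
\[
g(\bar z)\coloneqq f_1\bigl(\text{apply } f_2,\dots \bigr),
\]
a nested composition in which the operation on block $I_j$ is projected out when the other blocks are constant. This relies on idempotency, which need not hold.

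I would therefore first reduce to idempotent clones, or to cores. If that fails, I would fall back to exactly the reference cited in the paper.

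Verifying this composition step is the main obstacle.
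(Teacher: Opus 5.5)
There is a genuine gap. The direction ``$\mathcal C\models\Sigma$ implies $\mathcal C\models\Sigma'$'' is never established, and it is the entire content of the lemma.

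Your final translation ($\bar x$ on block $I_j$, $x_1$ on every other block) makes $\Sigma'\Rightarrow\Sigma$ trivial. For the converse, though, you only gesture at a nested composition that ``projects out'' the inactive blocks, assuming an idempotency you do not have, and then defer to the reference. With the side-by-side layout of arity $m_1+\dots+m_k$, nested compositions do not obviously help even for idempotent operations. For $k=2$, the natural choice $g(\bar y^1,\bar y^2)\coloneqq f_1(f_2(y^1_1,y^2_2,\dots,y^2_{m_2}),y^1_2,\dots,y^1_{m_1})$ restricts to $f_1(\bar x)$ on the $f_1$-pattern. On the $f_2$-pattern, however, it restricts to $f_1(f_2(\bar x),x_1,\dots,x_1)$, which is not $f_2(\bar x)$. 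Note also that your argument never uses finiteness of the domain; that hypothesis is exactly what is needed to obtain idempotency.

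Two ideas are missing.

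\textbf{Reduction to idempotent operations.} Take a unary $e\in\mathcal C$ with $|e(B)|$ minimal, replace it by a suitable power so that $e\circ e=e$, and set $C\coloneqq e(B)$. Then $\mathcal D\coloneqq\{e\circ f|_C\mid f\in\mathcal C\}$ (restrictions to tuples over $C$) is a clone on $C$. The maps $f\mapsto e\circ f|_C$ and $e\circ f|_C\mapsto e\circ f\circ(e,\dots,e)$ are minion homomorphisms $\mathcal C\to\mathcal D\to\mathcal C$, so $\mathcal C$ and $\mathcal D$ satisfy the same minor conditions. By minimality, every unary member of $\mathcal D$ is a permutation of $C$, and its inverse (a power of it) lies in $\mathcal D$. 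Two symbols occurring in a common identity have equal diagonals (set all variables equal). Hence composing each $f_j$ with the inverse of its diagonal gives idempotent operations that still satisfy $\Sigma$.

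\textbf{A product layout instead of blocks.} Give $g$ arity $m_1\cdots m_k$, with coordinates indexed by $[m_1]\times\dots\times[m_k]$. Translate $f_j(x_1,\dots,x_{m_j})$ into $g$ applied to the tuple with entry $x_{l_j}$ at coordinate $(l_1,\dots,l_k)$. Then $\Sigma'\Rightarrow\Sigma$ still follows by taking minors. For idempotent $f_1,\dots,f_k$, take the star composition, which for $k=2$ is $f_1(f_2(x_{1,1},\dots,x_{1,m_2}),\dots,f_2(x_{m_1,1},\dots,x_{m_1,m_2}))$ and is iterated in general. By idempotency of the other factors, its $j$-th pattern minor is exactly $f_j$. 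Pushing this operation back along $\mathcal D\to\mathcal C$ gives $\mathcal C\models\Sigma'$.
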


\subsection{Caterpillars }
The notion of caterpillars was generalized from graphs to relational structures by Carvalho, Dalmau, and Krokhin in \cite{lattice-ops}. 
We will use an equivalent definition here that makes it easier to transform caterpillars.

\begin{definition}
    An injective $\tau$-structure $\bC$ is a \emph{caterpillar} if there is a tuple $(a_0,\dots,a_n)$ of distinct elements of $\bC$ and a pp-formula $(\phi\wedge\psi)(a_0,\dots,a_n)$ such that
    \begin{itemize}
        \item $\bC$ is isomorphic to the canonical database of $\phi\wedge\psi$,
        \item the formula $\phi$ is of the form 
        \[\phi_0(a_0,a_1)\wedge\dots\wedge\phi_{n-1}(a_{n-1},a_n),\]
        where each $\phi_i(a_i,a_{i+1})$ is obtained from an atomic formula, where $a_i$ and $a_{i+1}$ occur, by existentially quantifying all variables except for $a_i$ and $a_{i+1}$,
        \item the formula $\psi$ is of the form 
        \[\psi_0(a_0)\wedge\dots\wedge\psi_n(a_n),\]
        where each $\phi_i(a_i)$ is a (possibly empty) conjunct of formulas obtained from an atomic formula, where $a_i$ occurs, by existentially quantifying all variables except for $a_i$, and
        \item if $\bC$ is not an isolated point, then $\psi_0$ and $\psi_n$ are not the empty conjunct.
    \end{itemize}

    The elements $a_0,\dots,a_n$ are called \emph{joints} of $\bC$, the tuples $(a_k,a_{k+1})$ are called \emph{vertebra}, and the tuple $(a_0,\dots,a_n)$ is called the \emph{spine} of $\bC$. A vertebra $(a_k,a_{k+1})$ \emph{lies between} $a_i$ and $a_{j}$ if $i\leq k$ and $k+1\leq j$. The conjuncts of $\psi_i(a_i)$ are called \emph{hairs} of $a_i$. If $\psi_i$ is the empty conjunct we say that $a_i$ has \emph{no hairs}. The formula $\phi\wedge\psi$ is the \emph{conjunctive query representation} of $\bC$.
\end{definition}

\begin{example}
	The structure $\bC$ with domain $\{a_0,a_1,a_2,b_0,b_1,b_2,c\}$ and relations $R^{\bC}=\{(a_1,a_0,c)\},E^{\bC}=\{(a_1,a_2),(a_0,b_0),(b_2,a_2)\},S^{\bC}=\{(a_0,b_1)\}$ is a caterpillar with spine $(a_0,a_1,a_2)$. Its conjunctive query representation consists of the formulas
	\begin{align*}
	\phi_0(a_0,a_1)&\coloneqq \exists x\,R(a_1,a_0,x), &
	\phi_1(a_1,a_2)&\coloneqq E(a_1,a_2),\\
	\psi_0(a_0)&\coloneqq \exists x\,E(a_0,x)\wedge \exists x\,S(a_0,x),& 
	%\item $\psi_1(a_1)$ is the empty conjunct, and
	\psi_2(a_2)&\coloneqq \exists x\,E(x,a_2),\text {and}
	\end{align*}
	$\psi_1(a_1)$ is the empty conjunct.
	The caterpillar $\bC$ can be seen as the following labeled directed path 
\begin{center}
	\begin{tikzpicture}
		\node[bullet,label=below:$\psi_0$] (0) at (0,0) {};
		\node[bullet,label=below:$\psi_1$] (1) at (2,0) {};
		\node[bullet,label=below:$\psi_2$] (2) at (4,0) {};
		
		\path[->,>=stealth']
		(0) edge node[below] {$\phi_0$} (1)
		(1) edge node[below] {$\phi_1$} (2)
		;
	\end{tikzpicture}
\end{center}	
Note that the orientation of the spine is not unique, we could also take $(a_2,a_1,a_0)$ as the spine of $\bC$. These are the only two possibilities to choose a spine, hence using the definite article is still justified.
	%connected by the vertebrae $((a_2,a_1,c),R,2,1)$ and $((a_2,a_3),E,1,2)$. It has hairs $((a_1,b_1),E,1)$, $((a_1,b_1),S,1)$, and $((a_3,b_2),E,1)$.
	%non examples directed cycle, disconnected
\end{example}

%example + discussion uniqness of formula/spine (if caterpillar is large enough and we assume head and tail have hairs)

%which spine we take will be clear from the context (reversed one or not reversed)

\begin{definition}\label{def:unfolding}
    Let $\bC$ be a caterpillar with spine $(a_0,\dots,a_n)$ and conjunctive query representation $(\phi\wedge\psi)(a_0,\dots,a_n)$. Let  $i,j\in[0,n]$ with $i<j$ and define
        \begin{itemize}
            \item $\phi_{\text{head}}(a_0,\dots,a_{i})\coloneqq \phi_0(a_0,a_1)\wedge\dots\wedge\phi_{i-1}(a_{i-1},a_i)$,
            \item $\phi_{\text{body}}(a_i,\dots,a_{j})\coloneqq \phi_i(a_i,a_{i+1})\wedge\dots\wedge\phi_{j-1}(a_{j-1},a_j)$,
            \item $\phi_{\text{tail}}(a_j,\dots,a_{n})\coloneqq \phi_j(a_j,a_{j+1})\wedge\dots\wedge\phi_{n-1}(a_{n-1},a_n)$,
            \item $\psi_{\text{head}}(a_0,\dots,a_{i})\coloneqq \psi_0(a_0)\wedge\dots\wedge\psi_{i}(a_i)$,
            \item $\psi_{\text{body}}(a_{i+1},\dots,a_{j-1})\coloneqq \psi_{i+1}(a_{i+1})\wedge\dots\wedge\psi_{j-1}(a_{j-1})$, and
            \item $\psi_{\text{tail}}(a_j,\dots,a_{n})\coloneqq \psi_j(a_j)\wedge\dots\wedge\psi_{n}(a_n)$.
        \end{itemize}

The \emph{$(a_i,a_j)$-unfolding} of $\bC$ is the canonical database $\bC'$ of the pp-formula $\phi'\wedge\psi'$ with
      \begin{align*}
       \phi'\coloneqq \phi_{\text{head}}(a_0,\dots,a_{i})\wedge{}
      &\phi_{\text{body}}(a_i,a'_{i+1}\dots,a'_{j-1},a'_{j})\wedge{}\\
      &\phi_{\text{body}}(a''_i,a_{i+1}\dots,a_{j-1},a'_{j})\wedge{}\\
      &\phi_{\text{body}}(a''_i,a''_{i+1}\dots,a''_{j-1},a_{j})\wedge
      \phi_{\text{tail}}(a_j,\dots,a_{n})
\intertext{and}
       \psi'\coloneqq \psi_{\text{head}}(a_1,\dots,a_{i})\wedge{}
      &\psi_{\text{body}}(a'_{i+1}\dots,a'_{j-1})\wedge{}\\
      &\psi_{\text{body}}(a_{i+1}\dots,a_{j-1})\wedge{}\\
      &\psi_{\text{body}}(a''_{i+1}\dots,a''_{j-1})\wedge
      \psi_{\text{tail}}(a_j,\dots,a_{n})
      \end{align*}
      It is easy to see that $\bC'$ is a caterpillar with spine 
      \[(a_1,\dots,a_{i},a'_{i+1},\dots,a'_{j-1},a'_{j},a_{j-1},\dots,a_{i+1},a''_i,a''_{i+1},\dots,a''_{j-1},a_{j},\dots,a_n).\]
      
Observe that $a'_j$ and $a_i''$ have no hairs and that for $k\in\{0,\dots,i-1,j,\dots,n-1\}$ the caterpillar $\bC'$ also has the vertebra $(a_k,a_{k+1})$,
the $(a_i,a_j)$-unfolding of $\bC$ is said to \emph{fix} the vertebra $(a_k,a_{k+1})$.
We will assume that for an element $c$ that is in the head or the tail of $\bC$,  the corresponding element in $\bC'$ is also called $c$ (the elements $a_i$, $a_j$, and the elements in their hairs also belong to the body and tail, respectively). For any element $c$ in the body of $\bC$ there are three corresponding elements in $\bC'$, our convention will be that the first one is called $c'$, the second $c$, and the third $c''$. Note that this convention extends the naming convention we used for the joints. This way we obtain a surjective homomorphism $h\colon \bC'\to\bC$ that is the identity on the tail and the head and sends the elements $c,c'$ and $c''$ from the body to $c$.

An \emph{unfolding} $\bC'$ of $\bC$ that fixes $v_{1},\dots, v_{m}$ is a structure that is obtained from $\bC$ by repeatedly applying $(a,b)$-unfoldings that fix $v_{1},\dots, v_{m}$. This is well defined, since whenever an $(a,b)$-unfolding fixes some vertebra $v_i$, then the unfolded caterpillar still has the vertebra $v_i$. Note that, by composing the previously constructed homomorphisms, we obtain a surjective homomorphism $h\colon\bC'\to\bC$ that is the identity on all joints in $v_{1},\dots, v_{m}$.
\end{definition}

\subsection{Datalog Fragments}
\label{sect:datalog}
For a detailed introduction, see, e.g.,~\cite{BodDalJournal}. 
Let $\tau$ and $\rho$ be finite relational signatures such that $\tau \subseteq \rho$. 
A \emph{Datalog program} $\Pi$ is a finite set of rules of the form
$$ \phi_0 \; {:}{-} \; \phi_1\wedge\dots\wedge\phi_n$$
where each $\phi_i$ is an atomic $\tau$-formula. The formula $\phi_0$ is called the \emph{head} of the rule, and $\phi_1\wedge\dots\wedge\phi_n$ is called the \emph{body} of the rule. The symbols in $\tau$ are called \emph{EDBs} (\emph{extensional database predicates}) and the other symbols from $\rho\setminus\tau$ are called \emph{IDBs} (\emph{intensional database predicates}). In the rule heads, only IDBs are allowed. There is one special IDB $G$ of arity 0, which is called the \emph{goal predicate}. 
IDBs might also appear in the rule bodies. A \emph{derivation} of $\Pi$ on a finite $\tau$-structure $\bA$ is a sequence $\bA_0\vdash_{R_0} \dots\vdash_{R_{n-1}} \bA_n$ of $\rho$-structures $\bA_0,\dots,\bA_n$ and rules $R_0,\dots,R_{n-1}$ of $\Pi$ such that 
\begin{itemize}
	\item the $\tau$-reduct of $\bA_0$ is $\bA$ and $P^{\bA_0}=\emptyset$ for all $P\in\rho\setminus\tau$ and
	\item for all $i\in[0,n-1]$ the structure $\bA_{i+1}$ is obtained from $\bA_i$ by adding a tuple $(s(x_1),\dots,s(x_k))$ to $P^{\bA_i}$, where $R_i$ is $P(x_1,\dots,x_n)\; {:}{-} \;\phi$ and $s$ is a satisfying assignment of the variables of $P(x_1,\dots,x_n)\wedge\phi$ to elements of $\bA_i$.
\end{itemize}
We say that this derivation \emph{derives the IDB $P$ on the tuple $t$} if $\bA_n$ is obtained from $\bA_{n-1}$ by adding $t$ to $P^{\bA_{n+1}}$.  
We say that this derivation \emph{derives the goal predicate (on $\bA$)} if it derives $G$ on the empty tuple $()$, i.e., if $G^{A_n}\not=\emptyset$.
%We view the set of rules as a recursive specification of the IDBs in terms of the EDBs -- for a detailed introduction, see, e.g.,~\cite{BodDalJournal}. 
A Datalog program is called 
\begin{itemize}
    \item \emph{linear} if in each rule, at most one IDB appears in the body (we then assume without loss of generality that in every rule whose body contains an IDB, the IDB is listed first). 
\item \emph{arc} if each rule involves at most one EDB. 
\item \emph{symmetric} if it is linear and for every rule $\phi_0 \; {:}{-} \; \phi_1\wedge\phi_2\wedge\dots\wedge\phi_n$ where $\phi_0$ is not the goal predicate and $\phi_1$ is an atomic formula whose relation symbol is an IDB, 
the Datalog program also contains the \emph{reversed rule}  $\phi_1 \; {:}{-} \; \phi_0\wedge\phi_2\wedge\dots\wedge\phi_n$.\footnote{Note that excluding the case that $\phi_0$ is the goal predicate is not essential, because 
we may always add its symmetric version without changing the set of structures on which the goal predicate is derived. The advantage of this version is that symmetric Datalog programs do not need to have rules with the goal predicate in their body.} 
\item \emph{$n$-almost symmetric} if it is linear and there is a quasi-order $\leq$ on the IDBs such that 
\begin{itemize}
    \item for every strict chain $P_1<\dots<P_m$ of IDBs we have $m\leq n+1$ and
    \item for every rule $\phi_0 \; {:}{-} \; \phi_1\wedge\phi_2\wedge\dots\wedge\phi_n$ where $\phi_0$ and $\phi_1$ are atomic formulas whose relation symbols are the IDBs $P_0$ and $P_1$, respectively, we have $P_0\leq P_1$, and if $P_1\leq P_0$, then the Datalog program also contains the reversed rule.
\end{itemize}
\end{itemize}
Note that any derivation of an $n$-almost symmetric Datalog program can use non-symmetric rules at most $n$ times. In particular, a 0-almost symmetric Datalog program is symmetric.

If $\bB$ is a $\tau$-structure, then 
we say that $\Csp(\bB)$ is \emph{solved} by a Datalog program $\Pi$ with EDBs $\tau$ if the following holds: the goal predicate can be derived by $\Pi$ on a finite $\tau$-structure $\bA$ if and only if there is \emph{no} homomorphism from $\bA$ to $\bB$. 
We say that a Datalog program has \emph{width $(\ell,k)$}
if all IDBs have arity at most $\ell$, and if every rule has at most $k$ variables. 
For given $(\ell,k)$ and a structure $\bB$, there exists a Datalog program $\Pi$ of width $(\ell,k)$ with the remarkable 
property that if some Datalog program of width $(\ell,k)$ solves $\Csp(\bB)$, then $\Pi$ solves $\Csp(\bB)$. 
This Datalog program is referred to as the \emph{canonical Datalog program for $\bB$ of width $(\ell,k)$}, and is constructed as follows~\cite{FederVardi}:
For every relation $R$ over $B$ of arity at most $\ell$, we introduce a new IDB. The empty relation of arity 0 plays the role of the goal predicate. 
Then $\Pi$ contains all rules 
$\phi_0 \; {:}{-} \; \phi_1\wedge\dots\wedge\phi_n$
with at most $k$ variables 
such that the formula 
$\forall \bar x (\phi_1 \wedge \dots \wedge \phi_n \Rightarrow \phi_0)$ holds in the expansion of $\bB$ by all IDBs. 
Note that if the canonical Datalog program $\Pi$ for $\bB$ does not solve $\Csp(\bB)$, then it is still true that if $\Pi$ derives the goal predicate on a finite structure $\bA$, then 
there is no homomorphism from $\bA$ to $\bB$,  (see, e.g.,~\cite{BodDalJournal}).

If $k$ is the maximal arity of the EDBs, 
we may restrict the canonical Datalog program of width $(1,k)$ to those rules with only unary IDBs and at most one EDB; in this case, we obtain the canonical arc monadic Datalog program, %\todo{cool: for this fragment, no parameters $(\ell,k)$ needed!} 
 which is also known as the \emph{arc consistency procedure}. 
Analogously, we may define the canonical \emph{linear}, %\todo{Here, we would need parameters, but let's not mention it again...} 
and the canonical \emph{symmetric} Datalog program. We may also combine these restrictions, and in particular obtain 
a definition the \emph{canonical slam Datalog program}, i.e., the canonical symmetric linear arc monadic Datalog program.

\begin{definition}
Let $\bB$ be a finite structure with a finite relational signature, the \emph{canonical s$_n$lam Datalog program for $\bB$} is obtained by taking $n+1$ many copies $\Pi_0,\dots,\Pi_n$ of the canonical slam Datalog program and adding for each rule $\phi_0 \; {:}{-} \; \phi_1\wedge\phi_2\wedge\dots\wedge\phi_n$ of the canonical lam Datalog program, where $\phi_0$ and $\phi_1$ are atomic formulas whose relation symbols are the IDBs $P_0$ and $P_1$, respectively, the rules $\phi^i_0 \; {:}{-} \; \phi^{i+1}_1\wedge\phi_2\wedge\dots\wedge\phi_n$ for $i\in[0,n-1]$, where $\phi^i_0$ is obtained from $\phi_0$ by replacing $P_0$ by its copy from $\Pi_i$ and $\phi^{i+1}_1$ is obtained from $\phi_1$ by replacing $P_1$ by its copy from $\Pi_{i+1}$. Clearly, the resulting program is $n$-almost symmetric, linear, arc, and monadic.
\end{definition}
Note that the canonical s$_0$lam Datalog program is the canonical slam Datalog program.
The following lemma can be shown analogously to the well-known fact for unrestricted canonical Datalog programs of width $(\ell,k)$ (see, e.g.,~\cite{BodDalJournal}). %\todo[inline]{Do we need this? If not, remove...}

\begin{lemma}\label{lem:can-sound}
Let $\bB$ be a finite structure with a finite relational signature, $n\in\N$,
and let $\Pi$ be the canonical s$_n$lam Datalog program for $\bB$. 
If $\bA$ is a finite structure with a homomorphism to $\bB$, then 
$\Pi$ does not derive the goal predicate on $\bA$. 
\end{lemma}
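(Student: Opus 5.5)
We prove the contrapositive in its direct form. Assume $h\colon\bA\to\bB$ is a homomorphism, and consider any derivation $\bA_0\vdash_{R_0}\dots\vdash_{R_{m-1}}\bA_m$ of $\Pi$ on $\bA$. We show by induction on $k$ the invariant
\[
\text{for every IDB } P \text{ of } \Pi \text{ and every } a\in P^{\bA_k}, \text{ we have } h(a)\in P^{\bB},
\]
where $P^{\bB}$ is the unary relation on $B$ associated with $P$. Each IDB of $\Pi$ is a copy, in some $\Pi_i$, of an IDB of the canonical lam program, and that IDB stands for a subset of $B$ (of arity at most one). The goal predicate stands for the empty $0$-ary relation. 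Once the invariant holds, the goal predicate can never be derived: a derivation of $G$ would give $()\in G^{\bB}=\emptyset$, which is impossible.

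The base case $k=0$ is immediate, since all IDBs are empty in $\bA_0$. For the inductive step, let $R_k$ be a rule of $\Pi$ and $s$ a satisfying assignment in $\bA_k$. There are three kinds of rules in $\Pi$.
\begin{enumerate}
\item Rules of a copy $\Pi_i$ of the canonical slam program.
\item Rules $\phi_0^i \;{:}{-}\; \phi_1^{i+1}\wedge\phi_2\wedge\dots$ obtained from rules of the canonical lam program.
\item Reversed rules, which are already included in the slam copies.
\end{enumerate}
In every case the rule, after stripping the copy indices from its IDBs, is a rule $\phi_0\;{:}{-}\;\phi_1\wedge\dots\wedge\phi_r$ of the canonical lam program, or of the canonical slam program, whose rules are lam rules as well. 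By the definition of canonical programs, the sentence $\forall\bar x(\phi_1\wedge\dots\wedge\phi_r\Rightarrow\phi_0)$ holds in the expansion of $\bB$ by all IDBs.

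Now consider the assignment $h\circ s$. The EDB atoms of the body are satisfied in $\bB$ because $h$ is a homomorphism. The IDB atom of the body, if there is one, is satisfied because of the induction hypothesis, which applies to all copies of that IDB. Hence the head holds in $\bB$ under $h\circ s$. This means the newly added tuple $t=s(\bar x)$ satisfies $h(t)\in P_0^{\bB}$, so the invariant is preserved.

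The only point requiring care is the bookkeeping of copies. We interpret the copy of $P$ in each $\Pi_i$ by the same relation $P^{\bB}$, so the index shift $i\to i+1$ in the cross rules is invisible to the semantics. We also need the goal predicate, if it is copied, to be interpreted as empty in every copy. Beyond this the argument is the standard soundness proof of canonical Datalog, so we do not expect any step to be a real obstacle.
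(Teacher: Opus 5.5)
Your proof is correct and follows the same route as the paper. The paper gives no separate argument and only says the lemma follows analogously to the standard soundness fact for canonical Datalog programs; your induction is that standard argument: interpret every copy of an IDB by its associated relation on $B$, observe that each rule of the canonical s$_n$lam program becomes a valid rule of the canonical lam program once copy indices are erased, and push derivations forward along $h$. Your bookkeeping of the copies, including interpreting every copy of the goal predicate as the empty $0$-ary relation, is handled correctly.
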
 
%add notation for derivation of linear DL program\todo[inline]{MB: why  only for linear? FS: then we can define a derivation to be a sequence (thats convinient for the proof of Lemma 3.7). MB: I think it should be equally easy to give a general definition that specialises to the one you want when the program is linear.}

\subsection{Dualities}

Let $\bB$ be a $\tau$-structure and $\mathcal F$ be a set of $\tau$-structures. The tuple $(\mathcal F,\bB)$ is a \emph{duality pair} if for all structures $\bA$ with $\bA\not\to\bB$ there is a $\bF\in\mathcal F$ with $\bF\to\bA$. The set $\mathcal F$ is called an \emph{obstruction set of $\bB$} and its elements \emph{obstructions}. We can now describe structures by properties their obstruction sets. The most commonly known are \emph{finite duality}, i.e. there is a finite obstruction set, and \emph{tree duality}, i.e., there is an obstruction set consisting of trees. In 2011 Carvalho, Dalmau, and Krokhin introduced \emph{caterpillar duality}, i.e., there is an obstruction set consisting of caterpillars, and characterized when a structure does have caterpillar duality. This characterization was later extended by Bodirsky and Starke by the items (5)-(7) involving the structure $\stCon$ introduced in Section~\ref{sec:prelimsStructures}.
\begin{theorem}\label{thm:mainOld}[Theorem 16 in \cite{lattice-ops} and Remark 5.1 in~\cite{bodirskyStarke2024symmetriclineararcmonadicUnfoldedCaterpillar}]
	Let $\bB$ be a finite relational $\tau$-structure. 
	Then the following are equivalent. 
	\begin{enumerate}
		\item  $\bB$ has caterpillar duality. 
		\item $\Csp(\bB)$ can be solved by a linear arc monadic Datalog program. 
		%\item $C(\bB) \to \bB$.
		\item $\Pol(\bB)$ contains for every $k,n \geq 1$ an $k$-absorbing operation of arity  $kn$. 
		\item 
		$\bB$ is homomorphically equivalent to a structure $\bB'$ with binary polymorphisms $\sqcup$ and $\sqcap$ such that $(B',\sqcup,\sqcap)$ is a (distributive) lattice. 
		\item  Every minor condition that holds in 
		$\Pol(\stCon)$ also holds in $\Pol(\bB)$. 
		\item There is a minion homomorphism from $\Pol(\stCon)$ to $\Pol(\bB)$.
		\item There is a primitive positive construction of $\bB$ in $\stCon$. 
	\end{enumerate}
\end{theorem}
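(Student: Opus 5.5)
This theorem is quoted from the literature: it combines Theorem~16 of \cite{lattice-ops} with Remark~5.1 of \cite{bodirskyStarke2024symmetriclineararcmonadicUnfoldedCaterpillar}. The plan is therefore to assemble it from those sources rather than reprove each step. We organise it into two blocks of equivalences, joined by one bridge.

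\textbf{Block (5)$\Leftrightarrow$(6)$\Leftrightarrow$(7).} This is immediate from Theorem~\ref{thm:PPisMinorIsMinionHom}, applied with $\stCon$ in place of the structure $\bB$ of that theorem and $\bB$ in place of $\bA$.

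\textbf{Block (1)--(4).} This is Carvalho--Dalmau--Krokhin, and their argument runs in a cycle.

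\emph{(1)$\Rightarrow$(2).} Given a caterpillar obstruction set, the canonical lam Datalog program derives the goal on every caterpillar $\bC$ with $\bC\not\to\bB$. To see this, walk along the spine. For each joint, use the program to derive the unary IDB of the set of values it can take, using the hairs and the previous vertebra. This derivation is linear and uses one EDB per rule. Lemma~\ref{lem:can-sound} gives soundness.

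\emph{(2)$\Rightarrow$(3).} A $k$-absorbing operation of arity $kn$ is one whose value depends only on which of $n$ blocks of size $k$ are contained in the argument set. One checks via Lemma~\ref{lem:ind} that the corresponding indicator structure is not rejected by any linear arc program. This works because a linear derivation can be simulated on the indicator structure; the derivation follows a single path and absorbs the unary sets it meets.

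\emph{(3)$\Rightarrow$(4).} Take a core and obtain $\sqcup,\sqcap$ from absorbing operations of large arity. A standard compactness step over the finite domain produces binary lattice polymorphisms on a homomorphically equivalent structure $\bB'$.

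\emph{(4)$\Rightarrow$(1).} Use distributive-lattice polymorphisms to show that arc consistency can be reduced to checking paths. Any obstruction tree then contains a caterpillar that is already an obstruction.

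\textbf{Bridge.} The bridge is (4)$\Leftrightarrow$(5), and it is the only genuinely new link. For (5)$\Rightarrow$(3), observe that $\Pol(\stCon)$ contains $k$-absorbing operations of every arity: on $\{s,t\}$, the map that outputs $s$ iff some block is entirely $s$ is a polymorphism, as one checks directly on the edge relation. Since these conditions are minor conditions, they transfer to $\Pol(\bB)$. For (4)$\Rightarrow$(7), the target is a pp-construction of $\bB'$ in $\stCon$. A finite distributive lattice embeds into a power of the two-element lattice, and $\stCon$ pp-defines the order $\le$ on $\{0,1\}$ together with the constants. Every relation preserved by $\sqcup,\sqcap$ on such a lattice is then pp-definable from the order and the constants on the coordinates. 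Homomorphic equivalence preserves pp-constructability.

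\textbf{Main obstacle.} I expect the hardest step to be this last part of (4)$\Rightarrow$(7): showing that relations invariant under a distributive lattice are pp-definable from $\le$ and constants. I would first try the known fact that such relations are definable by conjunctions of implications $x\ge a\Rightarrow y\ge b$. Each implication is pp-expressible through $\stCon$ after encoding the elements of $\bB'$ as join-irreducible coordinates.
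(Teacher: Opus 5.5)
Your plan matches how the paper handles this statement. The paper imports it: (1)--(4) are Theorem~16 of \cite{lattice-ops}, (5)--(7) are equivalent by Theorem~\ref{thm:PPisMinorIsMinionHom}, and the link between the two blocks is delegated to Remark~5.1 of \cite{bodirskyStarke2024symmetriclineararcmonadicUnfoldedCaterpillar}, which the paper does not reproduce. Your reconstruction of that link is correct and closes the cycle (4)$\Rightarrow$(7)$\Rightarrow$(5)$\Rightarrow$(3)$\Rightarrow$(4).

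For (5)$\Rightarrow$(3), your operation on $\{s,t\}$ is a meet of joins. It is therefore monotone, fixes $s$ and $t$, and lies in $\Pol(\stCon)$. It is the term $\bigvee_i\bigwedge_j x_{ij}$ for the order with $t$ below $s$. The paper would instead use caterpillar duality of $\stCon$ together with (1)$\Rightarrow$(3), as it does for $\T_{n+2}$ in the proof of Theorem~\ref{thm:main}.

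The step you single out as the main obstacle is routine. $\Pol(\stCon)$ consists exactly of the monotone operations fixing $s$ and $t$, i.e.\ the clone generated by $\wedge,\vee$. Take a lattice embedding $e\colon B'\to\{0,1\}^d$. For each relation $R$ of $\bB'$, the image $e(R)$ is closed under coordinatewise $\wedge,\vee$, because $\sqcup,\sqcap$ preserve $R$. The Pol--Inv Galois connection then makes $e(R)$ pp-definable in $\stCon$, so the description by implications $x\geq a\Rightarrow y\geq b$ is not needed. Points of $\{0,1\}^d$ outside $e(B')$ lie in no tuple of the pp-power, so the pp-power is homomorphically equivalent to $\bB'$.

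Shorter still is (4)$\Rightarrow$(6). The two-element lattice generates all distributive lattices, so $u^{\{s,t\}}\mapsto u^{\bB'}$ (for $u$ a lattice term, interpreted via $\sqcup,\sqcap$) is a well-defined clone homomorphism $\Pol(\stCon)\to\Pol(\bB')$. Compose it with $f\mapsto h_2\circ f\circ(h_1\times\dots\times h_1)$, for homomorphisms $h_1\colon\bB\to\bB'$ and $h_2\colon\bB'\to\bB$, to get a minion homomorphism into $\Pol(\bB)$. Both routes use distributivity, so your (3)$\Rightarrow$(4) must deliver a distributive lattice.

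Where you sketch the CDK block yourself, two steps do not work as written, although you rightly defer to \cite{lattice-ops}.

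\emph{(3)$\Rightarrow$(4).} Nothing in your sketch yields lattice operations on $B$ or on its core. The absorbing operations satisfy only height-one identities, while associativity and the absorption laws are not height-one. You have to pass to a new structure. One choice is $\bB'$ on the free distributive lattice generated by $B$ (joins of meets of nonempty subsets of $B$), with $R^{\bB'}$ the sublattice of $(B')^r$ generated by $R^{\bB}$. Then $\sqcup,\sqcap$ are polymorphisms by construction, and $b\mapsto b$ is a homomorphism $\bB\to\bB'$. A single $k$-absorbing polymorphism with $k$ and $n$ large enough defines a homomorphism $\bB'\to\bB$. No compactness is involved.

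\emph{(2)$\Rightarrow$(3).} The claim must be that the canonical lam program for $\bB$ never derives the goal on the indicator structure; an arbitrary linear arc program need not have this property. The invariant is: whenever the program derives $P$ on a class, some block of a representative has all its entries in $P$. The absorption identities leave this unchanged, and a rule with exactly one IDB and one EDB in its body propagates it.

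Your verbal description of $k$-absorbing operations is imprecise. What matters is that the defining identities say the value depends only on the blocks' entry sets up to absorption, as for $\bigvee_i\bigwedge_j x_{ij}$.
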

The $k$-absorbing operations of arity $kn$ from item (3) are introduced under the name $k$-ABS operations in Section~3.1 in \cite{lattice-ops}. For us it is only important that whether an operation is $k$-absorbing can be described by a minor condition. 
Bodirsky and Starke introduced \emph{unfolded caterpillar duality}, i.e., there an obstruction set consisting of caterpillars that is closed under unfoldings, and used the result about caterpillar duality to obtain similar characterizations for structures with unfolded caterpillar duality~\cite{bodirskyStarke2024symmetriclineararcmonadicUnfoldedCaterpillar}. In this article we will generalize their result.

Finally, one general observation about duality pairs that motivates why we defined caterpillars only for injective structures. Using the sparse incomarability theorem it is easy to show that we can assume without loss of generality that obstruction sets only contain  injective structures, for a proof see for example Lemma 3.7 in~\cite{bodirskyStarke2024symmetriclineararcmonadicUnfoldedCaterpillar}.
\begin{lemma}\label{lem:inj}
	Let $\bB$ be a finite structure and let $\mathcal F$ be a class of finite structures such that $(\mathcal F,\bB)$ is a duality pair. Define 
	\[\mathcal F'\coloneqq \{\bF\in\mathcal F \mid \text{$\bF$ is injective}\}.\]
	%\[\mathcal F'\coloneqq \{\bF\in\mathcal F \mid \text{all tuples in relations of $\bF$ are injective}\}.\]
	Then $(\mathcal F',\bB)$ is a duality pair.
\end{lemma}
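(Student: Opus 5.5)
Lemma~\ref{lem:inj} is the last statement, and the paper says it follows from the sparse incomparability theorem. The plan is as follows. Let $\bA$ be a finite structure with $\bA\not\to\bB$. We must find an injective $\bF\in\mathcal F$ with $\bF\to\bA$.

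First, I apply the sparse incomparability theorem (Feder--Vardi / Nešetřil--Rödl). It gives, for every $g$, a finite structure $\bA'$ of girth greater than $g$ with $\bA'\to\bA$, and such that for every structure $\bC$ with at most $g$ elements (in particular for $\bB$), $\bA'\to\bC$ holds if and only if $\bA\to\bC$. Here girth refers to the incidence multigraph. Girth greater than $1$, or at least $2$, in the usual convention forces every tuple in every relation of $\bA'$ to have pairwise distinct entries, since a repeated entry is a cycle of length $1$ in the incidence graph. So I take $g\geq\max(|B|,2)$. Then $\bA'$ is injective, $\bA'\not\to\bB$, and $\bA'\to\bA$.

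Next, since $(\mathcal F,\bB)$ is a duality pair and $\bA'\not\to\bB$, there is $\bF\in\mathcal F$ with $\bF\to\bA'$. If $\bF$ were injective we would be done by composing with $\bA'\to\bA$. In general, though, $\bF$ need not be injective. The key observation is that a homomorphism into an injective structure forces injectivity only on tuples whose entries are distinct. If $\bF$ contains a tuple $t\in R^{\bF}$ with $t_i=t_j$ for some $i\ne j$, then its image in $R^{\bA'}$ also has equal $i$-th and $j$-th entries, contradicting injectivity of $\bA'$. Hence every $\bF$ mapping to $\bA'$ is automatically injective, so $\bF\in\mathcal F'$.

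Finally, composing $\bF\to\bA'\to\bA$ shows that $(\mathcal F',\bB)$ is a duality pair. The main subtlety is the first step, namely getting the precise form of sparse incomparability that yields an injective $\bA'$ preserving non-homomorphism to $\bB$. I would cite the version for relational structures, in which girth is defined via the incidence bipartite multigraph. In that version, a tuple with a repeated entry counts as a cycle of length $2$ there, because the tuple node and the element node are joined by two parallel edges. So any girth bound of at least $3$ excludes repeated entries, and I would simply choose $g$ large enough to cover this as well.
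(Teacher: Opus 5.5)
Your proof is correct and follows the route the paper indicates by deferring to Lemma~3.7 of Bodirsky--Starke. There, sparse incomparability yields an injective, large-girth $\bA'$ with $\bA'\to\bA$ and $\bA'\not\to\bB$, and any obstruction $\bF\to\bA'$ must then be injective, because a tuple with a repeated entry would be mapped to one. Your hedging about whether a repeated entry counts as a cycle of length $1$ or $2$ is harmless, since you choose the girth bound large enough for either convention.
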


\subsection{Transitive tournaments and directed reachability}
For $n\geq1$ define the \emph{transitive tournament on $n$ vertices} to be the $\{E\}$-structure \[\T_n\coloneqq([0,n-1],\{(i,j)\mid 0\leq i<j<n\}).\] These structures will be central to this article. We now give a selection of well known facts about these tournaments for more information see for example Section 5.1 in~\cite{starkeThesisDigraphsmoduloprimitivepositive}.
It is true that there is a minion homomorphism from $\Pol(\stCon)$ to $\Pol(\T_n)$ for all $n\geq1$ and that there is a minion homomorphism from $\Pol(\T_n)$ to $\Pol(\T_m)$ if and only if $n\geq m$. Hence, we can order these structures in the following way
\[\T_1<\T_2<\T_3<\T_4<\dots<\stCon\]
%\[\stCon>\dots >\T_4>\T_3>\T_2>\T_1\]
Furthermore, $\stCon$ is the limit of this chain of $\T_n$'s in the sense that for any structure $\bB$ with $\T_n\leq \bB$ for all $n\geq1$ we also have $\stCon\leq\bB$. 
This chain lies at the top of the poset of all finite structures ordered by minion homomorphisms. The structure $\T_1$ is the unique maximal element and $\T_2$ is its unique lower cover. Meyer and Starke showed in 2024 that $\T_2$ has infinitely many lower covers: one for each finite simple group and $\T_3$. Unfortunately, $\T_4$ is not a lower cover of $\T_3$ as we will show in Example~\ref{exa:aboveT4AndBelowT3}. 
It is well known that transitive tournaments are closely related to Hagemann-Mitschke chains in the following way. For a proof see Theorem 5.4  in~\cite{starkeThesisDigraphsmoduloprimitivepositive}.
\begin{theorem}\label{thm:TnAreHMBlocker}
	Let $\bB$ be a finite structure and $n\geq0$. Then $\bB\leq\T_{n+2}$ if and only if $\Pol(\bB)\not\models\HM n$. 
\end{theorem}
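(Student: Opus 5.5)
The plan is to prove the two directions separately, working throughout with the correspondence of Theorem~\ref{thm:PPisMinorIsMinionHom}. Here $\bB\leq\T_{n+2}$ means that there is a minion homomorphism $\Pol(\bB)\to\Pol(\T_{n+2})$, or equivalently that every minor condition satisfied by $\Pol(\bB)$ is also satisfied by $\Pol(\T_{n+2})$.

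For the forward direction it suffices to show $\Pol(\T_{n+2})\not\models\HM n$. Minor conditions are transported along minion homomorphisms, so if $\Pol(\bB)$ satisfied $\HM n$ then so would $\Pol(\T_{n+2})$. Suppose polymorphisms $p_0,\dots,p_{n+1}$ of $\T_{n+2}$ witness $\HM n$, and write $f_i(x,y):=p_i(x,x,y)=p_{i+1}(x,y,y)$. The aim is to read off from the braid in Figure~\ref{fig:HM3} a directed walk in $\T_{n+2}$ with $n+2$ edges. Its edges come from applying the $p_i$ to coordinatewise increasing pairs of triples, for instance $(a,a,b)\to(a',b',b')$ with $a<a'$, $a<b'$ and $b<b'$. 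Equal values are glued along the identities of $\HM n$. Choosing the arguments from $[0,n+1]$ in a staircase pattern should yield a strictly increasing sequence of $n+3$ elements, which is impossible in $\T_{n+2}$. The case $n=0$ is the trivial statement that $\T_2$ has no constant polymorphism. I expect this step to be routine once the staircase of arguments is fixed.

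For the backward direction, assume $\Pol(\bB)\not\models\HM n$; we want a pp-construction of $\T_{n+2}$ in $\bB$. Set $m:=|B|^2$ and let $D$ be the set of binary polymorphisms of $\bB$; it is pp-definable in $\bB^{m}$ as the subuniverse generated by the two projections. On $D$ we pp-define three things:
\begin{itemize}
\item an edge relation $f\to g$, holding when some ternary polymorphism $p$ satisfies $f(x,y)=p(x,y,y)$ and $g(x,y)=p(x,x,y)$;
\item a unary relation $S$ of binary polymorphisms depending only on the first argument;
\item a unary relation $T$ of binary polymorphisms depending only on the second argument.
\end{itemize}
By definition, $\Pol(\bB)\models\HM n$ if and only if some walk from $S$ to $T$ in this digraph has at most $n+1$ edges. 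Our hypothesis therefore says that $T$ is disjoint from the pp-definable layers $L_0:=S$ and $L_{i+1}:=L_i\cup\{g\mid \exists f\in L_i\, f\to g\}$ for $i\leq n$, that is, $T\cap L_{n+1}=\emptyset$.

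It remains to turn this layered structure into $\T_{n+2}$ up to homomorphic equivalence. The map sending $f$ to $\min\{i\mid f\in L_i\}$, with $f\notin L_{n+1}$ sent to $n+1$, is monotone along edges. However, the digraph has loops, coming from $p(x,y,z)=f(x,z)$. So we must pp-define a strict version, for example pairs $(f,g)$ with $f\in L_i$ and $g\notin L_i$. The catch is that \enquote{$\notin$} is not pp, so instead we would pp-construct $\T_{n+2}$ on tuples $(f_0,\dots,f_{n+1})$ encoding a witness walk from $S$ toward $T$. We then check homomorphic equivalence: one direction via the layer map, the other by mapping $j\in[0,n+1]$ to a canonical walk built from projections.

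I expect the main obstacle to be making this strict layering pp-expressible, since reachability layers themselves are pp but their strict differences are not. If the direct approach fails, I would instead argue via the indicator structure of the single-symbol version of $\HM n$ given by Lemma~\ref{lem:WlogSingleFunctionSymbol}, combined with Lemma~\ref{lem:ind}, and extract the construction from its non-homomorphism into $\bB$.
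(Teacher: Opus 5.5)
The paper does not prove this theorem; it quotes it from Starke's thesis. So I am judging your argument on its own merits.

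\textbf{Forward direction: correct.} The staircase you describe works. Explicitly,
\[p_{n+1}(0)<p_{n+1}(1)=p_n(0,0,1)<p_n(1,2,2)=p_{n-1}(1,1,2)<\cdots<p_1(n,n+1,n+1)=p_0(n)<p_0(n+1)\]
gives $n+3$ strictly increasing values in $[0,n+1]$, which is impossible.

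\textbf{Backward direction: genuine gap.} This direction is never carried out, and its setup contains an off-by-one error.
\begin{enumerate}
\item $\HM n$ has exactly $n$ ternary symbols, each contributing one edge. So $\Pol(\bB)\models\HM n$ iff some walk from $S$ to $T$ has at most $n$ edges. Your hypothesis therefore only gives $T\cap L_n=\emptyset$. The condition $T\cap L_{n+1}=\emptyset$ that you use means $\Pol(\bB)\not\models\HM{n+1}$, which does not follow.
\item The step you flag yourself is exactly where the proof is missing. Your $(D;\to,S,T)$ is the free structure of $\Pol(\bB)$ generated by a relabelled copy of $\stCon$: binary functions as vertices, edges witnessed by ternary functions indexed by $(x,x),(y,x),(y,y)$. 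It detects whether $S$ and $T$ are connected at all, i.e.\ whether $\bB$ pp-constructs $\stCon$. It gives you no pp-definable strict edge relation, and strict layers $L_i\setminus L_{i-1}$ are not pp.
\item The indicator-structure fallback does not help either. The fact $\bI\not\to\bB$ is an obstruction for $\bB$, not a construction inside $\bB$.
\end{enumerate}

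\textbf{The missing idea: use the free structure generated by $\T_{n+2}$ itself.}
\begin{itemize}
\item Let $\mathbb F$ have the $(n+2)$-ary polymorphisms of $\bB$ as vertices. Put $f\to g$ iff some polymorphism $r$, with arguments indexed by the pairs $a<b$ in $[0,n+1]$, satisfies $f(x_0,\dots,x_{n+1})=r((x_a)_{a<b})$ and $g(x_0,\dots,x_{n+1})=r((x_b)_{a<b})$.
\item $\mathbb F$ is a pp-power of $\bB$.
\item The projections give $\T_{n+2}\to\mathbb F$: for the edge $(i,j)$, take $r$ to be the projection onto coordinate $(i,j)$.
\item By the duality $(\{\P_{n+2}\},\T_{n+2})$, we get $\mathbb F\to\T_{n+2}$ unless $\mathbb F$ contains a walk $f_0\to\cdots\to f_{n+2}$. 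In that case $\mathbb F$ is homomorphically equivalent to $\T_{n+2}$, and Theorem~\ref{thm:PPisMinorIsMinionHom} finishes. Strictness is thus supplied by the duality of $\T_{n+2}$ and never has to be pp-defined.
\end{itemize}
It remains to turn such a walk, with witnesses $r_1,\dots,r_{n+2}$, into $\HM n$. Put $p_0(x)=f_1(x,\dots,x)$ and $p_{n+1}(y)=f_{n+1}(y,\dots,y)$. For $k\in[n]$ put $p_k(x,y,z)=r_{k+1}(w_{ab})$, where
\begin{itemize}
\item $w_{ab}=z$ if $b\le k$,
\item $w_{ab}=y$ if $a<k<b$,
\item $w_{ab}=x$ if $a\ge k$.
\end{itemize}
Then $p_k(x,y,y)=f_k(y,\dots,y,x,\dots,x)$ with $k$ copies of $y$. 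Likewise $p_k(x,x,y)=f_{k+1}(y,\dots,y,x,\dots,x)$ with $k+1$ copies of $y$. The outer witnesses $r_1$ and $r_{n+2}$ collapse the ends: $f_1(y,x,\dots,x)=f_1(x,\dots,x)$ and $f_{n+1}(y,\dots,y,x)=f_{n+1}(y,\dots,y)$. These are exactly the identities of $\HM n$. The two extra edges of $\P_{n+2}$ are used only for this collapsing at the ends.
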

In particular, $\T_{n+2}\not\models\HM n$ and $\T_{n+2}\models\HM {n+1}$. 
For $n\geq0$ define the \emph{directed path of length $n$} to be the $\{E\}$-structure $\bP_n\coloneqq([0,n],\{(i-1,i)\mid i\in[n]\})$. It is easy to see that $(\{\bP_n\},\T_n)$ is a duality pair for all $n\geq1$. Hence $\T_n$ has finite duality and, since $\P_n$ is a caterpillar, $\T_n$ also has caterpillar duality.

\section{Caterpillars and Elevators}
We generalize the notion of unfolded caterpillar duality.
For $n\geq0$ a set $\mathcal F$ of finite structures is \emph{closed under $n$-fixed unfoldings (of caterpillars)} if for every caterpillar $\bC$ in $\mathcal F$, there are vertebrae  $v_1,\dots,v_n$ such that any unfolding of $\bC$ that fixes $v_1,\dots,v_n$ is contained in $\mathcal F$. The set $\mathcal F$ is closed under \emph{unfoldings (of caterpillars)} if it is closed under 0-fixed unfoldings.
A structure $\bB$ has \emph{$n$-fixed unfolded caterpillar duality} if there exists a set $\mathcal F$ of caterpillars such that 
\begin{itemize}
	\item $(\mathcal F,\bB)$ is a duality pair and
	\item $\mathcal F$ is closed under $n$-fixed unfoldings of caterpillars.
\end{itemize}
Note that unfolded caterpillar duality is 0-fixed unfolded caterpillar duality.

\begin{observation}
	For all $n\geq0$ the path $\P_{n+2}$ is a caterpillar with spine $(1,\dots,n)$. Hence, $\P_{n+2}$ has $n$ vertebrae. Since $(\{\P_{n+2}\},\T_{n+2})$ is a duality pair, $\T_{n+2}$ has $n$-fixed unfolded caterpillar duality, as we can fix all vertebra in $\P_{n+2}$.
\end{observation}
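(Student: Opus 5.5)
The plan is to verify the three ingredients of the definition of $n$-fixed unfolded caterpillar duality directly for $\mathcal F\coloneqq\{\P_{n+2}\}$: that $\P_{n+2}$ is a caterpillar with exactly $n$ vertebrae, that $(\mathcal F,\T_{n+2})$ is a duality pair, and that $\mathcal F$ is closed under $n$-fixed unfoldings. The duality pair was already noted at the end of the preliminaries, so only the first and third points need an argument.

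\textbf{Caterpillar structure.} $\P_{n+2}$ has domain $[0,n+2]$ and edges $(i-1,i)$ for $i\in[n+2]$, and it is clearly injective. As the spine I take the inner vertices $(1,\dots,n+1)$; the paper writes $(1,\dots,n)$, which seems to be an index slip. With this spine I would exhibit the conjunctive query representation
\[\phi_i(i,i+1)\coloneqq E(i,i+1)\quad\text{for } i\in[1,n],\]
\[\psi_1(1)\coloneqq\exists x\,E(x,1),\qquad \psi_{n+1}(n+1)\coloneqq\exists x\,E(n+1,x),\]
with all other $\psi_i$ empty. Its canonical database is exactly $\P_{n+2}$, the hairs supplying the endpoints $0$ and $n+2$. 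The end joints carry hairs, so the nondegeneracy condition holds. There are $n+1$ joints and hence $n$ vertebrae $(i,i+1)$, $i\in[1,n]$. In the degenerate case $n=0$, the spine is the single joint $1$ carrying both hairs, there are no vertebrae, and the condition that $\psi_0$ and $\psi_n$ are nonempty is still met.

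\textbf{Closure.} Choose $v_1,\dots,v_n$ to be all $n$ vertebrae of $\P_{n+2}$. Consider any $(a_i,a_j)$-unfolding with $i<j$. By Definition~\ref{def:unfolding} it fixes only the vertebrae $(a_k,a_{k+1})$ with $k<i$ or $k\ge j$. So the vertebra $(a_i,a_{i+1})$ is never fixed, since $i<j$ forces $i\notin\{0,\dots,i-1,j,\dots\}$.

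Hence no single unfolding step fixes all vertebrae of $\P_{n+2}$. The only unfolding of $\P_{n+2}$ fixing $v_1,\dots,v_n$ is the empty sequence of steps, which yields $\P_{n+2}$ itself, an element of $\mathcal F$. Therefore $\mathcal F$ is closed under $n$-fixed unfoldings. Together with the duality $(\{\P_{n+2}\},\T_{n+2})$, this gives $n$-fixed unfolded caterpillar duality of $\T_{n+2}$.

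\textbf{Main obstacle.} The only subtle point is bookkeeping. One must fix the correct spine with the hairs placed on the end joints, so that the vertebra count is exactly $n$ rather than $n+2$. One must also read "repeatedly applying" as allowing zero applications. Otherwise the set of admissible unfoldings is empty and closure holds vacuously, which gives the same conclusion anyway.
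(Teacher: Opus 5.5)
Your proof is correct and follows the paper's own one-line reasoning: take $\mathcal F=\{\P_{n+2}\}$, fix all $n$ vertebrae, and note that no $(a_i,a_j)$-unfolding fixes $(a_i,a_{i+1})$, so the only admissible unfolding is $\P_{n+2}$ itself, which is the same fact the paper invokes at the end of the proof of Lemma~\ref{lem:closedUnderFoldingsImpliesClosedUnderUnfoldings}. Your correction of the spine to $(1,\dots,n+1)$ is also right, and it agrees with Example~\ref{exa:simpleNumbersGameMotivation}, where $\P_4$ is given the spine $(1,2,3)$.
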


Clearly, we can generalize this observation to the following lemma.
\begin{lemma}\label{lem:finiteCaterpillarDuality}
	 If $\bB$ is a finite structure and $\mathcal F$ is a finite set of caterpillars such that $(\mathcal F,\B)$ is a duality pair, then $\bB$ has $n$-fixed unfolded caterpillar duality, where 
	$n$ is the maximal number of vertebrae in a caterpillar in $\mathcal F$.
\end{lemma}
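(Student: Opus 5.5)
The natural obstruction set is $\mathcal F$ itself, so the plan is to show directly that $\mathcal F$ is closed under $n$-fixed unfoldings. If that did not hold literally, the fallback would be to enlarge $\mathcal F$ to
\[\mathcal F^*\coloneqq\mathcal F\cup\{\bC'\mid \bC'\text{ is an unfolding of some }\bC\in\mathcal F\text{ that fixes all vertebrae of }\bC\}.\]

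\textbf{Choice of fixed vertebrae.} For every caterpillar $\bC\in\mathcal F$ I choose $v_1,\dots,v_n$ to be all of its vertebrae. Caterpillars with fewer than $n$ vertebrae are padded by repeating vertebrae; if one insists on distinct ones, only the fact that at most $n$ vertebrae exist is used anyway. The key observation is that an $(a_i,a_j)$-unfolding with $i<j$ does not fix the vertebra $(a_i,a_{i+1})$. That vertebra lies between $a_i$ and $a_j$, so it is not of the form $(a_k,a_{k+1})$ with $k\in\{0,\dots,i-1,j,\dots,n-1\}$.

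\textbf{Unfoldings fixing all vertebrae are trivial.} It follows that every single unfolding step that fixes all vertebrae of $\bC$ is impossible. Hence the only unfolding of $\bC$ fixing all its vertebrae is the empty sequence of steps, which yields $\bC$ itself. Applying this inductively, the set of unfoldings of $\bC$ fixing $v_1,\dots,v_n$ is just $\{\bC\}\subseteq\mathcal F$. Thus $\mathcal F$ is closed under $n$-fixed unfoldings.

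\textbf{Conclusion.} Since $(\mathcal F,\bB)$ is a duality pair and $\mathcal F$ consists of caterpillars, $\bB$ has $n$-fixed unfolded caterpillar duality.

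\textbf{Main obstacle.} The delicate point is whether an unfolding with $i<j$ can ever fix every vertebra. This could only happen for degenerate choices where the body is empty, and $i<j$ rules that out. It also has to be checked that the finiteness of $\mathcal F$ guarantees a uniform bound $n$, which is immediate.

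If the definition is instead read so that unfoldings may fix $v_1,\dots,v_n$ while acting on copies, the robust route is the enlargement to $\mathcal F^*$. For each $\bC'\in\mathcal F^*$ unfolded from $\bC$ there is the surjective homomorphism $h\colon\bC'\to\bC$ from Definition~\ref{def:unfolding}. Then:
\begin{enumerate}
\item Every element of $\mathcal F^*$ admits no homomorphism to $\bB$: if $\bC'\to\bB$, then $\bC'$ would have to be handled via the canonical s$_n$lam reasoning, which is not available here, so this is where the argument would stall.
\item Because of this, I would rely on the first argument, which avoids needing $\mathcal F^*$ at all.
\end{enumerate}
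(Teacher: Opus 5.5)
Your main argument is correct and matches the paper's reasoning. You fix all (at most $n$) vertebrae of each $\bC\in\mathcal F$ and note that no $(a_i,a_j)$-unfolding with $i<j$ fixes $(a_i,a_{i+1})$, so the only unfolding of $\bC$ fixing all its vertebrae is $\bC$ itself, and $\mathcal F$ is trivially closed under $n$-fixed unfoldings. You should simply delete the closing paragraph about $\mathcal F^*$: by your own argument $\mathcal F^*=\mathcal F$, and the appeal to s$_n$lam reasoning there is irrelevant to this lemma.
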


Let $n\in\N$, $\bC$ be a caterpillar with spine $(a_0,\dots,a_m)$, and  $0\leq i_1< j_1\leq\dots\leq i_n< j_n\leq m$. For $k\in[n]$ let $\bC_k$ denote the $(a_{i_k},a_{j_k})$-unfolding of $\bC$. Then $\bC$ is an \emph{$n$-folding} of $(\bC_1,\dots,\bC_n)$. A set of structures $\mathcal B$ is \emph{closed under $n$-foldings (of caterpillars)} if for all caterpillars $\bC_1,\dots,\bC_n\in\mathcal B$ we have that all $n$-foldings of $(\bC_1,\dots,\bC_n)$ are in $\mathcal B$.

%\section{Elevator Condition}
In his dissertation Starke introduced the elevator condition~\cite{starkeThesisDigraphsmoduloprimitivepositive}. Here we give an updated version of this definition. The identities are different but they capture the same idea.

\begin{definition}\label{def:elevator}
For $n \geq 1$, the \emph{elevator condition of length $n$}, called $\Ele n$, consists of the identities 
%\begin{align*}
%e_0(x_1,\dots,x_n) & \approx e_1(x_1,y,y,x_2,\dots,x_n) \\
%e_{i}(x_1,\dots,x_{i-1},y,y,x_i,\dots,x_n) & \approx e_{i+1}(x_1,\dots,x_i,z,z,x_{i+1},\dots,x_n) && \text{ for all } i \in [n-1] \\
%e_n(x_1,\dots,x_{n-1},y,y,x_n) & \approx e_{n+1}(x_1,\dots,x_n).
%\end{align*}
\begin{align*}
e_0(x_1,\dots,x_n)  \approx e_1&(x_1,z,z,x_2,\dots,x_n)
\intertext{for all $i\in[n-1]$}
e_{i}&(x_1,\dots,x_{i-1},y,y,x_i,x_{i+1},x_{i+2},\dots,x_n)\\
\approx{} e_{i+1}&(x_1,\dots,x_{i-1},x_i,x_{i+1},z,z,x_{i+2},\dots,x_n)
\intertext{and}
e_n&(x_1,\dots,x_{n-1},y,y,x_n)  \approx e_{n+1}(x_1,\dots,x_n).
\end{align*}
\end{definition}
See Figure~\ref{fig:Ele3} for a visualization of $\Ele3$. A set of operations that satisfies $\Ele n$ is called an \emph{elevator chain of length $n$}. 
\begin{figure}
    \centering
    \EleThree
    \caption{A visualization of $\Ele3$ (top) and a 3-staircase (bottom), which is another way to visualize $\Ele3$.}
    \label{fig:Ele3}
\end{figure}
Note that $\Ele 1$ is equivalent to $\HM 1$. In general, we have the following connection.
\begin{lemma}\label{lem:EleImpliesHM}
	Let $n\geq1$. Then $\Ele n$ implies (for minions) $\HM n$.
\end{lemma}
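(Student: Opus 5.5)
The plan is to realise $\HM n$ directly inside any minion satisfying $\Ele n$, by taking each $p_i$ to be a suitable minor of $e_i$. Since minions are closed under minors and height-one identities are preserved by substitution, each identity of $\HM n$ will follow from exactly one identity of $\Ele n$ after specialising its variables. Concretely, I would set
\[
p_0(x)\coloneqq e_0(x,\dots,x),\qquad p_{n+1}(y)\coloneqq e_{n+1}(y,\dots,y),
\]
and, for $i\in[n]$, let $p_i(x,y,z)$ be $e_i$ applied to a tuple built from three variables as follows. The argument places of $e_i$ that are involved in the two identities touching $e_i$ are the \emph{outgoing pair} (the places where $y,y$ sit in the identity linking $e_i$ with $e_{i+1}$) and the \emph{incoming pair} (the places where $z,z$ sit in the identity linking $e_{i-1}$ with $e_i$). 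These two pairs overlap in one place, as the staircase in Figure~\ref{fig:Ele3} shows.

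In the three consecutive places covered by the two pairs I put $x,y,z$, with $y$ on the shared place, $x$ on the other outgoing place and $z$ on the other incoming place. Every place of $e_i$ that already lies on the $e_{i+1}$-side of the staircase gets $z$, and every place still on the $e_{i-1}$-side gets $x$. For instance, under the literal index reading one gets $p_i(x,y,z)=e_i(z,\dots,z,x,y,z,x,\dots,x)$. With this choice, $p_i(x,x,y)$ is $e_i$ with its outgoing pair equal to $x,x$, and $p_{i+1}(x,y,y)$ is $e_{i+1}$ with its incoming pair equal to $y,y$.

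The key steps, in order, are the following.
\begin{enumerate}
\item Check $p_0(x)\approx p_1(x,y,y)$. Specialise the first identity of $\Ele n$ by $x_1=\dots=x_n=x$ and $z=y$. The left side is $e_0(x,\dots,x)=p_0(x)$, and the right side is $e_1$ with $y,y$ on its incoming pair and $x$ elsewhere, which is $p_1(x,y,y)$.
\item For $i\in[n-1]$, check $p_i(x,x,y)\approx p_{i+1}(x,y,y)$. Specialise the $i$-th middle identity by setting the outgoing pair of $e_i$ to $x$, the entries that ``move down one step'' (the $x_i,x_{i+1}$ of the identity) and the new pair $z,z$ to $y$, and the remaining $x_j$ according to the region rule above. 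Both sides then become $p_i(x,x,y)$ and $p_{i+1}(x,y,y)$, respectively.
\item Check $p_n(x,x,y)\approx p_{n+1}(y)$. Specialise the last identity with the outgoing pair $y,y$ of $e_n$ replaced by $x,x$ and all $x_j$ set to $y$. For this to work, the region rule must assign $y$-type variables to all places of $e_n$ other than the outgoing pair, so that the right side is the diagonal $e_{n+1}(y,\dots,y)$.
\end{enumerate}

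The main obstacle is purely the index bookkeeping. One has to choose the region rule so that three things hold at once: the shared place of the two pairs carries the variable that is identified in both $p_i(x,x,\cdot)$ and $p_i(\cdot,y,y)$; the entries that shift position between $e_i$ and $e_{i+1}$ are assigned consistently; and at both ends the tuple collapses to a constant tuple, so that $p_0$ and $p_{n+1}$ are genuinely unary. I would first determine from the staircase picture which places of $e_i$ are ``left of'' and ``right of'' the moving pair. I would then fix the assignment on the endpoints $i=1$ and $i=n$, where the constraint that $e_0$ and $e_{n+1}$ receive a constant tuple is most restrictive, and propagate it inward using the middle identities. If a single uniform formula for $p_i$ does not fit both ends simultaneously, the fallback is to define the assignment of each $p_i$ recursively from that of $p_{i-1}$ via the $(i-1)$-st middle identity, and then verify only the final identity at $e_n$.
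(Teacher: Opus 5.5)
Your construction is the paper's. The paper sets $p_0(a)\coloneqq e_0(a,\dots,a)$, $p_{n+1}(c)\coloneqq e_{n+1}(c,\dots,c)$ and $p_i(a,b,c)\coloneqq e_i(c,\dots,c,a,b,c,a,\dots,a)$ with $b$ at position $i+1$. That is exactly your ``literal index reading''. Your checks of the first and the last identity (steps 1 and 3) are correct. However, step 2 fails as you wrote it, and this is the only step with real content.

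In the $i$-th middle identity, the variable $x_{i+1}$ sits at position $i+3$ of $e_i$ and at position $i+1$ of $e_{i+1}$. In $p_i(x,x,y)$, position $i+3$ lies in the trailing block and carries $x$. In $p_{i+1}(x,y,y)$, position $i+1$ is the ``other outgoing place'' of $e_{i+1}$, which also carries $x$. So $x_{i+1}$ must be sent to $x$, not to $y$. Only $x_i$ (position $i+2$ of $e_i$, position $i$ of $e_{i+1}$) goes to $y$.

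With your substitution $x_i,x_{i+1},z\mapsto y$, the left side has $y$ at position $i+3$ and the right side has $y$ at position $i+1$. Neither side is then the required minor. The correct substitution is $x_1,\dots,x_i\mapsto y$, $x_{i+1},\dots,x_n\mapsto x$, the outgoing pair $\mapsto x$ and $z\mapsto y$. This turns the identity literally into $p_i(x,x,y)\approx p_{i+1}(x,y,y)$.

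With this fix, the single uniform formula works for every $i\in[n]$. The recursive fallback in your last paragraph is therefore unnecessary. The argument is then complete, and somewhat more explicit than the paper's, which only asserts that these minors form a Hagemann--Mitschke chain.
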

\begin{proof}
	Let $\mathcal M$ be a minion and let $e_0,\dots,e_{n+1}\in\mathcal M$ form an elevator chain of length $n$. Define functions $p_0,\dots,p_{n+1}$ in the following way
	\begin{align*}
		p_0(a)&\coloneqq e_0(a,\dots,a)\\
p_i(a,b,c)&\coloneqq e_i(c,\dots,c,a,b,c,a,\dots,a)\text{ with $b$ at position $i+1$}		\\
p_{n+1}(c)&\coloneqq e_{n+1}(c,\dots,c)
	\end{align*}
for all $a,b,c$. Note that $p_0,\dots,p_{n+1}$ are minors of $e_0,\dots,e_{n+1}$, respectively. Therefore, they are elements of $\mathcal M$. Furthermore, they form a Hagemann-Mitschke chain. Hence, $\mathcal M\models\HM n$.
\end{proof}

Observe that $\Ele n$ implies
\begin{align*}
e_0(x_1,\dots,x_n)  \approx{} &e_1(x_1,x_1,x_1,x_2,\dots,x_n)\approx\cdots\\
\dots\approx{} &e_n(x_1,\dots,x_{n-1},x_n,x_n,x_n) 
\approx e_{n+1}(x_1,x_2,\dots,x_n).
\end{align*}
Hence, $e_0=e_{n+1}$ in every elevator chain of length $n$.

\section{Main Result}
Now we can formulate our main result.

\begin{theorem}\label{thm:main}    
Let $n\geq0$ and let $\bB$ be a structure with a finite domain and a finite relational signature. 
Then the following are equivalent. 
    \begin{enumerate}
        \item \label{maltsev}
        $\Pol(\bB)$ contains operations that form an elevator chain of length $n+1$ and $k$-absorptive operations of arity $\ell k$, for all $\ell,k \geq 1$.
        \item \label{closedUnderFoldings}
        $\bB$ has caterpillar duality and $\Csp(\bB)$ is closed under $(n+1)$-foldings of caterpillars. 
        \item 
        \label{can-sym-lin-arc}
        The canonical s$_n$lam  Datalog program for $\bB$ solves $\Csp(\bB)$.
        \item \label{sym-lin-arc}
        Some s$_n$lam Datalog program solves $\Csp(\bB)$.
        \item 
        \label{caterpillarAndUnfolding}
        $\bB$ has caterpillar duality and $\overline{\Csp(\bB)}$ is closed under $n$-fixed unfoldings of caterpillars. 
        \item 
        \label{caterpillar}
        $\bB$ has $n$-fixed unfolded caterpillar duality. 
        %\item 
        %\label{finite-caterpillar}
        %$\bB$ has finitely based $n$-fixed unfolded caterpillar duality. 
        %there is a set of caterpillars $\mathcal F$ that is closed under $n$-fixed unfoldings such that $(\mathcal F,\bB)$ is a duality pair $\bB$ has $n$-fixed unfolded caterpillar duality and there are caterpillars $\bC_1,\dots,\bC_m\in\mathcal F$ for which for every caterpillar $\bC$ in $\mathcal F$ there is an $i\in[m]$ such that $\bC$ is an unfolding of $\bC_i$.
        \item \label{non-degenerate-minor}
        If $\Pol(\bB)$ does not satisfy a minor condition $\Sigma$,
        then $\Sigma$ implies (for finite clones) $\HM n$. 
        \item \label{minor-p2} Every minor condition that holds in $\Pol(\bT_{n+2})$ also holds in $\Pol(\bB)$. 
        \item \label{minion-hom} There is a minion homomorphism from $\Pol(\bT_{n+2})$ to $\Pol(\bB)$. 
        \item \label{pp-p2} There is a primitive positive construction of 
        $\bB$ in $\bT_{n+2}$.
        \item \label{lattice} $\bB$ is homomorphically equivalent to a structure $\bB'$ such that $\Pol(\bB')$ contains operations that form an elevator chain of length $n$ and operations $\sqcup$ and $\sqcap$ such that $(B',\sqcup,\sqcap)$ forms a (distributive) lattice.
    %\item \label{minion-core} (Libor Barto, personal communication) The minion core of $\Pol(\bB)$ equals $\Pol(\bC_2,\bB_2)$. 
    % THIS ITEM IS ALREADY COVERED BY A COMMENT THAT FOLLOWS THE THEOREM. 
%            \item
%        $\Pol(\bB)$ contains totally symmetric operations of all arities and $\operatorname{GM}(n)$ operations
%        of arity $n$, for all odd $n \geq 2$. (Theorem 3.12 in Submaximal Clones over a Three-Element set)
    \end{enumerate}
    %Moreover, if one of these items holds, {and $\bB$ is a core,} then there exists a structure $\bB'$ with a binary relational signature such that  $\Pol(\bB') = \Pol(\bB)$, and all the statements hold for $\bB'$ in place of $\bB$ as well. 
\end{theorem}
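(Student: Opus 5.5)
The plan is to arrange the eleven items into a few cycles. The purely algebraic block (\ref{non-degenerate-minor})–(\ref{pp-p2}) comes from Theorem~\ref{thm:PPisMinorIsMinionHom} together with Theorem~\ref{thm:TnAreHMBlocker}. The combinatorial block (\ref{maltsev})–(\ref{caterpillar}) is handled by following the template of Bodirsky and Starke for the case $n=0$, and the two blocks are linked through the elevator condition.

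For the algebraic block, (\ref{minor-p2})$\Leftrightarrow$(\ref{minion-hom})$\Leftrightarrow$(\ref{pp-p2}) is exactly Theorem~\ref{thm:PPisMinorIsMinionHom}. For (\ref{minor-p2})$\Rightarrow$(\ref{non-degenerate-minor}), let $\Sigma$ fail in $\Pol(\bB)$. Then $\Sigma$ fails in $\Pol(\T_{n+2})$. Given a finite clone $\mathcal C$ with $\mathcal C\not\models\HM n$, Theorem~\ref{thm:TnAreHMBlocker} yields a minion homomorphism $\mathcal C\to\Pol(\T_{n+2})$, so $\mathcal C\not\models\Sigma$; hence $\Sigma$ implies $\HM n$. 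For (\ref{non-degenerate-minor})$\Rightarrow$(\ref{minor-p2}), if $\Sigma$ holds in $\Pol(\T_{n+2})$ but not in $\Pol(\bB)$, then $\Sigma$ implies $\HM n$, and therefore $\T_{n+2}\models\HM n$, which contradicts Theorem~\ref{thm:TnAreHMBlocker}.

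To connect this block to (\ref{maltsev}), I first show that $\T_{n+2}$ itself satisfies (\ref{maltsev}). It has caterpillar duality and hence $k$-absorbing operations by Theorem~\ref{thm:mainOld}. I then write down an explicit elevator chain of length $n+1$ on $[0,n+1]$, for instance
\[
e_i(x_1,\dots)=\min\bigl(x_1,\dots,x_{i-1},\ \max(\text{pair}),\ x_{i+1},\dots\bigr)\text{-style,}
\]
shifted by the index $i$. Since both properties are minor conditions, (\ref{minor-p2})$\Rightarrow$(\ref{maltsev}) follows.

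For the combinatorial block, the order of implications is (\ref{maltsev})$\Rightarrow$(\ref{closedUnderFoldings})$\Rightarrow$(\ref{can-sym-lin-arc})$\Rightarrow$(\ref{sym-lin-arc})$\Rightarrow$(\ref{caterpillarAndUnfolding})$\Rightarrow$(\ref{caterpillar})$\Rightarrow$(\ref{pp-p2}). The steps are as follows.
\begin{itemize}
\item (\ref{maltsev})$\Rightarrow$(\ref{closedUnderFoldings}). Caterpillar duality comes from Theorem~\ref{thm:mainOld}. For closure under $(n+1)$-foldings, take homomorphisms $h_k\colon\bC_k\to\bB$ and combine them with the elevator operations $e_k$: a joint of $\bC$ is sent to $e_k$ applied to the images of its copies, using the $k$-th unfolding in the interval $[i_k,j_k]$. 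The identities of $\Ele{n+1}$ glue consecutive intervals, and the $k$-absorbing operations handle hairs via the indicator-structure argument of Lemma~\ref{lem:ind}.
\item (\ref{closedUnderFoldings})$\Rightarrow$(\ref{can-sym-lin-arc}). Suppose a caterpillar obstruction $\bC$ is not rejected by the canonical s$_n$lam program. Any derivation on it passes through at most $n$ non-symmetric rules, so the canonical lam derivation on $\bC$ splits into at most $n+1$ symmetric segments. Each segment's failure gives a "back-and-forth" witness, i.e.\ an unfolding that maps to $\bB$. We then show that $\bC$ is an $(n+1)$-folding of structures in $\Csp(\bB)$, so $\bC\to\bB$, a contradiction. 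Soundness is Lemma~\ref{lem:can-sound}.
\item (\ref{can-sym-lin-arc})$\Rightarrow$(\ref{sym-lin-arc}) is trivial.
\item (\ref{sym-lin-arc})$\Rightarrow$(\ref{caterpillarAndUnfolding}). A linear arc derivation unravels into a caterpillar, giving caterpillar duality. Applying symmetric rules in reverse reproduces unfoldings, so in a derivation tree the at most $n$ non-symmetric steps mark $n$ vertebrae that survive every unfolding fixing them.
\item (\ref{caterpillarAndUnfolding})$\Rightarrow$(\ref{caterpillar}). Take the set of all caterpillars not in $\Csp(\bB)$, using Lemma~\ref{lem:inj}.
\item (\ref{caterpillar})$\Rightarrow$(\ref{pp-p2}). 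Use Theorem~\ref{thm:TnAreHMBlocker} contrapositively: if $\Pol(\bB)$ failed some condition valid in $\T_{n+2}$, we would build from the indicator structure of $\Ele{n+1}$ (together with absorption) a caterpillar obstruction whose unfoldings, fixing any $n$ vertebrae, always leave a segment unfoldable into something mapping to $\bB$.
\end{itemize}
Item (\ref{lattice}) is linked to (\ref{maltsev}) via Theorem~\ref{thm:mainOld}(4), by passing to the lattice structure $\bB'$ and transporting the elevator chain along the homomorphic equivalence.

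The main obstacle is the step from the caterpillar and folding side back to the algebra, (\ref{closedUnderFoldings})$\Rightarrow$(\ref{can-sym-lin-arc}) together with (\ref{caterpillar})$\Rightarrow$(\ref{pp-p2}). The difficulty is counting precisely: a caterpillar needing $n+1$ folds corresponds to $n$ non-symmetric rules and $n$ fixed vertebrae, and the length indices $n$ versus $n+1$ must line up. I would first verify the $n=1$ case by hand against $\T_3$ before generalizing.
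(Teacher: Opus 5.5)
Your overall architecture is sound, and the routine steps are essentially right. These are: Theorem~\ref{thm:PPisMinorIsMinionHom} for (\ref{minor-p2})--(\ref{pp-p2}); both directions between (\ref{non-degenerate-minor}) and (\ref{minor-p2}); the elevator gluing for (\ref{maltsev})$\Rightarrow$(\ref{closedUnderFoldings}), where hairs need no absorption because they are pp-definable and preserved by the $e_k$ themselves; and (\ref{sym-lin-arc})$\Rightarrow$(\ref{caterpillarAndUnfolding})$\Rightarrow$(\ref{caterpillar}). The problem is that the steps carrying the real content are not proved.

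\textbf{The step (\ref{closedUnderFoldings})$\Rightarrow$(\ref{can-sym-lin-arc}).} There are no ``segments'' here, because failure of the canonical s$_n$lam program is not attached to any particular derivation. By the contrapositive of the closure construction with $\tilde P_i,\tilde Q_i$ in Lemma~\ref{lem:MaltimpliesLAMDatalogIsSLAMDatalog}, the failure only tells you this: for \emph{every} choice of $n$ vertebrae, some iterated unfolding fixing them maps to $\bB$. Your next claim, that $\bC$ is then an $(n+1)$-folding of members of $\Csp(\bB)$, is false in general. An $(n+1)$-folding combines $n+1$ single $(a_{i_k},a_{j_k})$-unfoldings on consecutive intervals, whereas your witnesses are arbitrary unfoldings. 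Example~\ref{exa:simpleNumbersGameMotivation} already shows this: $\P_{21113}$ and $\P_{31112}$ fold to $\P_{21212}$, not to $\P_4$. Bridging this is the main combinatorial work of the paper, Lemma~\ref{lem:closedUnderFoldingsImpliesClosedUnderUnfoldings}. It uses the folding game on $[0,k]^{n+1}$ (Lemmas~\ref{lem:numbersGame} and~\ref{lem:numbersGameCaterpillarVersion}), a lexicographic induction over sets $I$ of vertebrae for which some unfolding fixing $I$ maps to $\bB$, and common unfoldings (Lemma~\ref{lem:commonUnfoldingExists}). That gives (\ref{closedUnderFoldings})$\Rightarrow$(\ref{caterpillarAndUnfolding}); (\ref{caterpillarAndUnfolding})$\Rightarrow$(\ref{can-sym-lin-arc}) is then proved separately.

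\textbf{The return to the algebra.} Theorem~\ref{thm:TnAreHMBlocker} characterizes when $\bB$ pp-constructs $\T_{n+2}$, not the converse. An indicator structure of $\Ele{n+1}$ could at best give back (\ref{maltsev}), which does not reach (\ref{non-degenerate-minor})--(\ref{pp-p2}). You have to start from an arbitrary $\Sigma$ failing in $\Pol(\bB)$. The paper takes the indicator structure $\bI$ of $\Sigma$, a caterpillar $\bC\to\bI$ with $\bC\not\to\bB$, and the $n$ vertebrae supplied by the duality. It then studies reachability among coordinates of chosen representative tuples, where edge steps may be reversed except at the fixed vertebrae. If the end is unreachable, the resulting $x/y$-labelling shows that $\Sigma$ implies $\HM n$. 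Otherwise the connecting path spells out an unfolding fixing $v_1,\dots,v_n$ that maps to $\bB$, a contradiction.

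\textbf{The elevator chain on $\T_{n+2}$.} Your formula is a placeholder, and the obvious reading fails. The identities force $e_i(\dots,a,b,b,\dots)=e_0(\dots,a,\dots)$ and $e_i(\dots,b,b,c,\dots)=e_0(\dots,c,\dots)$. So for $\Ele 2$ on $\T_3$ with $e_0=\min$, $e_1$ sends the edge $(1,0,0,1)\to(2,2,1,2)$ of $\T_3^4$ to the non-edge $(1,1)$. The paper obtains the chain non-constructively in Lemma~\ref{lem:TnHasElevatorChains}. It maps the indicator structure of $\Ele n$ for $\T_{n+1}$ into an auxiliary digraph on $[0,n]^n$ with no directed path of length $n+1$, and then uses the duality $(\{\P_{n+1}\},\T_{n+1})$.
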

For $n=0$ this is Theorem~3.1 from~\cite{bodirskyStarke2024symmetriclineararcmonadicUnfoldedCaterpillar}.
The proof of this theorem will stretch out over the rest of this article. 
We will prove $(\ref{maltsev})
\Rightarrow(\ref{closedUnderFoldings})
\Rightarrow(\ref{caterpillarAndUnfolding})
\Rightarrow(\ref{caterpillar})
\Rightarrow(\ref{non-degenerate-minor})
\Rightarrow(\ref{minor-p2})
\Rightarrow(\ref{maltsev})$ and $(\ref{caterpillarAndUnfolding})
\Rightarrow(\ref{can-sym-lin-arc})
\Rightarrow(\ref{sym-lin-arc})
\Rightarrow(\ref{caterpillarAndUnfolding})$. The equivalence of $(\ref{minor-p2})-(\ref{pp-p2})$ follows from general results. Finally, we will show $(\ref{maltsev})\Leftrightarrow(\ref{lattice})$.

\section{Folding and Unfolding Caterpillars}

First we show a connection between elevator chains and caterpillars.
\begin{lemma}\label{lem:elevatorChainsImplyClosedUndernFoldings}
	Let $n\in\N$ and $\bB$ be a finite structure such that $\Pol(\bB)$ contains operations that form an elevator chain of length $n$. Then $\Csp(\bB)$ is closed under $n$-foldings of caterpillars.
\end{lemma}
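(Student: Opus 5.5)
We need to show the following: if $\bC$ is an $n$-folding of $(\bC_1,\dots,\bC_n)$ with each $\bC_k\to\bB$, then $\bC\to\bB$. Fix the spine $(a_0,\dots,a_m)$ of $\bC$, indices $i_1<j_1\le i_2<j_2\le\dots\le i_n<j_n$, and homomorphisms $h_k\colon\bC_k\to\bB$, where $\bC_k$ is the $(a_{i_k},a_{j_k})$-unfolding. Each element $c$ of $\bC$ belonging to a segment between consecutive cut points has corresponding elements in every $\bC_k$. These are a single copy if $c$ lies in the head or tail of the $k$-th unfolding, and three copies $c',c,c''$ if $c$ lies in the $k$-th body. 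The plan is to define $h\colon\bC\to\bB$ coordinatewise by applying an elevator operation to a suitable tuple of these values.

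Concretely, for an element $c$ (a joint or hair) attached to the spine between $a_{j_{k}}$ and $a_{i_{k+1}}$ (with $j_0:=0$, $i_{n+1}:=m$), we set
\[h(c)\coloneqq e_k\bigl(h_1(c),\dots,h_k(c),h_{k+1}(c),\dots,h_n(c)\bigr),\]
where in this region $c$ lies in the tail of $\bC_1,\dots,\bC_k$ and in the head of $\bC_{k+1},\dots,\bC_n$, so each $h_l(c)$ is a single value. For $c$ in the body of the $k$-th unfolding (between $a_{i_k}$ and $a_{j_k}$), $c$ has three copies in $\bC_k$. We then use the $(n+1)$-ary function $e_k$ and set
\[h(c)\coloneqq e_k\bigl(h_1(c),\dots,h_{k-1}(c),h_k(c'),h_k(c),h_k(c''),h_{k+1}(c),\dots,h_n(c)\bigr),\]
adjusting the exact placement to match the variable pattern in Definition~\ref{def:elevator}. (A caveat: in $\Ele n$ the function $e_i$ appears with arity $n+1$ and $e_0,e_{n+1}$ with arity $n$. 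We will check the indexing so that each region uses the correct arity, and read off from the identities which position hosts the doubled variable.)

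The verification has two parts. Inside a single region, every tuple of a relation of $\bC$ lies in the region, so its image under $h$ is obtained by applying a polymorphism $e_k$ coordinatewise to tuples that are images of relation tuples of $\bC_1,\dots,\bC_n$ under the $h_l$. The copy structure of the unfolding guarantees that the "three-copy" coordinates come from genuine relation tuples: the edge of $\bC$ between two body elements lifts to an edge in each of the three body copies of $\bC_k$. Hence the image is in the relation. The other part concerns the boundary joints $a_{i_k}$ and $a_{j_k}$, where two definitions meet. At $a_{i_k}$ we have $h_k(a_{i_k}')$ unavailable, because the copies $a_{i_k}$ and $a''_{i_k}$ exist but $a_{i_k}$ itself is in the head. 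This is exactly where the identities of $\Ele n$ come in. Using that $a''_{i_k}$ and $a'_{j_k}$ are hairless joints joining the copies, both formulas evaluate to the same value: the identity $e_{k-1}(\dots,y,y,\dots)\approx e_k(\dots,z,z,\dots)$ says that the region-$(k-1)$ formula (with a repeated argument) equals the region-$k$ formula at the shared joint, provided we choose the body formula to use $h_k(a''_{i_k})$ or $h_k(a'_{j_k})$ in the duplicated positions. So $h$ is well defined and, since the pp-formula of $\bC$ is a conjunction of per-region parts glued only at joints, $h$ is a homomorphism. The outer regions are handled by $e_0$ and $e_{n+1}$, which are equal anyway.

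The main obstacle I expect is the bookkeeping at the joints $a_{i_k},a_{j_k}$. We must choose which of the three copies goes into which argument of $e_k$ so that the two adjacent identities of $\Ele n$ match simultaneously at both ends of the body. The staircase picture (Figure~\ref{fig:Ele3}) suggests the assignment: positions $(k,k+1,k+2)$ of $e_k$ receive $(c',c,c'')$-values. At $a_{i_k}$ the copies $a_{i_k}$ and $a''_{i_k}$ must fill a "$y,y$" pair. This works because in the head both the first-copy and the head element are literally $a_{i_k}$. At $a_{j_k}$ the pair $a'_{j_k}$ and $a_{j_k}$ plays the role of "$z,z$". I would first work out $n=1$ in detail, which is essentially the proof for slam Datalog with $\HM1$, and then induct on the regions.
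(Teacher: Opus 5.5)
Inside the bodies your construction is exactly the paper's: for $i_k<\ell<j_k$ you apply $e_k$ to $h_1(a_\ell),\dots,h_{k-1}(a_\ell),h_k(a'_\ell),h_k(a_\ell),h_k(a''_\ell),h_{k+1}(a_\ell),\dots,h_n(a_\ell)$. The rest does not work as written.

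\textbf{Arities and the gap regions.} The operations $e_1,\dots,e_n$ are $(n+2)$-ary, not $(n+1)$-ary; only $e_0$ and $e_{n+1}$ are $n$-ary. So your formula $e_k(h_1(c),\dots,h_n(c))$ between $a_{j_k}$ and $a_{i_{k+1}}$ is ill-typed for $k\geq 1$. The paper instead uses the single operation $e=e_0=e_{n+1}$ at every joint outside the open intervals $(i_k,j_k)$, including $a_{i_k}$ and $a_{j_k}$ themselves. Consequently no body formula is ever evaluated at $a_{i_k}$ or $a_{j_k}$, and the missing copy $a'_{i_k}$ causes no trouble. (If you want region-dependent operations, take $e_k$ with its $k$-th argument tripled; this equals $e$ anyway.)

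\textbf{The boundary check.} The real content is the verification of the vertebrae $\phi_{i_k}$ and $\phi_{j_k-1}$ at the two ends of each body. You defer this, and your hints for it contradict each other. In the $(a_{i_k},a_{j_k})$-unfolding, the three body copies start at $a_{i_k},a''_{i_k},a''_{i_k}$ and end at $a'_{j_k},a'_{j_k},a_{j_k}$. In particular, $a_{i_k}$ is adjacent to $a'_{i_k+1}$ but not to $a_{i_k+1}$. Hence:
\begin{itemize}
\item at $a_{i_k}$, positions $k,k+1,k+2$ of $e_k$ must carry $(h_k(a_{i_k}),h_k(a''_{i_k}),h_k(a''_{i_k}))$, with the duplicate in the \emph{last} two slots;
\item at $a_{j_k}$, they must carry $(h_k(a'_{j_k}),h_k(a'_{j_k}),h_k(a_{j_k}))$, with the duplicate in the \emph{first} two slots.
\end{itemize}
Your second paragraph has the duplicated values right. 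Your last paragraph, however, lets $a_{i_k}$ help fill a doubled pair and calls the shapes at $a_{i_k}$ and $a_{j_k}$ the $y,y$ and $z,z$ pairs. That is backwards, and a doubled $h_k(a_{i_k})$ would need the non-existent edge between $a_{i_k}$ and $a_{i_k+1}$ in $\bC_k$.

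What you actually need are the derived identities
\[e_k(x_1,\dots,x_{k-1},a,b,b,x_{k+1},\dots,x_n)\approx e(x_1,\dots,x_{k-1},a,x_{k+1},\dots,x_n)\approx e_k(x_1,\dots,x_{k-1},b,b,a,x_{k+1},\dots,x_n).\]
The first is obtained by specializing the identity of $\Ele n$ between $e_{k-1}$ and $e_k$, the second from the identity between $e_k$ and $e_{k+1}$. In both cases you then use $e_l(\dots,c,c,c,\dots)\approx e(\dots,c,\dots)$, which is the same computation that gives $e_0=e_{n+1}$. With these, $h(a_{i_k})$ and $h(a_{j_k})$ can be rewritten in the required shapes, and every vertebra is preserved coordinatewise. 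This covers the degenerate cases $j_k=i_k+1$ and $j_k=i_{k+1}$, since in the latter the shared joint needs both shapes and both identities hold simultaneously.
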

\begin{proof}
	Let $\bC$ be a caterpillar with spine $(a_0,\dots,a_m)$ and $0\leq i_1< j_1\leq\dots\leq i_n< j_n\leq m$ such that for each $k\in[n]$ the $(a_{i_k},a_{j_k})$-unfolding of $\bC$ has a homomorphism $h_k$ to $\bB$. Let $e_0,\dots,e_{n+1}\in\Pol(\bB)$ be operations that form an elevator chain of length $n$. Let $e=e_0$ and recall that $e_{n+1}=e$ and that $e_k(c_1,\dots,c_{k-1},c_k,c_k,c_k,c_{k+1},\dots,c_n)=e(c_1,\dots,c_n)$ for all $k\in[n]$ and all $c_1,\dots,c_n\in\bC$.
	Define a map $h\colon\{a_0,\dots,a_m\}\to\bB$ where $h(a_\ell)$ is 
	\begin{itemize}
		\item $e_k(h_1(a_\ell),\dots,h_{k-1}(a_{\ell}),h_k(a'_\ell),h_k(a_\ell),h_k(a''_\ell),h_{k+1}(a_{\ell}),\dots,h_n(a_\ell))$
		if $i_k<\ell<j_k$, where $a_\ell'$ and $a''_\ell$ are the copies of $a_\ell$ in the $(a_{i_k},a_{j_k})$-unfolding of $\bC$ according to the naming convention from the definition of unfoldings and
		\item $e(h_1(a_\ell),\dots,h_n(a_\ell))$ if $\ell$ is in none of the intervals $[i_k+1,j_k-1]$.
	\end{itemize}
	By looking at the definition of $\Ele n$ it is easy to see, that $h$ can be extended to a homomorphism from $\bC$ to $\bB$.
\end{proof}

The remainder of this section is dedicated to proving the implication $(\ref{closedUnderFoldings})\Rightarrow(\ref{caterpillarAndUnfolding})$, by studying the connection between  $(n+1)$-foldings and $n$-fixed unfoldings. 
%------------------------------------------------------------

\begin{example}\label{exa:simpleNumbersGameMotivation}
    Recall that the duality pair $(\{\bP_4\},\T_4)$ witnesses that $\T_4$ has $2$-fixed unfolded caterpillar duality and that $\P_4$ is a caterpillar with spine $(1,2,3)$. Let $\P_{213}$ denote the $(1,2)$-unfolding of $\P_4$. The name $\P_{213}$ comes from the fact that $\P_{231}$ is an oriented path consisting of 2 forward edges, followed by 1 backward edge, followed by 3 forward edges. Using this naming convention it is clear that $\P_{312}$ is the $(2,3)$-unfolding of $\P_4$. 
    Our main theorem predicts that $\overline{\Csp(\T_4)}$ is not closed under 1-fixed unfoldings and that $\Csp(\T_4)$ is not closed under 2-foldings. 
    The former holds, since $\P_{213}\to\T_4$ shows that not all unfoldings of $\P_4$ that fix the vertebra $(2,3)$ are obstructions of $\T_4$ and analogously $\P_{213}\to\T_4$ shows that fixing the vertebra $(1,2)$ is also not sufficient. 
    These structures also witness the latter, since $\P_4$ is a 2-folding of $(\P_{213},\P_{312})$, $\P_{213}\to\T_4$, $\P_{312}\to\T_4$, and $\P_4\not\to\T_4$. 
    
    In this case the connection from 2-foldings to 1-fixed unfoldings is easy. Now consider the paths $\P_{21113}$ and $\P_{31112}$. They still witness that  $\overline{\Csp(\T_4)}$ is not closed under 1-fixed unfoldings. However, for 2-foldings we need to do more work. Note that $\P_{21212}$ is a 2-folding of $(\P_{21113},\P_{31112})$ and that $\P_{21212}\to\T_4$. We can still salvage these witnesses, since $\P_{213}$ is a 2-folding of $(\P_{21113},\P_{21212})$ and $\P_{312}$ is a 2-folding of $(\P_{21212},\P_{31112})$, yielding our original witnesses. 
    It is easy to see that this approach still works when the two initial witnesses are even more unfolded.
\end{example}

In general this can become much more complicated since the number of steps we might have to take increases significantly in $n$ and witnesses might not be as simple as unfolding a single vertebra multiple times. We will deal with the first complication in Lemma~\ref{lem:numbersGameCaterpillarVersion} and with the second one in Lemma~\ref{lem:closedUnderFoldingsImpliesClosedUnderUnfoldings}. Note that in the example we essentially only cared about the number of times a vertebra was unfolded. Hence, we introduce a version of foldings for numbers and translate our findings to caterpillars afterwards.

%\subsection{Foldings of Numbers}
Let $n\in\N$ and let $r_1,\dots,r_n\in \N^n$. The \emph{$n$-folding} of $(r_1,\dots,r_n)$ is the tuple $r\in\N^n$, where the $i$-th entry of $r$ is \[\max((r_1)_i,\dots,(r_{i-1})_i,(r_i)_i-1,(r_{i+1})_i\dots,(r_n)_i).\]% or $0$ if all $(r_j)_i$ are 0.
Let $M\subseteq \N^n$. The set $M$ is \emph{closed under $n$-foldings} if for all $r_1,\dots,r_n\in M$ the $n$-folding of $(r_1,\dots,r_n)$ is contained in $M$.

\begin{lemma}\label{lem:numbersGame}
    Let $n,m\in\N$ and let $M\subseteq \N^n$. If $M$ is closed under $n$-foldings and $(m,0,\dots,0), (0,m,\dots,0),\dots,(0,0,\dots,m)\in M$, then $(a_1,\dots,a_n)\in M$ for all $a_1,\dots,a_n\in[0,m-1]$. In particular $(0,\dots,0)\in M$.
\end{lemma}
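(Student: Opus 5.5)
The plan is a downward induction on the coordinate sum. The key observation is that every $a\in[0,m-1]^n$ is the $n$-folding of an explicit tuple of vectors, each of which is either one of the given vectors $m e_j$ or a vector in $[0,m-1]^n$ with strictly larger coordinate sum. Here $e_j$ is the $j$-th unit vector, so $m e_j=(0,\dots,0,m,0,\dots,0)$ with $m$ at position $j$. If $m=0$ there is nothing to show, so assume $m\geq1$.

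For $a=(a_1,\dots,a_n)\in[0,m-1]^n$ and $j\in[n]$ define
\[
y_j\coloneqq\begin{cases} a+e_j & \text{if } a_j<m-1,\\ m e_j & \text{if } a_j=m-1.\end{cases}
\]
We show that the $n$-folding of $(y_1,\dots,y_n)$ is exactly $a$ by computing its $k$-th entry, $\max\bigl((y_k)_k-1,\max_{j\neq k}(y_j)_k\bigr)$.
\begin{itemize}
\item In both cases $(y_k)_k-1=a_k$: either $(a_k+1)-1$ or $m-1$.
\item For $j\neq k$ we have $(y_j)_k\leq a_k$: if $y_j=a+e_j$ then $(y_j)_k=a_k$, and if $y_j=m e_j$ then $(y_j)_k=0\leq a_k$.
\end{itemize}
Hence the $k$-th entry equals $a_k$. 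So $a\in M$ as soon as $y_1,\dots,y_n\in M$, by closure under $n$-foldings.

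We now prove $a\in M$ for all $a\in[0,m-1]^n$ by induction on $s(a)\coloneqq\sum_i (m-1-a_i)$. In every case of the construction, each $y_j$ is either a given vector $m e_j\in M$, or it is $a+e_j$ with $a_j<m-1$. In the latter case $a+e_j\in[0,m-1]^n$ and $s(a+e_j)=s(a)-1$, so $a+e_j\in M$ by the induction hypothesis.

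In the base case $s(a)=0$, that is $a=(m-1,\dots,m-1)$, every $y_j$ equals $m e_j$. The computation then shows directly that $a$ is the $n$-folding of $(m e_1,\dots,m e_n)$, so $a\in M$. The special case $(0,\dots,0)\in M$ follows because $m\geq1$.

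The only real obstacle is choosing the $y_j$ correctly. The naive choice $y_j=a+e_j$ would leave $[0,m-1]^n$ when $a_j=m-1$. Substituting $m e_j$ there keeps the diagonal entry at $m-1=a_j$ while contributing only zeros off the diagonal, and this is what makes the induction go through.
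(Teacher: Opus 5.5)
Your proof is correct, and its skeleton matches the paper's. Both run a downward induction from $(m-1,\dots,m-1)$, which is the $n$-folding of $me_1,\dots,me_n$. Both then write each remaining tuple of $[0,m-1]^n$ as the $n$-folding of tuples that are either some $me_j$ or already handled by the induction.

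The difference lies in the witnesses and the induction order. The paper inducts along the reverse lexicographic order. It writes $r=(a_1,\dots,a_{i-1},a_i-1,m-1,\dots,m-1)$ as a folding of vectors with trailing zeros, among them the lexicographic successor $(a_1,\dots,a_i,0,\dots,0)$. It then swaps in $me_j$ wherever an entry would reach $m$. This forces a case split on the last non-saturated position $i$ and some index bookkeeping. As printed, the rows $r_1,\dots,r_{i-1}$ of the paper's diagram carry $a_i$ in column $i$, so for $i\geq 2$ the folded $i$-th entry would be $a_i$ rather than $a_i-1$. The entry there should be $a_i-1$.

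Your uniform choice of $a+e_j$ (or $me_j$ when $a_j=m-1$) turns the folding computation into a two-line check that holds for every $a$ at once. Inducting on $\sum_i(m-1-a_i)$ is the natural measure for that choice. Since $a+e_j$ is also lexicographically larger than $a$, your witnesses would equally drive the paper's lexicographic induction, so the simplification loses nothing.
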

It is a nice exercise to show this lemma from hand for the case $n=m=3$.
\begin{proof}
    We order the set $[0,m-1]^n$ lexicographically and show by induction that all its elements are contained in $M$. For the base case consider the following diagram
    \begin{align*}
    \begin{matrix}
        e_1=&(m&0&\dots&0)\\
        e_2=&(0&m&\dots&0)\\
        \vdots&\vdots&\vdots&&\vdots\\
        e_n=&(0&0&\dots&m)\\\hline
        e=&(m-1&m-1&\dots&m-1)
    \end{matrix}
    \end{align*}
    it shows that $e$ is the $n$-folding of $(e_1,\dots,e_n)$. Hence $e\in M$.
    For the induction step let $i\in[n]$ and $a_i,\dots,a_n\in[1,m-1]$.
    Consider the diagram
    \begin{align*}
    \begin{matrix}
        r_1=&(a_1+1&\cdots&a_{i-1}&a_i&0&\cdots&0)\\
        \vdots&\vdots&&\vdots&\vdots&\vdots&&\vdots\\
        r_{i-1}=&(a_1&\cdots&a_{i-1}+1&a_i&0&\cdots&0)\\
        r_i=&(a_1&\cdots&a_{i-1}&a_i&0&\cdots&0)\\
        r_{i+1}=&(0&\cdots&0&0&m&\cdots&0)\\
        \vdots&\vdots&&\vdots&\vdots&\vdots&&\vdots\\
        r_n=&(0&\cdots&0&0&0&\cdots&m)\\\hline
        r=&(a_1&\cdots&a_{i-1}&a_i-1&m-1&\cdots&m-1)\\
    \end{matrix}
    \end{align*}
    it shows that $r$ is the $n$-folding of $(r_1,\dots,r_n)$. 
    Note that $r_i$ is the (lexicographic) successor of $r$. 
    We want to show $r\in M$. 
    For $j\in[n]$ define $r'_j$ to be 
    \begin{itemize}
        \item $r_j$ if $j\leq i$ and $a_j+1< m$ and
        \item $e_j$ otherwise. %(0,\dots,0,m,0,\dots,0)
    \end{itemize}
    Note that $r$ is the $n$-folding of $(r'_1,\dots,r'_n)$. Furthermore for every $j\in[n]$ we have that either
    \begin{itemize}
        \item $r'_j$ is $e_j$, which is in $M$ or
        \item $r'_j$ (lexicographically) indirectly succeeds $r$ and  $r'_j\in [m-1]^n$, hence $r'_j\in M$ by induction hypothesis.
    \end{itemize}  
    Since $M$ is closed under $n$-foldings we have that $r\in M$.
\end{proof}

Now we translate the previous result about $n$-foldings of tuples to $n$-foldings of caterpillars. 
\begin{lemma}\label{lem:numbersGameCaterpillarVersion}
    Let $\bB$ be a finite structure whose CSP is closed under $n$-foldings of caterpillars, $\bC$ be a caterpillar with spine $(a_0,\dots,a_m)$ and vertebrae $v_1=(a_0,a_1),\dots,v_m=(a_{m-1},a_m)$, and $0\leq i_1\leq j_1\leq\dots\leq i_n\leq j_n\leq m$. 
    Let $\bC_1,\dots,\bC_n$ be caterpillars such that for each $\ell\in[n]$ we have that $\bC_\ell$ is an unfolding of $\bC$ that fixes all vertebrae of $\bC$ that are not in $\bC[i_\ell,j_\ell]$. 
%    \begin{itemize}
%        \item $\bC_1$ is an unfolding of $\bC$ that fixes $v_{i_1+1},\dots,v_m$,
%        \item $\bC_2$ is an unfolding of $\bC$ that fixes $v_1,\dots,v_{i_{1}},v_{i_2+1},\dots,v_m$,
%        \item $\dots$, 
%        \item $\bC_{n-1}$ is an unfolding of $\bC$ that fixes $v_1,\dots,v_{i_{n-2}},v_{i_{n-1}+1},\dots,v_m$, and
%        \item $\bC_n$ is an unfolding of $\bC$ that fixes $v_1,\dots,v_{i_{n-1}}$,
%    \end{itemize}
If $\bC_1,\dots,\bC_n\to\bB$, then $\bC\to\bB$.
\end{lemma}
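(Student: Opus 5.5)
The plan is to reduce the statement to Lemma~\ref{lem:numbersGame} by attaching to every tuple $r\in\N^n$ a canonical caterpillar $\bD_r$ built from $\bC$. To get $\bD_r$, start with $\bC$ and, for each $\ell\in[n]$ and each $1\leq t\leq r_\ell$, apply the $(a_{i_\ell},a_{j_\ell})$-unfolding. Each application after the first is performed on the middle copy of the segment $\bC[i_\ell,j_\ell]$, i.e.\ on the copy whose joints still carry the original names $a_{i_\ell+1},\dots,a_{j_\ell-1}$. Since segments of different indices share at most an endpoint, these unfoldings commute. Every unfolding used in segment $\ell$ fixes all vertebrae outside $\bC[i_\ell,j_\ell]$. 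With this setup I would define
\[M\coloneqq\{r\in\N^n\mid \bD_r\to\bB\}.\]
Then $\bD_{(0,\dots,0)}=\bC$, so the goal $\bC\to\bB$ is exactly $(0,\dots,0)\in M$.

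\emph{Step 1 (monotonicity).} If $s\geq r$ coordinatewise, then $\bD_s\to\bD_r$. Indeed, $\bD_s$ is obtained from $\bD_r$ by further unfoldings, and composing the surjective homomorphisms from Definition~\ref{def:unfolding} gives the map. Moreover, the $(a_{i_k},a_{j_k})$-unfolding of $\bD_r$, taken on the middle copy, is $\bD_{r+e_k}$, where $e_k$ is the $k$-th unit vector.

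\emph{Step 2 ($M$ is closed under $n$-foldings of tuples).} Let $r_1,\dots,r_n\in M$ and let $r$ be their $n$-folding. For each $k$, the vector $r+e_k$ dominates $r_k$ coordinatewise. In coordinate $k$ this holds because $r_{(k)}\geq (r_k)_k-1$, and in every other coordinate $i$ because $r_{(i)}\geq (r_k)_i$ by definition of the maximum. By Step~1 we get $\bD_{r+e_k}\to\bD_{r_k}\to\bB$. Thus every $(a_{i_k},a_{j_k})$-unfolding of $\bD_r$ lies in $\Csp(\bB)$, where the joints are read inside the middle copies, and these intervals are still ordered along the spine of $\bD_r$. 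So $\bD_r$ is an $n$-folding of $(\bD_{r+e_1},\dots,\bD_{r+e_n})$. Since $\Csp(\bB)$ is closed under $n$-foldings of caterpillars, $\bD_r\to\bB$, i.e.\ $r\in M$.

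\emph{Step 3 (the seeds).} It remains to find an $m$ with $m e_\ell\in M$ for all $\ell$. For this I would show that for each $\ell$ there is some $m_\ell$ with a homomorphism $\bD_{m_\ell e_\ell}\to\bC_\ell$; taking $m=\max_\ell m_\ell$ then gives $m e_\ell\in M$ via Step~1. The plan is to induct on the number of elementary unfoldings used to produce $\bC_\ell$ from $\bC$. Each such unfolding is an $(a_p,a_q)$-unfolding inside the current copy of the segment. I would show that one additional full unfolding of the segment in $\bD$ can be mapped homomorphically onto it. The full zigzag goes forward, back, forward across the whole segment, and the portions outside $[p,q]$ are folded back onto themselves via the surjection $h$. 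This is the \emph{main obstacle}. Care is needed to check that the turning points of the zigzag, which are $a'_j$ and $a''_i$ and have no hairs, can be sent to the turning joints of $\bC_\ell$. It must also be checked that the map is the identity on the vertebrae outside $\bC[i_\ell,j_\ell]$, so that it is compatible with the rest of the caterpillar. If a direct map fails, the fallback is to redefine $\bD_r$ as the "most unfolded" caterpillar of depth $r_\ell$ in segment $\ell$, namely all possible sub-unfoldings up to depth $r_\ell$, so that Step~3 becomes tautological. In that case Steps~1–2 have to be rechecked for this family.

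With Steps~2 and~3 in place, Lemma~\ref{lem:numbersGame} yields $(0,\dots,0)\in M$, that is, $\bC\to\bB$.
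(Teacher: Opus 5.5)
Your Steps 1 and 2 are essentially the paper's argument: monotonicity by composing the unfolding homomorphisms, $r+e_k\geq r_k$ coordinatewise, and $\bD_r$ being an $n$-folding of $(\bD_{r+e_1},\dots,\bD_{r+e_n})$. However, Step 3 is false for your family, so you never obtain the seeds $me_\ell\in M$.

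Take $n=1$ and $\bC=\P_4$ with spine $(1,2,3)$, with the whole spine as the segment. Let $\bC_1=\P_{213}$ be the $(1,2)$-unfolding from Example~\ref{exa:simpleNumbersGameMotivation}. Your caterpillars are $\bD_0=\P_4$, $\bD_{e_1}=\P_{323}$, $\bD_{2e_1}=\P_{32223}$, and so on; these are exactly the obstructions of Example~\ref{exa:noElevatorTermsButHM}. Each of them has net height $4$ (forward minus backward edges from first to last vertex), and so does $\P_{213}$. Hence any homomorphism into $\P_{213}$ must send the first vertex to the first vertex of $\P_{213}$. But every $\bD_{re_1}$ starts with a directed path of length $3$, while the only directed path leaving the first vertex of $\P_{213}$ has length $2$. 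So $\bD_{re_1}\not\to\bC_1$ for every $r$.

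The obstruction is structural. The first pass of your zigzag crosses the whole segment monotonically through joints that keep their hairs. It can never be folded back at a joint strictly inside the segment, which is exactly what partial unfoldings like $\bC_\ell$ do. So no fixed sequence of full-segment unfoldings dominates all unfoldings of a segment.

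Your fallback does not repair this as stated. Suppose $\bD_r$ is a ``most unfolded'' caterpillar of depth $r_\ell$ in each segment. Then $\bD_{r+e_k}$ arises from $\bD_r$ by many elementary unfoldings, so $\bD_r$ is no longer an $n$-folding of the $\bD_{r+e_k}$. Moreover, a single elementary unfolding of $\bD_r$ need not map into the more unfolded $\bD_{r+e_k}$ (compare $\P_4\not\to\P_{213}$), so Step 2 breaks.

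What you need is a family with two properties at once:
\begin{itemize}
\item raising coordinate $\ell$ is \emph{one} elementary unfolding inside segment $\ell$;
\item some point on the $\ell$-th axis is $\bC_\ell$ itself.
\end{itemize}
The paper gets both by tailoring the chain to $\bC_\ell$. If $i_\ell=j_\ell$ for some $\ell$, then $\bC_\ell=\bC$ and there is nothing to prove. Otherwise, fix elementary unfoldings $(a^\ell_1,b^\ell_1),(a^\ell_2,b^\ell_2),\dots$ inside segment $\ell$ whose first $k$ steps turn $\bC$ into $\bC_\ell$. A common $k$ for all $\ell$ is reached by unfolding further, since the extra unfoldings still map onto $\bC_\ell$. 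Let $\bC_{(k_1,\dots,k_n)}$ be the result of performing the first $k_\ell$ steps in each segment $\ell$. Then $ke_\ell\in M$ holds by definition, because $\bC_{ke_\ell}=\bC_\ell\to\bB$, and your Steps 1 and 2 go through verbatim for this family.
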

\begin{proof}
If there is an $\ell\in[n]$ with $i_\ell=j_\ell$, then $\bC=\bC_\ell\to\bB$. Hence, we can assume that $i_\ell<j_\ell$ for all $\ell$. 
For each $\ell\in[n]$ let $(a^\ell_1,b^\ell_1),\dots,(a^\ell_{k(\ell)},b^\ell_{k(\ell)})$ be such that the sequence \[\bC=\bC_{(0,\dots,0,\dots,0)},\bC_{(0,\dots,1,\dots,0)},\dots,\bC_{(0,\dots,k(\ell),\dots,0)}=\bC_\ell\] is well defined, where $\bC_{(0,\dots,i+1,\dots,0)}$ is the $(a^\ell_{i+1},b^\ell_{i+1})$-unfolding of $\bC_{(0,\dots,i,\dots,0)}$.
Note that since $i_\ell<j_\ell$ for all $\ell$ we could replace each $\bC_\ell$ by a further unfolded version of itself. Hence we can assume without loss of generality that all $k(\ell)$ are equal to $k\coloneqq k(1)$.
We can extend the definition of $\bC_{(0,\dots,i,\dots,0)}$ to tuples with more than one non-zero entry:
For $(k_1,\dots,k_n)\in[0,k]^n$ and $\ell\in[n]$ define $\bC_{(k_1,\dots,k_\ell+1,\dots,k_n)}$ as the $(a^\ell_{k_\ell+1},b^\ell_{k_\ell+1})$-unfolding of $\bC_{(k_1,\dots,k_\ell,\dots,k_n)}$. 

Let $M$ be the set of all tuples $t\in[0,k]^n$ such that $\bC_t\to\bB$. By definition \[(k,0,\dots,0),(0,k,\dots,0),\dots,(0,0,\dots,k)\in M.\] We will show that $M$ is closed under $n$-foldings.
Let $r_1,\dots,r_n\in M$ and let $r$ be the $n$-folding of $(r_1,\dots,r_n)$. For $\ell\in[n]$ let $r'_\ell$ be the tuple obtained from $r$ by increasing the $\ell$-th entry by 1. Note that $r_\ell\leq r'_\ell$ componentwise. Hence, $\bC_{r'_\ell}\to\bC_{r_\ell}\to\bB$. Observe that $\bC_r$ is the $n$-folding of $(\bC_{r'_1},\dots,\bC_{r'_n})$. Therefore, $\bC_r\to\bB$ and $r\in M$. By Lemma~\ref{lem:numbersGame} $(0,\dots,0)\in M$. Hence, $\bC=\bC_{(0,\dots,0)}\to\bB$, as desired.
\end{proof}

%------------------------------------------------------------

%\subsection{Common Unfoldings of Caterpillars}
Next we introduce some convenient notation for manipulating caterpillars. 
Let $\bC$ be a caterpillar with conjunctive query representation $(\phi\wedge\psi)(a_0,\dots,a_m)$ and let $0\leq i\leq j\leq m$. The \emph{$[i,j]$-interval of $\bC$}, denoted $\bC[i,j]$, is the canonical database of \[(\phi_i(a_i,a_{i+1})\wedge\dots\wedge\phi_{j-1}(a_{j-1},a_j))\wedge(\psi_i(a_i)\wedge\dots\wedge\psi_j(a_j)).\] Note that $\bC[i,j]$ is a caterpillar. The spine of $\bC[i,j]$ is $(a_i,\dots,a_j)$. Recall that $(a_j,\dots,a_i)$ is also the spine of $\bC[i,j]$. We denote $\bC[i,j]$ with this \emph{reversed spine} by $\bC[j,i]$. We define the \emph{$(i,j)$-interval of $\bC$}, denoted $\bC(i,j)$, as the canonical database of \[(\phi_i(a_i,a_{i+1})\wedge\dots\wedge\phi_{j-1}(a_{j-1},a_j))\wedge(\psi_{i+1}(a_{i+1})\wedge\dots\wedge\psi_{j-1}(a_{j-1})).\] Analogously we define $\bC(i,j]$ and $\bC[i,j)$. 
Using this notation we can show that any unfoldings of a caterpillar have a common unfolding, similar to a common multiple for numbers.

\begin{lemma}\label{lem:commonUnfoldingExists}
	Let $\bC$ be a caterpillar and let $\bC_1$ and $\bC_2$ be unfoldings of $\bC$ fixing $v_1,\dots,v_n$. Then there exists a caterpillar $\bC'$ that is an unfolding of $\bC_1$ and of $\bC_2$ fixing $v_1,\dots,v_n$.
\end{lemma}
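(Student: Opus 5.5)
I plan to prove this by induction on the total number of single unfolding steps used to build $\bC_1$ and $\bC_2$ from $\bC$. The key tool is a \emph{diamond lemma}: if $\bD$ is a caterpillar and $\bD_1,\bD_2$ are obtained from $\bD$ by a single $(a,b)$-unfolding and a single $(c,d)$-unfolding respectively, both fixing $v_1,\dots,v_n$, then $\bD_1$ and $\bD_2$ have a common unfolding fixing $v_1,\dots,v_n$. Granting this, the general statement follows by the standard tiling argument, as for confluence of rewriting systems. Write $\bC_1$ as a chain of single unfoldings of length $p$ and $\bC_2$ as one of length $q$. Fill a $p\times q$ grid by applying the diamond lemma repeatedly to adjacent squares. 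The one subtlety is that the diamond lemma yields \emph{unfoldings}, possibly several steps long, rather than single steps. So I will state and prove a strip lemma: a single step and an arbitrary unfolding sequence have a common unfolding. I prove it by induction on the length of the sequence, and then prove the full statement by induction on $p$.

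For the diamond lemma, I use the interval notation $\bC[i,j]$, $\bC(i,j)$. The $(a_i,a_j)$-unfolding of $\bC$ is the concatenation
\[\bC[0,i]\,\cdot\,\bC[i,j)\,\cdot\,\bC(j,i)\,\cdot\,\bC(i,j]\,\cdot\,\bC[j,m],\]
glued at joints without hairs. Now suppose the second unfolding is at $(a_k,a_l)$.
\begin{itemize}
\item If $[i,j]$ and $[k,l]$ overlap in at most one joint, the two operations commute, and applying both gives the common unfolding.
\item Otherwise, I first take a simple common refinement. I claim that unfolding the whole interval $[\min(i,k),\max(j,l)]$ is an unfolding of both $\bD_1$ and $\bD_2$, or at least that some further unfolding of it is.
\end{itemize}
To justify the claim, I apply additional unfoldings to $\bD_1$ at the copies of the segments outside $[i,j]$, so as to extend each of the three copies of the body. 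Here I must check that all the needed unfoldings avoid the fixed vertebrae. This holds because $v_1,\dots,v_n$ lie outside both intervals, hence outside their union.

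An alternative route, which may be cleaner, is to show that any unfolding $\bC'$ of $\bC$ fixing $v_1,\dots,v_n$ comes with a homomorphism to $\bC$ that maps the spine to a walk in the path $(a_0,\dots,a_m)$. This walk steps by $\pm1$, traverses each fixed vertebra exactly once and forward, and is a zigzag whose turning points carry no hairs. Moreover, such unfoldings are determined by their walks, up to the refinement order "walk $w'$ refines walk $w$". A common unfolding then corresponds to a common refinement of two walks. I would take it to be the walk that, inside each maximal block between fixed vertebrae, zigzags enough times to dominate both walks.

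The main obstacle is the overlapping case of the diamond lemma. There I must verify concretely that the extra unfoldings of $\bD_1$ are legitimate $(a,b)$-unfoldings, meaning their endpoints are joints of the current spine with the correct orientation, and that the result is isomorphic to the corresponding unfolding of $\bD_2$. I would first check the special case where $[k,l]\subseteq[i,j]$. Then I would reduce the general overlapping case to it by noting that unfolding $[i,j]$ and then the copies covering $[k,l]$ reproduces the union interval.
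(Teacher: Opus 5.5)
\textbf{There is a genuine gap: neither route proves that a common unfolding exists.} Both stop where the real work begins.

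\emph{Route A.} The passage from the diamond lemma to the full statement fails. After you close the peak $\bD_1\leftarrow\bD\to\bE_1$, you get $\bE_1\to^*\bF$ in several steps. Your strip-lemma induction hypothesis handles one step against a sequence of length $q-1$, so it cannot be applied to $\bF$ and $\bE_q$. Unfolding is not terminating, and local confluence alone does not imply confluence (Newman's lemma needs termination). The tiling would need a diamond that closes with at most one step on a prescribed side, and that is false here.

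The reason is that an unfolding keeps every turn of the caterpillar it unfolds: if the walk $u$ passes a turning joint of $w$, then $w\circ u$ turns there. Take $i<k<l<j$, let $\bD_1$ be the $(a_i,a_j)$-unfolding (walk $0\nearrow j\searrow i\nearrow m$) and $\bD_2$ the $(a_k,a_l)$-unfolding (walk $0\nearrow l\searrow k\nearrow m$).
Every unfolding of $\bD_2$ first turns at height at most $l$, and it can never descend monotonically from height $j$ to height $i<k$. By contrast, $\bD_1$ and each single unfolding step of $\bD_1$ either climbs straight to $j$ or contains the descent $j\searrow i$. So no common unfolding is reachable from $\bD_1$ in at most one step.

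The same invariant refutes your overlapping case. The unfolding at $[\min(i,k),\max(j,l)]$ turns only at those two heights, so it is an unfolding of both $\bD_1$ and $\bD_2$ only if the two intervals coincide. Your fallback, \emph{some further unfolding of it is}, is just another instance of the lemma.

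\emph{Route B.} The reformulation is correct. Unfoldings of $\bC$ fixing $v_1,\dots,v_n$ are exactly the $\pm1$-walks from $0$ to $m$ that cross each fixed vertebra once, hairless precisely at turning points. Unfolding an unfolding means precomposing with such a walk. The problem is the construction: a walk that \emph{zigzags enough times} in each block is not a common refinement, for the same reason as above. For example, $w=0,1,2,1,2$ forces a turn at height $1$ in every refinement, and no back-and-forth between the block ends has one. Finding walks $u_1,u_2$ with $w_1\circ u_1=w_2\circ u_2$ is a discrete mountain-climbing problem and is the entire content of the lemma, so an actual argument is needed at this point.

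\emph{How the paper does it.} The paper inducts.
\begin{enumerate}
\item If neither $\bC_1$ nor $\bC_2$ revisits $a_0$, strip the first vertebra and recurse.
\item Otherwise, say $\bC_1$ returns to $a_0$ at $a^1_k$, and $a_\ell$ is the highest joint it visits before that. Then no vertebra below $a_\ell$ is fixed.
\item So one may unfold $\bC_2$ at $(a^2_0,a^2_{\ell(2)})$, where $a^2_{\ell(2)}$ is its first visit to $a_\ell$. Now both caterpillars begin with an excursion from $a_0$ up to $a_\ell$ and back, and one recurses on the three matching pieces.
\end{enumerate}

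\emph{An alternative fix.} Consider the graph on pairs $(s,t)$ with $w_1(s)=w_2(t)$, where two pairs are adjacent if both coordinates differ by exactly one. Every vertex has even degree except $(0,0)$ and $(m(1),m(2))$, so a simple path joins them. That path yields $u_1,u_2$, and because it is simple it crosses each fixed vertebra exactly once.
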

\begin{proof}
	Let $\bC$ be a caterpillar with spine $(a_0,\dots,a_m)$ and vertebrae $v_1=(a_0,a_1),\dots,v_m=(a_{m-1},a_m)$, let $0\leq i(1)<\dots<i(n)\leq m$, and let $\bC_1$ and $\bC_2$ be unfoldings of $\bC$ fixing $v_{i(1)},\dots,v_{i(n)}$. Let $h_1\colon \bC_1\to\bC$ and $h_2\colon \bC_2\to\bC$ be the homomorphisms from the definition of unfoldings and let $(a^1_0,\dots,a^1_{m(1)})$ and $(a^2_0,\dots,a^2_{m(2)})$ be the spines of $\bC_1$ and $\bC_2$, respectively. 
	We proceed with an induction on $m$. If $m=0$, then $\bC_1=\bC_2=\bC$ and $\bC$ is a common unfolding of both. For $m>0$ there are two cases to consider, see Figure~\ref{fig:commonUnfoldingCases} for a visualization. 
	\begin{figure}
		\centering
		\begin{tikzpicture}
			
			\node at (0,0) {\begin{tikzpicture}[scale=0.9,
					bullet/.style={circle, fill, inner sep=1.5pt}
					]
					
					% --- Left path ---
					\coordinate (L0) at (0.5,0);
					\coordinate (L1) at (0.5,0.7); 
					\coordinate (L2) at (2.5,3);
					\node at (1.5,-1) {$\bC_1$};
					
					\path (L0) edge (L1);
					\draw plot [smooth] coordinates {(L1) (0.75,1.6) (1.2,0.7) (1.7,2.4) (2.2,1.7)  (L2)};
					
					\node[bullet, label=left:$a^1_0$] at (L0) {};
					\node[bullet, label=left:$a^1_{1}$] at (L1) {};
					\node[bullet, label=right:$a^1_{m(1)}$] at (L2) {};
					
					% --- Right path ---
					\coordinate (R0) at (3,0);
					\coordinate (R1) at (3.0,0.7);   % same height as L1
					\coordinate (R2) at (5.0,3);
					\node at (4,-1) {$\bC_2$};
					
					\path (R0) edge (R1);
					\draw plot [smooth] coordinates {(R1) (3.3,2.5) (3.7,1.2) (4.2,2.8) (4.6,2.3)  (R2)};
					
					\node[bullet, label=left:$a^2_0$] at (R0) {};
					\node[bullet, label=left:$a^2_{1}$] at (R1) {};
					\node[bullet, label=right:$a^2_{m(2)}$] at (R2) {}; % unlabeled top endpoint
					
			\end{tikzpicture}};

			\node at (7.5,0) {
				\begin{tikzpicture}[scale=0.9,
					bullet/.style={circle, fill, inner sep=1.5pt}
					]
					
					% --- Left path ---
					\coordinate (L0) at (0.5,0);
					\coordinate (L1) at (1.5,1.5);   % a_k (aligned height)
					\coordinate (L2) at (2.5,0);
					\coordinate (L3) at (4.5,3);
					\node at (2.5,-1) {$\bC_1$};
					
					\draw plot [smooth] coordinates {(L0) (0.75,1.3) (1.2,0.5)  (L1)};
					\draw plot [smooth] coordinates {(L1) (1.8,0.7) (2.2,1.1)  (L2)};
					\draw plot [smooth] coordinates {(L2) (2.9,1.4) (3.2,1.0) (3.5,2.4) (4.2,1.7)  (L3)};
					
					\node[bullet, label=left:$a^1_0$] at (L0) {};
					\node[bullet, label=above:$a^1_{\ell(1)}$] at (L1) {};
					\node[bullet, label=left:$a^1_k$] at (L2) {};
					\node[bullet, label=right:$a^1_{m(1)}$] at (L3) {};
					
					% --- Right path ---
					\coordinate (R0) at (5,0);
					\coordinate (R1) at (6,1.5);   % same height as L1
					\coordinate (R2) at (7.5,3);
					\node at (6.25,-1) {$\bC_2$};
					
					\draw plot [smooth] coordinates {(R0) (5.25,1.2) (5.8,0.6)  (R1)};
					\draw plot [smooth] coordinates {(R1) (6.3,2.7) (6.7,1.0) (7.0,2.8) (7.25,2.3)  (R2)};
					
					\node[bullet, label=left:$a^2_0$] at (R0) {};
					\node[bullet, label=left:$a^2_{\ell(2)}$] at (R1) {};
					\node[bullet, label=right:$a^2_{m(2)}$] at (R2) {}; % unlabeled top endpoint
					
			\end{tikzpicture}};
			
		\end{tikzpicture}
		\caption{The two cases in the proof of Lemma~\ref{lem:commonUnfoldingExists}. The first case is on the left side and second is on the right.}
		\label{fig:commonUnfoldingCases}
	\end{figure}
	In the first case $h_1^{-1}(a_0)=\{a^1_0\}$ and $h_2^{-1}(a_0)=\{a^2_0\}$. Then $h_i(a^i_k)\in\{a_1,\dots,a_m\}$ for all $i\in\{1,2\}$ and all $k\in[m_i]$. Therefore $\bC_1[1,m(1)]$ and $\bC_2[1,m(2)]$ are unfoldings of $\bC[1,m]$ fixing $v_{i(1)},\dots,v_{i(n)}$. By induction there is a common unfolding $\bC'$ of $\bC_1[1,m(1)]$ and $\bC_2[1,m(2)]$ that fixes $v_{i(1)},\dots,v_{i(n)}$. Replacing the interval $\bC[1,m]$ in $\bC$ by $\bC'$ yields the desired common unfolding of $\bC_1$ and $\bC_2$. 
	The second case is that either $|h_1^{-1}(a_0)|\geq2$ or $|h_2^{-1}(a_0)|\geq2$. Assume without loss of generality that $h_1^{-1}(a_0)=\{a^1_0,a^1_k,\dots,\}$. Let $\ell$ be the largest number in $[m]$ such that there exists an $\ell(1)\in [0,k]$ with $h_1(a^1_{\ell(1)})=a_\ell$. Let $\ell(2)$ be the smallest number in $[0,m(2)]$ with $h_2(a^2_{\ell(2)})=a_\ell$. Note that by definition  $v_{i(1)},\dots,v_{i(n)}$ must be in the interval $\bC[\ell,m]$ and that
	\begin{itemize} 
		\item $\bC_1[0,\ell(1)]$ and $\bC_2[0,\ell(2)]$ are unfoldings of $\bC[0,\ell]$, 
		\item $\bC_1[\ell(1),k]$ and $\bC_2[\ell(2),0]$ are unfoldings of $\bC[\ell,0]$, and
		\item $\bC_1[\ell(1),m(1)]$ and $\bC_2[0,m(2)]$ are unfoldings of $\bC[0,m]$ fixing $v_{i(1)},\dots,v_{i(n)}$.
	\end{itemize}
	By induction there are common unfoldings $\bC'_1,\bC'_2,\bC'_3$. By taking $\bC_1$ and replacing 
	\begin{align*}
		&\text{$\bC_1[0,\ell(1)]$ by $\bC'_1$,}&&
		\text{$\bC_1[\ell(1),k]$ by $\bC'_2$,} &&\text{and}&&
		\text{$\bC_1[k,m(1)]$ by $\bC'_3$}
	\end{align*}
	we obtain a caterpillar $\bC'$. By construction $\bC'$ is the desired  common unfolding of $\bC_1$ and $\bC_2$ fixing $v_{i(1)},\dots,v_{i(n)}$.
	Finally, observe that if $v_{i(1)}=v_1$, then $h_1^{-1}(a_0)=\{a^1_0\}$ and $h_2^{-1}(a_0)=\{a^2_0\}$. Hence, we are in the first case and the constructed common unfolding fixes $v_1$.
\end{proof}

%\subsection{From Foldings to Unfoldings}

\begin{lemma}\label{lem:closedUnderFoldingsImpliesClosedUnderUnfoldings}
    Let $n\in\N$ and $\bB$ be a finite structure whose CSP is closed under $(n+1)$-foldings of caterpillars. Then $\overline{\Csp(\bB)}$ is closed under $n$-fixed unfoldings of caterpillars.
\end{lemma}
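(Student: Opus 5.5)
We argue by contraposition. Let $\bC$ be a caterpillar in $\overline{\Csp(\bB)}$, i.e.\ $\bC\not\to\bB$, with spine $(a_0,\dots,a_m)$ and vertebrae $v_1,\dots,v_m$. We must find $n$ vertebrae such that every unfolding of $\bC$ fixing them still has no homomorphism to $\bB$. Suppose this fails. Then for every choice of $n$ vertebrae $v_{i(1)},\dots,v_{i(n)}$ there is an unfolding of $\bC$ fixing them that maps to $\bB$. We aim to derive $\bC\to\bB$, a contradiction, via Lemma~\ref{lem:numbersGameCaterpillarVersion} applied with $n+1$ in place of $n$.

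The key step is to cover the spine by $n+1$ consecutive intervals $[i_1,j_1],\dots,[i_{n+1},j_{n+1}]$ with $j_\ell=i_{\ell+1}$, $i_1=0$ and $j_{n+1}=m$. For each $\ell$ we need an unfolding $\bC_\ell\to\bB$ that fixes every vertebra outside $\bC[i_\ell,j_\ell]$. A single counterexample unfolding only guarantees that $n$ prescribed vertebrae are fixed, not all vertebrae outside an interval. I would therefore proceed greedily from left to right, using the failure hypothesis for many choices of fixed vertebrae and combining the resulting unfoldings with Lemma~\ref{lem:commonUnfoldingExists}. Since $\Csp(\bB)$ is closed under taking further unfoldings (each unfolding maps homomorphically onto the original), common unfoldings of witnesses remain in $\Csp(\bB)$.

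More concretely, I would run an induction on $n$ in the form of a stronger statement. Suppose that for every choice of $n$ vertebrae some unfolding fixing them maps to $\bB$. Then there are cut points $0=c_0\le c_1\le\dots\le c_{n+1}=m$ and homomorphic unfoldings $\bC_\ell$ that are ``local'' to $\bC[c_{\ell-1},c_\ell]$. To find them, fix a choice $v_{i(1)},\dots,v_{i(n)}$ and look at where the witness unfolding actually unfolds, i.e.\ at the fibres of $h\colon\bC'\to\bC$. Choose $c_1$ as the largest index such that the fixed prefix can be unfolded locally. Then vary the remaining $n-1$ fixed vertebrae to the right of $c_1$ and recurse. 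The local pieces obtained from different witnesses are glued along the common cut joints, which works because these joints are joints in all the unfoldings involved. This is where Lemma~\ref{lem:commonUnfoldingExists} is used, to align the unfoldings on overlapping intervals.

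Once such $\bC_1,\dots,\bC_{n+1}$ exist, Lemma~\ref{lem:numbersGameCaterpillarVersion} with $n+1$ intervals gives $\bC\to\bB$, which is the contradiction. I expect the localization step to be the main obstacle. An arbitrary witness fixing $n$ given vertebrae may unfold across those fixed vertebrae and need not decompose into interval-local parts. It may therefore be necessary to first pass to a common refinement and argue, as in the second case of the proof of Lemma~\ref{lem:commonUnfoldingExists}, that the unfolding restricted to the part between consecutive fixed vertebrae is again an unfolding of that interval.
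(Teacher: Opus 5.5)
Your final step is fine: if you had unfoldings $\bC_1,\dots,\bC_{n+1}\to\bB$ of $\bC$, each fixing every vertebra outside one of $n+1$ consecutive intervals, Lemma~\ref{lem:numbersGameCaterpillarVersion} would give $\bC\to\bB$. The gap is that such localized witnesses for $\bC$ itself cannot be obtained the way you propose.

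A witness local to $\bC[c_{\ell-1},c_\ell]$ must fix every vertebra outside that interval, usually far more than $n$. Your hypothesis only covers sets of $n$ fixed vertebrae. Lemma~\ref{lem:commonUnfoldingExists} cannot close this gap: a common unfolding unfolds wherever any of the given witnesses unfolds, so it makes witnesses less local, never more. Gluing pieces of different witnesses along a cut joint may well give an unfolding of $\bC$, but not one that maps to $\bB$, because the homomorphisms of the different witnesses need not agree on that joint. Producing such agreement is exactly what closure under foldings is for, and your localization step never uses it.

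In fact your ``stronger statement'', which does not use the folding hypothesis, is false. Take $n=1$, distinct binary symbols $E_1,E_2,E_3$, unary $S,T$, and let $\bC$ be given by $S(a_0)\wedge E_1(a_0,a_1)\wedge E_2(a_1,a_2)\wedge E_3(a_2,a_3)\wedge T(a_3)$, with $v_i=(a_{i-1},a_i)$. Let $\bB$ be the disjoint union of three unfoldings of $\bC$: the $(a_1,a_3)$-unfolding, the $(a_0,a_2)$-unfolding, and the caterpillar obtained by unfolding first $v_1$ and then $v_3$. These fix $v_1$, $v_3$ and $v_2$ respectively, so your hypothesis holds. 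However, no unfolding of $\bC$ fixing $\{v_2,v_3\}$, and none fixing $\{v_1,v_2\}$, maps to $\bB$. (In every component, the $E_2$-predecessor of the $E_3$-predecessor of the $T$-vertex, if it exists, lies in a different $E_1$-component than the $S$-vertex; the other case is symmetric.) Hence no cut $c_1$ has localized witnesses on both sides.

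The missing idea is to interleave the folding closure with an induction that yields witnesses for ever larger sets of fixed vertebrae, and to localize relative to a partially unfolded caterpillar rather than relative to $\bC$. The paper proves that \emph{every} set $I$ of vertebrae, of any size, is fixed by some unfolding mapping to $\bB$, by induction along the lexicographic order of characteristic vectors.

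If $I=\{v_{i(1)},v_{i(2)},\dots\}$ has at least $n+1$ elements, define for $k\in[n+1]$
\[
I_k=(I\setminus\{v_{i(k)}\})\cup\{v_{i(k)+1},\dots,v_m\}.
\]
These sets are lexicographically smaller than $I$, so they have witnesses $\bC_{I_k}$. Each $\bC_{I_k}$ keeps $v_{i(1)},\dots,v_{i(k-1)}$ and everything after $v_{i(k)}$ fixed.

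Using Lemma~\ref{lem:commonUnfoldingExists} gap by gap, one builds a common unfolding $\bC'$ that fixes $I$. For each $k$ one also builds an unfolding $\bC'_k$ of $\bC'$ that differs from $\bC'$ only in the gap ending at $v_{i(k)}$ and is simultaneously an unfolding of $\bC_{I_k}$, so it maps to $\bB$. Lemma~\ref{lem:numbersGameCaterpillarVersion} is then applied to $\bC'$, not to $\bC$. For $I$ equal to the set of all vertebrae, the only such unfolding is $\bC$ itself.

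Whatever unfolding $\bC_{I_k}$ performs in the other gaps is absorbed into $\bC'$. That is why localization works relative to $\bC'$, whereas, as the example shows, it can fail relative to $\bC$.
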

\begin{proof}
    Let $\bC$ be a caterpillar with spine $(a_0,\dots,a_m)$ and vertebra $v_1=(a_0,a_1),\dots,v_m=(a_{m-1},a_m)$ such that for all $1\leq i_1<\dots<i_n\leq m$ there is an unfolding $\bC_{\{v_1,\dots,v_n\}}$ of $\bC$  fixing $v_{i_1},\dots,v_{i_n}$ such that $\bC_{\{v_1,\dots,v_n\}}\to\bB$. As discussed in Example~\ref{exa:simpleNumbersGameMotivation}, it suffices to show $\bC\to\bB$ to prove the lemma. Let $V\coloneqq\{v_1,\dots,v_m\}$ and $\mathcal I$ be the set of all subsets $I$ of $V$ such that there is an unfolding $\bC_I$ of $\bC$ that fixes all vertebrae in $I$ and $\bC_I\to\bB$. Note that $\mathcal I$ is closed under taking subsets.
    %We order the subsets of $V$ in the following way, for two distinct sets $I,I'\subeteq V$ we have $I<I'$ if and only if the $v_i$ with the smallest $i$ in the symmetric difference of $I$ and $I'$ is in $I$.
    For each $I\subseteq V$ the \emph{characteristic tuple} $t_I\in\{0,1\}^m$ is the tuple that has for all $i\in[m]$ a 1 in the $i$-th entry if and only if $v_i\in I$. We order the subsets of $V$ by comparing their characteristic tuples lexicographically. 
    We will show by induction that $\mathcal I$ contains all subsets of $V$.
    
    Let $n'\in[0,m]$ and $1\leq i(1)<\dots<i(n')\leq m$. Define $I\coloneqq\{v_{i(1)},\dots,v_{i(k)}\}$. If $n'\leq n$, then $I\in\mathcal I$ by choice of $\bC$. 
    If $n'\geq n+1$, then define \[I_1\coloneqq {I\setminus\{v_{i(1)}\}}\cup\{v_{i(1)+1},\dots,v_m\},\dots, I_{n+1}\coloneqq {I\setminus\{v_{i({n+1})}\}}\cup\{v_{i({n+1})+1},\dots,v_m\}.\] 
    By definition $t_{I_1},\dots,t_{I_{n+1}} <_{\text{lex}} t_I$. Hence, by induction assumption, there are $\bC_{I_1},\dots,\bC_{I_{n+1}}$ witnessing that $I_1,\dots, I_{n+1}\in\mathcal I$, respectively. Let \[(a^1_0,\dots,a^1_{m(1)}),\dots,(a^{n+1}_0,\dots,a^{n+1}_{m({n+1})})\] be the spines of $\bC_{I_1},\dots,\bC_{I_{n+1}}$, respectively. 
    For $k\in [2,n+1]$ let $i(k,1)$ be the smallest number in $[m(k)]$ such that $a^k_{i(k,1)}=a_{i(1)}$. Observe that $\bC_{I_k}[0,i(k,1)]$ is an unfolding of $\bC[0,i(1)]$ that fixes $v_{i(1)}$. By Lemma~\ref{lem:commonUnfoldingExists} there is a caterpillar that is a common unfolding of $\bC_{I_2}[0,i(2,1)],\dots,\bC_{I_{n+1}}[0,i({n+1},1)]$ fixing ${v_{i(1)}}$. We denote this caterpillar by $\bC'[0,i'(1)]$, where $i'(1)$ is the number of vertebrae of $\bC'[0,i'(1)]$. Note that $\bC'[0,i'(1)]$ is in particular an unfolding of $\bC[0,i(1)]$ that fixes $v_{i(1)}$.
    Let $i(1,1)$ be the biggest number in $[m(1)]$ such that $a^1_{i(1,1)}=a_{i(1)}$. Since $\bC_{I_1}$ is an unfolding that fixes $v_{i(1)+1}$ we have that $a_{i(1)+1}$ does not occur in $\bC_{I_1}[0,i(1,1)]$. Hence, $\bC_{I_1}[0,i(1,1)]$ is an unfolding of $\bC[0,i(1)]$ (that does not necessarily fix $v_{i(1)}$). By Lemma~\ref{lem:commonUnfoldingExists} there is a caterpillar $\bC''[0,i''(1)]$ that is a common unfolding of $\bC_{I_1}[0,i(1,1)]$ and $\bC'[0,i'(1)]$, where $i''(1)$ is the number of vertebrae of $\bC''[0,i''(1)]$.
    We continue in this fashion to define caterpillars $\bC'[i'(1),i'(2)],\dots,\bC'[i'(n+1),m]$ and $\bC''[i''(1),i''(2)],\dots,\bC''[i''(n+1),m]$, where analogous to the case $\ell=1$ we have that $i(k,\ell)$ is the smallest number in $[m(k)]$ such that $a^k_{i(k,\ell)}=a_{i(\ell)}$ if $\ell\neq k$ and it is the biggest such number if $k=\ell$. Observe that for $k\neq\ell$ the fact that $\bC_{I_k}$ fixes $v_{i(\ell-1)}$ and $v_{i(\ell)}$ implies that $\bC'[i'(k,\ell-1),i'(k,\ell)]$ is an unfolding of $\bC_{I_k}[i(\ell-1),i(\ell)]$ that fixes $v_{i(\ell)}$. 
    Similarly, the fact that $\bC_{I_k}$ fixes $v_{i(k-1)}$ and $v_{i(k)+1}$ implies that $\bC''[i''(k,k-1),i''(k,k)]$ is an unfolding of $\bC_{I_k}[i(k-1),i(k)]$. 

    Note that $\bC'[0,i'(1)],\dots,\bC'[i'(n+1),m]$ uniquely defines a caterpillar $\bC'$. Analogously, $\bC''[0,i''(1)],\dots,\bC''[i''(n+1),m]$ defines $\bC''$. 
    The caterpillar $\bC'$ is an unfolding of $\bC$ that fixes all vertebrae in $I$. 
    For $k\in[n+1]$ the caterpillar $\bC'_{k}$ is obtained from $\bC'$ by replacing the interval $\bC'[i'(k,k-1),i'(k,k)]$ by $\bC''[i''(k,k-1),i''(k,k)]$. Observe that $\bC'_{k}$ is an unfolding of $\bC_{I_k}$ (that fixes $I\setminus\{v_k\}$). Hence $\bC'_k\to\bC_{I_k}\to\bB$. By definition $\bC'_k$ is an unfolding of $\bC'$ that fixes all vertebra of $\bC'$ that are not in the interval $\bC'[i'(k,k-1),i'(k,k)]$. Hence, Lemma~\ref{lem:numbersGameCaterpillarVersion} implies $\bC'\to\bB$.
    Therefore, $I\in\mathcal I$.

    This induction proof shows in particular that there is an unfolding $\bC_V$ of $\bC$ fixing all vertebrae in $V$ witnessing that $V\in\mathcal I$. By definition of $\mathcal I$ there is a homomorphism from $\bC_V$ to $\bB$. Note that the only unfolding of $\bC$ that fixes all vertebrae of $\bC$ is $\bC$ itself. Hence, $\bC_V=\bC$ and $\bC\to\bB$, as desired.
    %Note that $\bC_{\{v_1,\dots,v_m\}}$ is unique and equal to $\bC$. 
\end{proof}

%------------------------------------------------------------

\section{Monadic Datalog}
Let $\bB$ be a structure and $\Pi$ be a monadic Datalog program that solves $\Csp(\bB)$. It seems intuitively clear that for every structure $\bA$ with $\bA\not\to\bB$ there is a derivation of $\Pi$ that derives the goal predicate and is in some sense connected. In this section we will formalize this intuition.
A rule $\phi_0 \; {:}{-} \; \phi$ of a Datalog program is called \emph{injective} if the $\tau$-reduct of the canonical database of $\phi$ is injective, where $\tau$ is the set of EDBs. 
A rule $\phi_0 \; {:}{-} \; \phi$ is called \emph{connected} if \begin{itemize}
	\item the canonical databases of $\phi$ is connected,
	\item every variable in $\phi_0$ occurs in $\phi$, 
	\item there is no 0-ary IDB occurring in $\phi$, and
	\item if $\phi_0$ is 0-ary, then $\phi_0$ is the goal predicate. % or has arity at least one.
\end{itemize} %Note that whenever a Datalog program derives the goal predicate on a structure $\bA$ with a derivation that uses only connected rules and all rules were neccessary, then all IDBs have been added to a connected component (whats with 0 ary IDBs?)
Let $\Pi$ be a Datalog program and $\bA$ be a finite structure. A derivation $\bA_0\vdash_{R_0} \dots\vdash_{R_{n-1}} \bA_n$ of $\Pi$ on $\bA$ that derives the IDB $P$ on the tuple $t$ is \emph{inefficient} if there is an $m<n$, $0\leq i_1<\dots<i_{m-1}<n$ and a derivation $\bA'_0\vdash_{R_{i_0}} \dots\vdash_{R_{i_{m-1}}} \bA'_m$ of $\Pi$ on $\bA$ that also derives $P$ on $t$. 
A derivation is called \emph{efficient} if it is not inefficient. Note that for every derivation of $\Pi$ on $\bA$ deriving $P$ on $t$ there is an efficient derivation of $\Pi$ on $\bA$ deriving $P$ on $t$.
It is clear from these definitions that the following is true.
\begin{lemma}
	%conncted rules + no 0 ary IDBs implies goal derivation on connected components
	Let $\Pi$ be a Datalog program and $\bA$ be a finite structure. If $\bA_0\vdash_{R_0} \dots\vdash_{R_{n-1}} \bA_n$ is an efficient derivation of $\Pi$ on $\bA$ with connected rules $R_0,\dots,R_{n-1}$,
	then there is a set $S\subseteq A$ such that 
	\begin{itemize}
		\item $\bA[S]$ is connected and
		\item $\bA_0[S]\vdash_{R_0} \dots\vdash_{R_{n-1}} \bA_n[S]$ is a derivation of $\Pi$ on $\bA[S]$,
	\end{itemize}
	where $\bA[S]$ is the substructure of $\bA$ induced by $S$.
\end{lemma}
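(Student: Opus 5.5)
We take $S$ to be the set of elements of $\bA$ that are touched by the derivation, and then verify the two required properties. For each $k\in[0,n-1]$ let $s_k$ be the satisfying assignment used in step $k$, so $\bA_{k+1}$ arises from $\bA_k$ by adding $s_k(\phi_0)$ for the head $\phi_0$ of $R_k$. Let $S_k$ be the image under $s_k$ of all variables of $R_k$, and put $S\coloneqq\bigcup_k S_k$. If $n=0$ the statement is vacuous, because a derivation with no steps derives nothing. So we assume $n\geq1$.

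The second item is essentially immediate. Every body atom of $R_k$ evaluated under $s_k$ is an EDB tuple or an IDB tuple whose entries lie in $S_k\subseteq S$. Hence $s_k$ is still a satisfying assignment in $\bA_k[S]$, and the added tuple $s_k(\phi_0)$ also has its entries in $S$. The IDB relations of $\bA_k[S]$ are exactly the restrictions of those of $\bA_k$. Therefore $\bA_0[S]\vdash_{R_0}\dots\vdash_{R_{n-1}}\bA_n[S]$ is a derivation on $\bA[S]$, since the $\tau$-reduct of $\bA_0[S]$ is $\bA[S]$. It derives the same IDB on the same tuple.

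For the first item we first note that each $S_k$ induces a connected piece of $\bA$ (with the usual convention for an isolated point). This holds because $R_k$ is connected: the canonical database of its body is connected, and every head variable occurs in the body. Since homomorphic images of connected structures are connected, connectivity through EDB tuples, or through shared elements of IDB tuples, is preserved. The subtle point is that the body may link variables only through an IDB atom, whereas connectivity in $\bA[S]$ must be witnessed by EDB tuples. We handle this by induction on $k$. We show that, for each tuple $t$ that is added to an IDB at some step $k$, the set of all $S_j$ with $j\leq k$ on which $t$ depends is connected in $\bA$ via EDB tuples, and contains all entries of $t$. The dependence is defined recursively: $t$ depends on $S_k$ and on whatever the body IDB atoms of $R_k$ depend on. An IDB atom in the body of $R_k$ was added at an earlier step, so by induction its supporting set is EDB-connected and contains its entries. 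Gluing these supporting sets along the body of $R_k$ therefore yields an EDB-connected set. Note that no 0-ary IDB occurs in a body, so every glued atom actually has an entry to glue along.

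Finally, efficiency guarantees that every step contributes to the final tuple. If some step $R_k$ were not in the dependency closure of the last step, deleting all such steps would leave a shorter subsequence that is still a derivation deriving $P$ on $t$, which contradicts efficiency. Thus $S$ equals the supporting set of the final derived tuple, and this set is connected. This holds also when $P$ is the goal predicate, since the goal rule is connected. The main obstacle I expect is stating the dependency closure precisely and checking that deleting steps outside it yields a valid derivation. This works because each remaining step only uses IDB tuples that lie in the closure.
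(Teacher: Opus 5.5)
Your proof is correct and is exactly the argument the paper leaves implicit when it says the lemma is "clear from these definitions". You take $S$ as the union of the images of the rule assignments, use efficiency to force every step into the dependency closure of the last step, and show by induction that each derived IDB tuple has an EDB-connected support containing its entries. That induction is precisely where connected bodies, head variables occurring in the body, and the absence of $0$-ary IDBs in bodies are needed.
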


We show that a monadic Datalog program that solves a CSP of a finite structure can basically always restrict to injective and connected rules, when deriving the goal predicate.

\begin{lemma}\label{lem:DLinjective}
	Let $\bB$ be a finite structure and $\Pi$ be a Datalog program that solves $\Csp(\bB)$. Let $\Pi'$ be the Datalog program obtained from $\Pi$ by removing all  non-injective rules. Then $\Pi'$ solves $\Csp(\bB)$.
\end{lemma}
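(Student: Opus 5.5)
Both directions are handled separately. Soundness of $\Pi'$ is immediate: every derivation of $\Pi'$ is a derivation of $\Pi$. So if $\bA\to\bB$, then $\Pi$ does not derive the goal predicate on $\bA$, and hence neither does $\Pi'$. The real content is completeness. If $\bA\not\to\bB$, we must show that $\Pi'$ derives the goal predicate on $\bA$. We reduce to an injective structure using the sparse incomparability theorem, in the same way as in Lemma~\ref{lem:inj}.

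\textbf{Step 1: an injective high-girth cover.} Let $\bA$ be finite with $\bA\not\to\bB$. By the sparse incomparability theorem (applied with $|B|$ and with girth larger than the number of variables $k$ in any rule of $\Pi$), there is a finite structure $\bA^*$ such that:
\begin{itemize}
\item $\bA^*\to\bA$;
\item $\bA^*\not\to\bB$;
\item every substructure of $\bA^*$ on at most $k$ elements is a forest of injective tuples (in particular, all tuples of $\bA^*$ are injective).
\end{itemize}
Since $\Pi$ solves $\Csp(\bB)$ and $\bA^*\not\to\bB$, the program $\Pi$ derives the goal predicate on $\bA^*$.

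\textbf{Step 2: the derivation on $\bA^*$ uses only injective rules.} Each rule application in this derivation instantiates the body $\phi$ via an assignment $s$. Since $\bA^*$ is injective, every EDB atom $R(x_1,\dots,x_r)$ of $\phi$ is mapped to an injective tuple. It remains to check that the $\tau$-reduct of the canonical database of $\phi$ is injective as a structure, i.e.\ that no EDB atom of the rule itself has repeated variables. This holds: a repeated variable in an atom would be sent by $s$ to a tuple with a repeated entry, and $\bA^*$ has no such tuples. Hence every rule used in the derivation is injective, and so the derivation is a derivation of $\Pi'$ on $\bA^*$.

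\textbf{Step 3: transfer back to $\bA$.} Datalog derivations are preserved under homomorphisms: if $g\colon\bA^*\to\bA$, then composing each assignment with $g$ turns each step into a valid rule application on $\bA$ (it adds the $g$-image of the derived tuple). Thus $\Pi'$ derives the goal predicate on $\bA$. This is monotonicity of Datalog, and it is routine.

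Step 1 is the main obstacle, in the sense that it needs the sparse incomparability theorem in its form for relational structures. For injectivity alone, large girth is enough, since girth greater than $1$ already forbids repeated entries in tuples. If that version is not available, I would fall back on the construction behind Lemma~\ref{lem:inj}. That lemma gives the dual fact that obstructions may be taken injective. From it one would argue via the canonical Datalog program instead: an injective obstruction $\bF\to\bA$ with $\bF\not\to\bB$ lets $\Pi$ derive the goal predicate on $\bF$, Step 2 applies to $\bF$ verbatim, and Step 3 transfers the derivation along $\bF\to\bA$. This route needs only Lemma~\ref{lem:inj} applied to the duality pair $(\{\bF\mid \bF\not\to\bB\},\bB)$, which is trivially a duality pair. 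So the cleanest plan is to use Lemma~\ref{lem:inj} directly in place of Step 1.
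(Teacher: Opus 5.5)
Your proof is correct and, in its final form (applying Lemma~\ref{lem:inj} to the trivial duality pair of all finite structures without a homomorphism to $\bB$, rather than invoking sparse incomparability yourself), it is the paper's argument. The paper likewise takes an injective $\bF\to\bA$ with $\bF\not\to\bB$, observes that every rule applied on $\bF$ must be injective, and pushes the derivation forward along $\bF\to\bA$. The mention of the canonical Datalog program in your fallback is unnecessary: only the fact that $\Pi$ solves $\Csp(\bB)$ is used.
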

\begin{proof}
	Let $\bA$ be an instance of $\Csp(\bB)$. First, suppose $\bA\to\bB$. Since $\Pi$ solves $\Csp(\bB)$, it can not derive the goal predicate on $\bA$. As $\Pi'$ has fewer rules than $\Pi$, it also can not derive the goal predicate on $\bA$.
	Now consider the case $\bA\not\to\bB$. Clearly, the set of all finite structures $\bA'$ with $\bA'\not\to\bB$ is an obstruction set for $\bB$. Hence, by Lemma~\ref{lem:inj}, there exists an injective structure $\bA'$ with $\bA'\to\bA$ and $\bA'\not\to\bB$. Since $\Pi$ solves $\Csp(\bB)$, it can derive the goal predicate on $\bA'$. Note that, since $\bA'$ is injective, all rules of this derivation are injective. Hence, this derivation is also a derivation of $\Pi'$ that derives the goal predicate. Applying the homomorphism from $\bA'$ to $\bA$ to this derivation yields a derivation of $\Pi'$  on $\bA$ that derives the goal predicate.
\end{proof}

\begin{lemma}\label{lem:DLconnected}
	Let $\bB$ be a (not necessarily finite) structure and $\Pi$ be a monadic Datalog program that solves $\Csp(\bB)$. 
	Then there is a monadic Datalog program $\Pi'$ such that
	\begin{itemize}
		\item $\Pi'$ solves $\Csp(\bB)$, 
		%\item for every finite structure $\bA$ with $\bA\not\to\bB$ there is a derivation of $\Pi'$ on a connected induced substructure $\bA'$ of $\bA$ that derives the goal predicate and all rules applied in this derivation are connected and injective, and
		\item $\Pi'$ is obtained from $\Pi$ %by removing all non-injective rules and 
		replacing each disconnected rule $\phi_0 \; {:}{-} \;\phi_1\wedge\dots\wedge\phi_n$ with a suitable set of connected rules of the forms \begin{align*}&\phi_0 \; {:}{-} \;\phi_{i_1}\wedge\dots\wedge\phi_{i_m}&&\text{ and }&&G \; {:}{-} \;\phi_{i_1}\wedge\dots\wedge\phi_{i_m}.\end{align*}
	\end{itemize}
	If $\Pi$ is symmetric, then the suitable sets can be chosen in such a way that $\Pi'$ is also symmetric.
\end{lemma}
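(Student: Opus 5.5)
We split each disconnected rule along the connected components of its body, and choose, for each component, whether it should keep producing the head or instead feed the goal directly. Let $R$ be a disconnected rule $\phi_0 \;{:}{-}\; \phi_1\wedge\dots\wedge\phi_n$, and let $\mathbf{D}$ be the canonical database of its body.

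First, since $\Pi$ is monadic, the head $\phi_0$ has at most one variable. Suppose $\phi_0$ is unary, say $P(x)$. If $x$ occurs in the body, let $K_x$ be the connected component of $\mathbf{D}$ containing $x$. Let $K_1,\dots,K_r$ be the other components, and write $\psi_K$ for the conjunction of the atoms of $\phi_1,\dots,\phi_n$ that belong to component $K$. We replace $R$ by the rule $P(x)\;{:}{-}\;\psi_{K_x}$ together with the goal rules $G\;{:}{-}\;\psi_{K_j}$ for $j\in[r]$.

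Second, we handle the remaining cases the same way.
\begin{itemize}
\item If $\phi_0$ is unary and $x$ does not occur in the body, we replace $R$ by the goal rules $G\;{:}{-}\;\psi_{K}$, one for each component $K$.
\item If $\phi_0$ is $0$-ary, we replace $R$ by the goal rules $G\;{:}{-}\;\psi_K$ in all cases. A non-goal $0$-ary IDB can be eliminated by substituting the bodies of its defining rules wherever it occurs, and likewise a $0$-ary IDB occurring in a body.
\end{itemize}
If the body is empty we keep the rule, or drop it if it is harmful.

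The key claim is that the new rules are sound: if $\Pi'$ derives $G$ on $\bA$, then $\bA\not\to\bB$. We argue by contradiction. Suppose $\bA\to\bB$ and $\Pi'$ derives $G$ on $\bA$. Consider a goal rule $G\;{:}{-}\;\psi_{K}$ that fires on $\bA$. We build a structure $\bA^*$ consisting of $\bA$ together with a disjoint copy of enough of the other components of $R$'s body. This makes the original rule $R$ fire on $\bA^*$, with its IDB-atoms supplied by derivations inside $\bA^*$, and so $\Pi$ derives $G$ on $\bA^*$.

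The difficulty is that the other components need their IDB-atoms derived in order for $R$ to fire. To handle this, we use the freedom in choosing the rule set: we only add a goal rule for component $K$ when $\Csp(\bB)$ actually forces it. A clean way to make this choice is to declare $G\;{:}{-}\;\psi_K$ whenever the canonical database of $\psi_K$, together with $\Pi$-derivability of its IDB atoms, is not homomorphic to $\bB$ in the appropriate sense.

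Completeness is the other direction: if $\bA\not\to\bB$, then $\Pi'$ derives $G$ on $\bA$. Take a derivation of $\Pi$ on $\bA$ and look at each application of a disconnected rule. Its components map into $\bA$. If a non-head component $K_j$ maps onto a part that $\bB$ cannot satisfy in context, the goal rule for $K_j$ fires. Otherwise the connected rule for $K_x$ reproduces the derived atom.

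Rather than rely on this case distinction, I will use the following cleaner route. Every non-head component $K_j$ that matches somewhere in $\bA$ with its IDBs derivable witnesses that $\bA$ contains that pattern. So the substantive choice is which subset of rules to include, and we include exactly those connected pieces that are sound. Soundness is checked against the duality: a piece is sound if its canonical database, with IDB atoms interpreted via unfoldings of derivations, has no homomorphism to $\bB$. Completeness then follows because each disconnected application either has all side components homomorphic to $\bB$-compatible parts, so the head rule is sound and suffices, or some side component is unsound, in which case its goal rule fires.

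For symmetric $\Pi$, the reversed rule of $R$ has the same components, with the IDB atom moved between head and body. We perform the same splitting on it, so that the reversed pairs are preserved. I expect this symmetric bookkeeping, together with the soundness of the head rule $P(x)\;{:}{-}\;\psi_{K_x}$, to be the main obstacle.
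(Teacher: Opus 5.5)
There is a genuine gap. The step that carries the whole lemma, the soundness of the head rule $P(x)\;{:}{-}\;\psi_{K_x}$, is asserted in your completeness paragraph and then flagged as the main obstacle. Meanwhile, the concrete rule choices you do commit to are unsound.

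A goal rule $G\;{:}{-}\;\psi_{K_j}$ is unsound as soon as some structure mapping to $\bB$ admits a derivation matching $\psi_{K_j}$. So adding it for every side component fails, and so does adding goal rules for every component when the head is $G$.

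Your fallback, keeping ``exactly those connected pieces that are sound'', also fails if soundness is judged piece by piece. Let $\bB$ have one element $a$ and unary relations $A^{\bB}=\{a\}$, $C^{\bB}=D^{\bB}=\emptyset$. Let $\Pi$ consist of the rules
\begin{itemize}
\item[] $P(x)\;{:}{-}\;A(x)\wedge C(u)$, \quad $Q(x)\;{:}{-}\;A(x)\wedge D(u)$,
\item[] $G\;{:}{-}\;P(x)\wedge Q(x)$, \quad $G\;{:}{-}\;C(u)$, \quad $G\;{:}{-}\;D(u)$.
\end{itemize}
Adding $P(x)\;{:}{-}\;A(x)$ or $Q(x)\;{:}{-}\;A(x)$ alone keeps the program sound. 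Together, however, they derive $G$ on a single $A$-element, which maps to $\bB$. This example also defeats your first replacement. Your proposed test via duality and unfoldings is not available for an arbitrary monadic program and a possibly infinite $\bB$. It also says nothing about head rules, whose soundness is not about their body failing to map to $\bB$.

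What is missing is the right criterion together with the disjoint-union argument. Call a side component $\psi_j$ \emph{realizable} if there are a finite $\bD_j\to\bB$ and a derivation on $\bD_j$ whose result receives a homomorphism from the canonical database of $\psi_j$, IDB atoms included. Copies of the components themselves, as in your $\bA^*$, do not work: they need not map to $\bB$, and their IDB atoms need not be derivable on them.

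\begin{itemize}
\item If every side component is realizable, the head rule is sound. Any derivation on some $\bA'\to\bB$ that uses $\phi_0\;{:}{-}\;\psi$ can be replayed with the original rule on $\bA'\sqcup\bD_1\sqcup\dots\sqcup\bD_m$. That structure still maps to $\bB$, contradicting soundness.
\item If some $\psi_j$ is not realizable, then $G\;{:}{-}\;\psi_j$ is sound, since a firing on $\bA'\to\bB$ would make $\bA'$ a realizing structure.
\end{itemize}

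This dichotomy also yields completeness. For a $G$-headed rule some component must be unrealizable, since otherwise the disjoint union of the realizers would receive $G$.

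The paper applies the dichotomy iteratively. It adds one rule per derivation that minimizes the number of disconnected-rule applications, judges realizability with respect to the current $\Pi_i$, and terminates because there are only finitely many candidate rules. If you judge realizability once and for all with respect to the original $\Pi$, your all-at-once variant also goes through.

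Your symmetric worry then disappears. If the body IDB of $R$ lies in the component of $x$, then $R$ and its reverse have the same side components, so both head rules are added or neither is. Otherwise the head rules contain no IDB and need no reverse.
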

%Note that, by definition of connected rules, the only IDB of $\Pi'$ with arity 0 is the goal predicate.
\begin{proof}
	We iteratively construct Datalog programs $\Pi_i$ that solves $\Csp(\bB)$. Let $\Pi_0=\Pi$. %Let $\Pi_0$ be the Datalog program obtained from $\Pi$ by removing all rules that are not injective. By Lemma~\ref{lem:DLinjective} $\Pi_0$ still solves $\Csp(\bB)$. 
	For $i\geq0$ we construct $\Pi_{i+1}$ as follows. Define $\Pi_{i+1}\coloneqq \Pi_i$ if there is no finite structure $\bA$ with $\bA\not\to\bB$ such that every derivation of $\Pi_i$ on $\bA$ that derives the goal predicate uses a disconnected rule. 
	Otherwise, let $\bA$ be such a structure and let $\bA_0\vdash_{R_0}\dots\vdash_{R_{n}}\bA_{n+1}$ be a derivation of $\Pi_i$ that 
	\begin{itemize}
		\item derives the goal predicate on $\bA$,
		\item uses a disconnected rule $R_k$, and
		\item among all derivations of $\Pi_i$ on $\bA$ that derive the goal predicate, minimizes the number of applications of disconnected rules.%the number of applications of disconnected rules in this derivation is minimal among all derivations of $\Pi_i$ on $\bA$ that derive the goal predicate.
	\end{itemize} 
	Let $R_k$ be of the form $\phi_0 \; {:}{-} \;\phi$. 
	%Observe that if $\phi_0$ is a 0-ary IDB $P$ that is not the goal predicate, then, by minimality of the derivation, there must be a rule $R_{k'}$ with $k'>k$ that has $P$ in its body. By definition $R_{k'}$ is disconnected.  
	%Hence, we can without loss of generality choose a disconnected rule $R_k$ where $\phi_0$ is either $G$ or its IDB is of arity at least 1.
	Let $\mathbb R$ denote  the canonical database of $\phi$. %, otherwise, let $\R$ denote the canonical database of $\phi_0\wedge\phi$. 
	Let $\psi$ be the empty conjunction if $\phi_0=G$, otherwise $\phi_0$ is of the form $P(x)$ and we define $\psi$ to be the canonical conjunctive query of the connected component of $\mathbb R$ that contains $x$. 
	%Since $\R_k$ is disconnected, $\R$ has at least two connected components. 
	Let $\psi_1,\dots,\psi_m$ be the canonical conjunctive queries of the remaining connected components of $\mathbb R$.
	%that do not contain the canonical database of $\phi_0$ and let $\psi$ be the remaining component (which is empty when $\phi_0=G$). 
	The formulas $\psi,\psi_1,\dots,\psi_m$ partition the conjuncts of $\phi$. 
	We distinguish two cases. 
	
	First, assume that for every $j\in[m]$, there exists a finite structure $\bD_j$ such that $\bD_j\to\bB$ and there is a derivation of $\Pi_i$ on $\bD_j$ that results in a structure $\bC_j$ to which the canonical database of $\psi_j$ admits a homomorphism. 
	Then $\Pi_{i+1}$ is obtained from $\Pi_{i}$ by adding the connected rule $\phi_0 \; {:}{-} \;\psi$, which we denote by $\tilde R_k$. This rule was not already a rule of $\Pi_i$, since otherwise \[\bA_0\vdash_{R_0}\dots\vdash_{R_{k-1}}\bA_k\vdash_{\tilde R_k}\bA_{k+1}\vdash_{R_{k+1}}\dots \vdash_{R_{n}}\bA_{n+1}\]
	would be a derivation of $\Pi_i$ that derives the goal predicate on $\bA$ and uses one fewer disconnected rules than the original derivation, contradicting its minimality. 
	Assume there exists a structure $\bA'$ on which $\Pi_{i+1}$ can derive the goal predicate and $\Pi_i$ can not.   Since $\Pi_i$ solves $\Csp(\bB)$, we have $\bA'\to\bB$. It is easy to see that $\Pi_i$ can derive the goal predicate on the disjoint union of $\bD_1,\dots,\bD_m$, and $\bA'$ by replacing $\tilde R_k$ with $R_k$ in the derivation of $\Pi_{i+1}$ and, if necessary, adding parts of the original derivation. Since each of these structures admits a homomorphism to $\bB$ so does their disjoint union. This contradicts the fact that $\Pi_i$ solves $\Csp(\bB)$. Hence, such an $\bA'$ can not exist, and therefore $\Pi_{i+1}$ solves $\Csp(\bB)$. Note that if $\phi_0=G$ we can not be in this case.
	
	In the second case there is a $j$ for which no such structure $\bD_j$ exists.
	Then $\Pi_{i+1}$ is obtained from $\Pi_i$ by adding the connected rule $G \; {:}{-} \;\psi_j$, which we denote by $\tilde R_k$. Again, this rule was not already a rule of $\Pi_i$, since otherwise \[\bA_0\vdash_{R_0}\dots\vdash_{R_{k-1}}\bA_k\vdash_{\tilde R_k}\tilde\bA_{k+1}\]
	where $\tilde\bA_{k+1}$ is obtained from $\bA_{k}$ by adding $()$ to $G^{\bA_k}$, would be a derivation of $\Pi_i$ that derives the goal predicate on $\bA$ that uses at least one fewer disconnected rules than the original derivation, again contradicting its minimality. 
	Assume that there is a structure $\bA'$ on which $\Pi_{i+1}$ can derive the goal predicate and $\Pi_i$ can not. Since $\Pi_i$ solves $\Csp(\bB)$ we have $\bA'\to\bB$. Clearly, $\Pi_{i+1}$ must use $\tilde R_k$ in its derivation of the goal predicate. Therefore,  there is a derivation of $\Pi_i$ on $\bA'$ such that the canonical database of $\psi_j$ has a homomorphism into the derived structure. A contradiction.  
	%It is easy to see that $\Pi_i$ can, by replacing $\tilde R_k$ with $R_k$ and possibly adding parts of the original derivation, derive the goal predicate on the disjoint union of $\bD$ and $\bA'$. Since both structures have a homomorphism to $\bB$ so does their disjoint union. This contradicts that $\Pi_i$ solves $\Csp(\bB)$. 
	Therefore, such an $\bA'$ can not exist, and hence $\Pi_{i+1}$ solves $\Csp(\bB)$.

	Note that all rules we add in this process are connected. Therefore, the disconnected rules do not change and there are only finitely many rules that we could add. Hence, the process eventually stabilizes, and there exists an $i$ for which the Datalog program $\Pi_i$ never has to use any of its disconnected rules to derive the goal predicate. We obtain the desired program $\Pi'$ by removing all disconnected rules from $\Pi_i$. 
	
	If $\Pi$ is symmetric we can modify this procedure. Whenever we add a rule in the first case with an IDB in its body we also add the reversed rule. Since a symmetric Datalog programs does not need to contain the reverse of rules of the form $G \; {:}{-} \;\psi_j$, we have that $\Pi_i$ is symmetric. %Only in the first case 
	The argument that $\Pi_i$ solves $\Csp(\bB)$ is analogous to the original argument. We replace $R_k$ and $\tilde R_k$ by their reverse rule and $\bD_1,\dots,\bD_m$ stay the same. %In the second case we do not have to do anything, since a symmetric linear Datalog does not need to contain the reverse of rules of the form $G \; {:}{-} \;\psi_j$. 
\end{proof}

\section{Symmetrizing Linear Arc Monadic Datalog}
%From Caterpillars to Datalog and back again
%join with next section (maybe use subsections?)
The following section is devoted to the implication  from (\ref{caterpillarAndUnfolding}) to (\ref{can-sym-lin-arc}) in the proof of Theorem~\ref{thm:main}.
%------------------------------------------------------------
Before we come to Datalog we introduce an alternative way of defining unfoldings.
Let $\bC_1$ and $\bC_2$ be caterpillars with spines $(a^1_0,\dots,a^1_{m(1)})$ and $(a^2_0,\dots,a^2_{m(2)})$, respectively. The \emph{concatenation of $\bC_1$ and $\bC_2$}, denoted $\bC_1\circ\bC_2$, is the caterpillar obtained from the disjoint union of $\bC_1$ and $\bC_2$ by identifying $a^1_{m(1)}$ and $a^2_0$.  
Note that for a caterpillar $\bC$ with spine $(a_0,\dots,a_n)$ and $i,j\in[0,n]$ with $i<j$ the $(a_i,a_j)$-unfolding of $\bC$ is isomorphic to $\bC[0,j)\circ\bC(j,i)\circ\bC(i,n]$.
 
Let $\bC$ be a caterpillar with conjunctive query representation $(\phi\wedge\psi)(a_0,\dots,a_n)$ and let $n_1,\dots,n_k\in[0,n]$ with $k$ even and $0<n_1>n_2<n_3>n_4<\cdots >n_k<n$. 
The \emph{$(n_1,\dots,n_k)$-unfolding of $\bC$} is $\bC[0,n_1)\circ\bC(n_1,n_2)\circ\bC(n_2,n_3)\circ\bC(n_3,n_4)\circ\dots\circ\bC(n_k,n]$. This unfolding fixes $(a_i,a_{i+1})$ if there is an $\ell\in[0,k]$ such that $n_1,\dots,n_\ell\leq i$ and $n_{\ell+1},\dots,n_k\geq i+1$.
Note that for $i,j\in[0,n]$ with $i<j$ the $(j,i)$-unfolding of $\bC$ is isomorphic to the $(a_i,a_j)$-unfolding of $\bC$ and that for both notions of unfolding the set of fixed vertebrae is the same.

% with
%\begin{align*}
%	&\sum_{i=1}^k(-1)^i\cdot n_i=n &&\text{and}
%	&&0\leq \sum_{i=1}^\ell(-1)^i\cdot n_i\leq n\text{ for all }\ell\in[k].
%\end{align*}

\begin{lemma}\label{lem:MultiUnfoldingCaterpillar}
	Let $\bC$ be a caterpillar and $\bC'$ be the $(n_1,\dots,n_k)$-unfolding of $\bC$ fixing $v_1,\dots,v_n$, then $\bC'$ is isomorphic to an unfolding of $\bC$ fixing $v_1,\dots,v_n$.
\end{lemma}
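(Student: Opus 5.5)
We argue by induction on $k$, the length of the tuple $(n_1,\dots,n_k)$. For $k=0$ the $(\,)$-unfolding is $\bC[0,n]=\bC$ itself, and for $k=2$ the $(n_1,n_2)$-unfolding is $\bC[0,n_1)\circ\bC(n_1,n_2)\circ\bC(n_2,n]$. By the observation preceding the lemma, this is isomorphic to the $(a_{n_2},a_{n_1})$-unfolding of $\bC$, with the same set of fixed vertebrae, so the claim holds. For the induction step, $k\geq4$, we write the $(n_1,\dots,n_k)$-unfolding as a single ordinary unfolding applied to a shorter multi-unfolding.

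Our first choice is to split off the last fold. Let $\bD$ be the $(n_1,\dots,n_{k-2})$-unfolding of $\bC$. This requires $n_{k-2}<n$, which holds since $n_{k-2}<n_{k-1}\leq n$. Its spine ends with the interval $\bC(n_{k-2},n]$, which contains the joints $a_{n_{k-1}}$ and $a_{n_k}$ with $n_k<n_{k-1}$. Taking the copies of these joints in that final segment, we perform the $(a_{n_k},a_{n_{k-1}})$-unfolding of $\bD$. Its head part is $\bC[0,n_1)\circ\dots\circ\bC(n_{k-2},n_{k-1})$. We then check that the body copy $\bD(n_{k-1},n_k)$ reversed equals $\bC(n_{k-1},n_k)$, and that the tail is $\bC(n_k,n]$. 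This requires $n_k>n_{k-2}$; if $n_k\leq n_{k-2}$ the lower joint does not lie in the last segment. To handle that case we would instead split off the first fold, or choose the index $\ell$ where $n_\ell$ is locally extremal in a suitable sense. This bookkeeping is where the routine identification of the intervals via the concatenation identity $\bC[0,j)\circ\bC(j,i)\circ\bC(i,n]$ happens. Hairs at the turning joints must be tracked, recalling that $a'_j$ and $a''_i$ carry no hairs, consistent with the open intervals in the definition.

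The fixed vertebrae also need checking. Suppose a vertebra $(a_i,a_{i+1})$ is fixed by the $(n_1,\dots,n_k)$-unfolding, witnessed by $\ell$. If $\ell\leq k-2$, then it is fixed by the shorter unfolding (same $\ell$), and since $n_{k-1},n_k\geq i+1$ it lies before the body of the last ordinary unfolding, so it survives. If $\ell=k$, then $i\geq n_{k-1}$, so it lies in the tail and is again fixed. The case $\ell=k-1$ is impossible, since it would require $n_{k-1}\leq i$ and $n_k\geq i+1>n_{k-1}$. Finally, as stated in the definition of unfoldings, an unfolding of an unfolding that fixes $v_1,\dots,v_n$ fixes them as well. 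Combining this with the induction hypothesis applied to $\bD$ gives the claim.

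The main obstacle is the case distinction on where the last fold sits, namely the requirement that both joints of the new fold lie in a single straight segment of $\bD$'s spine. If $n_k\leq n_{k-2}$ fails this, I would try reducing instead at an index $\ell$ with $n_{\ell-1}<n_{\ell+1}$ and $n_\ell>n_{\ell+2}$-type conditions chosen so that the fold nests inside one segment. Such an index exists because the sequence starts above $0$ and ends below $n$.
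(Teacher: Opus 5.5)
Your overall plan matches the paper's: induct on $k$, delete an adjacent pair $n_\ell,n_{\ell+1}$, and recover that fold as one ordinary unfolding of the shorter multi-unfolding. Your handling of the case where the last fold can be peeled off is correct, including the fixed-vertebra bookkeeping (the boundary case $n_k=n_{k-2}$ also works). The gap is that you never show a removable pair exists, and that is the only nontrivial point of the lemma.

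Peeling off the first or the last fold is not enough. Write $(a_0,\dots,a_m)$ for the spine, take $m=10$ and $(n_1,n_2,n_3,n_4)=(5,3,4,1)$. Deleting $n_1,n_2$ leaves the stretch $0\to4$, which does not reach $5$. Deleting $n_3,n_4$ leaves the stretch $3\to10$, which does not reach $1$. Only the middle fold $(3,4)$ lies in a single stretch, namely $5\to1$ of the $(5,1)$-unfolding. Your closing sentence (``such an index exists because the sequence starts above $0$ and ends below $n$'') is an assertion, not an argument. The conditions you sketch are also not the right ones. For odd $\ell$ (with $n_0=0$, $n_{k+1}=m$), the fold $(n_\ell,n_{\ell+1})$ lies in one monotone stretch of the shorter unfolding iff $n_{\ell-1}\le n_{\ell+1}$ and $n_\ell\le n_{\ell+2}$. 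So your $n_\ell>n_{\ell+2}$ points the wrong way; for even $\ell$ both inequalities are reversed.

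The missing idea, which is what the paper uses, is to pass to the lengths of the straight stretches: $d_0=n_1$, $d_j=|n_{j+1}-n_j|$ for $j\in[1,k-1]$, and $d_k=m-n_k$. In these terms the nesting conditions read $d_{\ell-1}\ge d_\ell\le d_{\ell+1}$. Such an $\ell\in[1,k-1]$ always exists, because $d_0\ge d_1$ (as $n_2\ge 0$) and $d_k\ge d_{k-1}$ (as $n_{k-1}\le m$). For instance, take $\ell$ minimizing $d_\ell$ over $[1,k-1]$; the paper phrases this as a contradiction, since otherwise $d_1>d_2>\dots>d_k$. The new ordinary unfolding is then taken between the joints at positions $d_0+\dots+d_{\ell-1}-d_\ell$ and $d_0+\dots+d_{\ell-1}$ of the shorter unfolding's spine.

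Your fixed-vertebra argument carries over to this general fold. A fixed vertebra is crossed exactly once by the walk $0\to n_1\to\dots\to n_k\to m$, whereas vertebrae between $n_{\ell+1}$ and $n_\ell$ are crossed at least three times. Hence fixed vertebrae lie outside the folded range, and are fixed both by the shorter multi-unfolding and by the new ordinary unfolding.
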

\begin{proof}
	Let $(a_0,\dots,a_m)$ be the spine of $\bC$. % and $(a'_0,\dots,a'_{m'})$ the spine of $\bC'$. 
	If $k=0$, then $\bC'=\bC[0,m]=\bC$ is an unfolding of $\bC$ fixing $v_1,\dots,v_n$.
	If $k>0$, then let $d_0=n_1, d_1=|n_2-n_1|,\dots,d_{k-1}=|n_k-n_{k-1}|, d_k=m-n_k$. Note that if there is an $\ell\in[1,k-1]$ with $d_{\ell-1}\geq d_\ell\leq d_{\ell+1}$, then $\bC'$ is isomorphic to the $(a''_{d_0+\dots+d_{\ell-1}-d_{\ell}},a''_{d_0+\dots+d_{\ell-1}})$-unfolding of $\bC''$, where $\bC''$ is the $(n_1,\dots,n_{\ell-1},n_{\ell+2},\dots,n_k)$-unfolding of $\bC$ and $(a''_0,\dots,a''_{m''})$ the spine of $\bC''$. Clearly, both unfoldings fix $v_1,\dots,v_n$. By induction hypothesis $\bC''$ is isomorphic to an unfolding of $\bC$ fixing $v_1,\dots,v_n$. Hence, by definition of unfoldings, $\bC'$ is isomorphic to an unfolding of $\bC$ fixing $v_1,\dots,v_n$.
	It remains to be shown that such an $\ell$ exists. Assume that there is no such $\ell$ in $[1,k-1]$. Then $d_\ell>\min(d_{\ell-1},d_{\ell+1})$ for all $\ell\in[1,k-1]$. Recall that $n_2\leq n_1$, hence, $d_0=n_1\geq d_1>d_2$. 
	Now $d_2>\min(d_{1},d_{3})$ implies $d_2>d_3$. Analogously, $d_3>\dots> d_k$. 
	%Note that $n_\ell=d_1-d_2+d_3-\dots -(-1)^\ell \cdot d_\ell$. %The fact that the $d_i$ are decreasing implies $n_{2\ell+1}=d_1-d_2+d_3-\dots -d_{2\ell}+d_{2\ell+1}\leq d_1-d_2+d_3-\dots -d_{2\ell-2}+d_{2\ell-1}=n_{2(\ell-1)+1}\leq\dots\leq n_1$. 
	Since $k$ is even we have $n_k<n_{k-1}$. Therefore $d_k=m-n_k\geq n_{k-1}-n_k=|n_{k-1}-n_k|=d_{k-1}$, a contradiction.
\end{proof}

The converse is also true but we do not need it for the following proofs.
%\begin{lemma}\label{lem:MultiUnfoldingCaterpillarOtherDirection}
%	Let $\bC$ be a caterpillar and $\bC'$ be an unfolding of $\bC$ fixing $v_1,\dots,v_n$, then there exist $n_1,\dots,n_k\in\N$, such that $\bC'$ is isomorphic to the $(n_1,\dots,n_k)$-unfolding of $\bC$ fixing $v_1,\dots,v_n$.
%\end{lemma}
%\begin{proof}
%	easy to see (i guess we have to do an induction)
%\end{proof}
%----------------------------------------------------------------------------------
Our goal in this section will be to show  that for a structure $\bB$ with caterpillar duality such that $\overline{\Csp(\bB)}$ is closed under $n$-fixed unfoldings of caterpillars we have that the canonical lam Datalog program $\Pi$ for $\bB$ and the canonical s$_n$lam Datalog program $\Pi_S$ for $\bB$ derive the goal predicate on the same instances. 
Note that when $\Pi$ is deriving the goal predicate we can use the `strongest possible rules', i.e., if we have different rules with the same body we always use the one with the smallest (with respect to inclusion) IDB in the head. However, $\Pi_S$ might need to use weaker rules in order to be able to apply symmetric rules later on in the derivation. See Example~3.6 in \cite{bodirskyStarke2024symmetriclineararcmonadicUnfoldedCaterpillar} for a derivation where this is necessary. 
Before we come the the proof we take a look at the close link of derivations of canonical lam Datalog programs and caterpillars. 
\begin{observation}\label{obs:lamDerivationIsCaterpillar}
%Let $\B$ be a finite structure with caterpillar duality, let $\Pi$ be the canonical linear arc monadic Datalog program for a structure $\B$, let $\bA$ be a finite structure that does not have a homomorphism to $\B$. Since $\B$ has caterpillar duality Theorem~\ref{thm:mainOld} implies that $\Pi$ can derive the goal predicate on $\bA$. Let $R_0,\dots, R_{n+1}$ be the rules of $\Pi$ used in such a derivation. 
Let $\Pi$ be a linear arc monadic Datalog program that solves the CSP of a finite structure $\bB$ and $\bA$ be a structure on which $\Pi$ can derive the goal predicate.
Let $R_0,\dots, R_{n+1}$ be the rules of $\Pi$ used in such an efficient derivation. 
Since $\Pi$ is linear, we can use the suggestive notation 
    \[\vdash_{R_0} P_0(a_0)\vdash_{R_1} \cdots \vdash_{R_{n}}P_n(a_n)\vdash_{R_{n+1}}G\]
    for this derivation. By Lemmata~\ref{lem:DLinjective} and~\ref{lem:DLconnected} we can assume without loss of generality that all rules used in this derivation are injective and connected.
    The rule $R_0$ is of the form $P_0(x) \; {:}{-} \;\Psi_0(x_1,\dots,x_k)$ and $R_{n+1}$ is of the form $G \; {:}{-} \; P_n(x)\wedge\Psi_{n+1}(x_1,\dots,x_k)$, where $\Psi_0$ and $\Psi_{n+1}$ are atomic formulas. Let $\Phi_0(x)$ and $\Phi_{n+1}(x)$ be obtained from $\Psi_0$ and $\Psi_{n+1}$, respectively, by existentially quantifying all variables except for $x$. Note that, since the derivation is efficient, any other rule $R_i$ is either of the form $P_i(x) \; {:}{-} \;P_{i-1}(y)\wedge\Psi_i(x_1,\dots,x_k)$ with $x\neq y$ or $P_i(x) \; {:}{-} \;P_{i-1}(x)\wedge\Psi_i(x_1,\dots,x_k)$. Analogously to $\Phi_0(x)$ and $\Phi_{n+1}(x)$ define the formula $\Phi_i(x,y)$ or $\Phi_i(x)$.
    Let $0<i_0<\dots<i_{m-1}<{n+1}$ such that $\Phi_{i_0}, \dots,\Phi_{i_{m-1}}$ are exactly the formulas with two free variables. For $\ell\in[0,m-1]$ define $\phi_\ell(x,y)$ as $\Phi_{i_\ell}(x,y)$. For $\ell\in[0,m]$ define $\psi_\ell(x)$ as $\Phi_{i_{\ell-1}+1}(x)\wedge\dots\wedge\Phi_{i_\ell-1}(x)$, where $i_{-1}=-1$ and $i_m=n+2$. Note that $\psi_\ell$ can be the empty conjunction. Observe that \[\phi_0(x_1,x_2)\wedge\dots\wedge\phi_{m-1}(x_{m-1},x_m)\wedge\psi_0(x_0)\wedge\dots\wedge\psi_m(x_m)\] is the conjunctive query representation of a caterpillar $\bC$. Note that this construction is essentially a proof of (2)${}\Rightarrow{}$(1) in Theorem~\ref{thm:mainOld}. We will use the construction for something different. For $\ell\in [0,m-1]$ define $S_\ell$ as the rule $R_{i_\ell}$ and for $\ell\in[1,m-1]$ define $H_\ell$ as a rule with head $P_{i_{\ell+1}}(x)$ and body the canonical conjunctive query of the canonical database of $P_{i_\ell+1}(x)\wedge \psi_\ell(x)$. The rules $H_0$ and $H_m$ are defined analogously. Observe that adding the rules $H_0,\dots,H_m$ to $\Pi$ does not change the set of structures on which $\Pi$ can derive the goal predicate. Hence, we can without loss of generality say that
    \[\vdash_{H_0} \tilde P_0(\tilde a_{0})\vdash_{S_0} \tilde Q_1(\tilde a_{1})\vdash_{H_1}\tilde P_1(\tilde a_{1})\vdash_{S_1} \cdots \vdash_{H_{m-1}}\tilde P_{m-1}(\tilde a_{{m-1}})\vdash_{S_{m-1}}\tilde Q_m(\tilde a_{m})\vdash_{H_{m}}G\]
    is a derivation of $\Pi$ on $\bA$ that derives the goal predicate, where $\tilde P_\ell\coloneqq P_{i_\ell}$, $\tilde Q_\ell\coloneqq P_{i_\ell+1}$, $\tilde a_\ell\coloneqq a_{i_\ell}$, and $\tilde a_m\coloneqq a_{i_{m-1}+1}$. The advantage of the new derivation is that it is easier to translate between the derivation and the corresponding caterpillar $\bC$.
\end{observation}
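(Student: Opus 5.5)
The observation bundles three claims. First, the middle rules have one of the two stated shapes. Second, the resulting formula is the conjunctive query representation of a caterpillar. Third, adding $H_0,\dots,H_m$ to $\Pi$ does not change where the goal predicate is derivable, and the rewritten sequence is a derivation. I would check them in that order, using only the fact that the rules $R_0,\dots,R_{n+1}$ may be taken injective and connected (Lemmata~\ref{lem:DLinjective} and~\ref{lem:DLconnected}), together with efficiency.

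\emph{Step 1 (shape of the rules).} Since $\Pi$ is linear, arc and monadic, every body consists of at most one unary IDB atom and exactly one EDB atom $\Psi_i$. An EDB atom is needed because a rule $P(x) \; {:}{-} \; Q(y)$ with $x\neq y$ is disconnected, and $P(x) \; {:}{-} \; Q(x)$ can be absorbed. Connectedness of $R_i$ forces both the head variable $x$ and the body IDB variable $y$ to occur in $\Psi_i$. So the quantified formulas $\Phi_i(x,y)$ or $\Phi_i(x)$ are exactly obtained from an atomic formula in which the free variables occur, which is what the caterpillar definition requires. Efficiency is used only to exclude degenerate steps: $R_0$ has no IDB in its body, and $R_{n+1}$ is the unique rule with head $G$.

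\emph{Step 2 (it is a caterpillar).} Take the variables $x_0,\dots,x_m$ as joints. In the linear derivation, consecutive IDB facts $P_{i-1}(a_{i-1})$ and $P_i(a_i)$ share an element exactly when $\Phi_i$ has one free variable. Hence the two-variable formulas $\phi_0,\dots,\phi_{m-1}$ link consecutive joints, and the one-variable formulas are grouped into the hairs $\psi_\ell$. The remaining conditions are then verified as follows.
\begin{itemize}
\item The joints are distinct elements of the canonical database, since they are distinct variables and no equalities are present.
\item The canonical database is injective, since each atom comes from an injective rule and the quantified variables of different atoms are kept apart.
\item $\psi_0$ and $\psi_m$ are nonempty, since they contain $\Phi_0$ and $\Phi_{n+1}$ respectively.
\end{itemize}
The only real care needed is the index bookkeeping with $i_{-1}=-1$ and $i_m=n+2$.

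\emph{Step 3 (adding the $H_\ell$).} For soundness of $H_\ell$, suppose its body holds at an element $c$ of some structure, i.e.\ $P_{i_\ell+1}(c)$ has been derived and $c$ satisfies $\psi_\ell$. Then applying the original rules $R_{i_\ell+1},\dots,R_{i_{\ell+1}-1}$ in sequence at $c$ derives the head IDB of $H_\ell$ at $c$. Each of these rules is a one-variable rule $P_i(x) \; {:}{-} \; P_{i-1}(x)\wedge\Psi_i$, and its EDB witness is provided by the corresponding conjunct of $\psi_\ell$. Thus every $\Pi\cup\{H_0,\dots,H_m\}$-derivation can be replaced by a $\Pi$-derivation, so the set of structures on which the goal predicate is derived is unchanged. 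The displayed sequence is then a derivation on $\bA$: each $H_\ell$-step compresses a block of consecutive one-variable steps of the original derivation at $\tilde a_\ell$, and each $S_\ell$-step is the original $R_{i_\ell}$-step.

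I expect the main obstacle to be the index bookkeeping in Steps 2 and 3, in particular the edge cases $H_0$, $H_m$ and blocks in which $\psi_\ell$ is empty. In those cases $H_\ell$ degenerates to $P(x) \; {:}{-} \; P(x)$, or respectively to the rules $R_0$ and $R_{n+1}$ themselves.
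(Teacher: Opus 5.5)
Your plan matches the paper's reasoning, which consists only of the inline ``observe'' remarks. You read the caterpillar off the EDB atoms of the injective, connected rules, and you justify each $H_\ell$ as a composite of consecutive one-variable rules, so any application of $H_\ell$ expands back into $\Pi$-steps.

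\textbf{Step~3 fails as written.} You took over the statement's indices, and they are off. Starting from $P_{i_\ell+1}(c)$ you cannot apply $R_{i_\ell+1}$, because that rule consumes $P_{i_\ell}$. By the definition $\psi_\ell=\Phi_{i_{\ell-1}+1}\wedge\dots\wedge\Phi_{i_\ell-1}$, the EDB atoms of $R_{i_\ell+1},\dots,R_{i_{\ell+1}-1}$ are the conjuncts of $\psi_{\ell+1}$, not of $\psi_\ell$. That block also ends in $P_{i_{\ell+1}-1}$, whereas the head $P_{i_{\ell+1}}$ is produced only by the vertebra rule $S_{\ell+1}=R_{i_{\ell+1}}$ at the next joint.

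So, taken literally, $H_\ell$ jumps across the vertebra $\phi_{\ell+1}$ although its body supplies only the hairs of joint $\ell$. It is not a composite of rules of $\Pi$. Adding it can make the goal derivable on structures that map to $\bB$. Likewise, with $\tilde P_\ell=P_{i_\ell}$ and $\tilde Q_\ell=P_{i_\ell+1}$, the step $\tilde P_0(\tilde a_0)\vdash_{S_0}\tilde Q_1(\tilde a_1)$ is not an application of $R_{i_0}$.

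The fix is to index by the hair block of joint $\ell$, namely $R_{i_{\ell-1}+1},\dots,R_{i_\ell-1}$. This block starts from $P_{i_{\ell-1}}$ (produced by $S_{\ell-1}$) and ends with $P_{i_\ell-1}$ (consumed by $S_\ell$). Put:
\begin{itemize}
\item $H_\ell\coloneqq P_{i_\ell-1}(x)\;{:}{-}\;P_{i_{\ell-1}}(x)\wedge\psi_\ell(x)$ for $\ell\in[1,m-1]$;
\item $H_0\coloneqq P_{i_0-1}(x)\;{:}{-}\;\psi_0(x)$ and $H_m\coloneqq G\;{:}{-}\;P_{i_{m-1}}(x)\wedge\psi_m(x)$;
\item $\tilde Q_\ell\coloneqq P_{i_{\ell-1}}$ and $\tilde P_\ell\coloneqq P_{i_\ell-1}$;
\item $\tilde a_0\coloneqq a_0$ and $\tilde a_\ell\coloneqq a_{i_{\ell-1}}$ for $\ell\in[1,m]$.
\end{itemize}
With this indexing, your composition argument, the compression of the original derivation, and your remarks on degenerate blocks are all correct.

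\textbf{Smaller points.}
\begin{itemize}
\item Efficiency does more than you say. It makes the derivation a chain in which the IDB atom in the body of $R_i$ is matched to $P_{i-1}(a_{i-1})$: trace the goal back and discard all other steps. The linear notation, the dichotomy $x\neq y$ versus $x=y$, and your Step~2 all rely on this. By contrast, $R_0$ has no IDB in its body simply because nothing has been derived yet.
\item Step~1 does not rule out the connected, EDB-free rule $G\;{:}{-}\;P_n(x)$. In that case $\Phi_{n+1}$ is empty and your argument that $\psi_m$ is nonempty fails. This rule must be absorbed as well (compose it with the preceding steps), or the last vertebra atom must be read as a hair of $x_{m-1}$.
\item ``Share an element exactly when'' should refer to variables of the rules, not elements of $\bA$. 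A two-variable rule may map both variables to one element of a non-injective $\bA$. Since $\bC$ is built from the formulas, this does not affect your conclusion.
\end{itemize}
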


Now we are ready to prove that we can transform derivations of $\Pi$ into derivations of $\Pi_S$. 
\begin{lemma}\label{lem:MaltimpliesLAMDatalogIsSLAMDatalog}
Let $\bB$ be a finite structure with relational signature $\tau$ such that $\bB$ has caterpillar duality and $\overline{\Csp(\bB)}$ is closed under $n$-fixed unfoldings of caterpillars. %$\Pol(\bB)$ contains a quasi Maltsev operation. 
Let $\Pi$ be the canonical linear arc monadic Datalog program for $\bB$ and $\Pi_S$ be the canonical s$_n$lam Datalog program for $\bB$. Then $\Pi$ can derive the goal predicate on a finite 
$\tau$-structure 
$\bA$ if and only if $\Pi_S$ can derive the goal predicate on $\bA$.
\end{lemma}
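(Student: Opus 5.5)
The direction from $\Pi_S$ to $\Pi$ is immediate. Every rule of $\Pi_S$ is a copy of a rule of the canonical lam program: the copies in each $\Pi_i$ come from the canonical slam program, and the connecting rules $\phi^i_0 \; {:}{-} \;\phi^{i+1}_1\wedge\dots$ come from lam rules. Erasing the copy indices therefore turns any derivation of $\Pi_S$ into a derivation of $\Pi$. Alternatively, Lemma~\ref{lem:can-sound} applied to $\Pi_S$ together with the fact that $\Pi$ solves $\Csp(\bB)$ (by caterpillar duality and Theorem~\ref{thm:mainOld}) gives the same conclusion.

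For the converse, suppose $\Pi$ derives $G$ on $\bA$. By Observation~\ref{obs:lamDerivationIsCaterpillar} we may take an efficient derivation with injective, connected rules and read off a caterpillar $\bC$ with $\bC\to\bA$ and $\bC\not\to\bB$. It suffices to show that $\Pi_S$ derives $G$ on $\bC$, since derivations transfer along homomorphisms. Closure of $\overline{\Csp(\bB)}$ under $n$-fixed unfoldings gives vertebrae $v_1,\dots,v_n$ of $\bC$ such that every unfolding $\bC'$ of $\bC$ fixing them satisfies $\bC'\not\to\bB$. These $n$ vertebrae cut the spine into $n+1$ segments. The plan is to assign segment $\ell$ (counted from the end of the spine where $G$ is derived backwards) to the copy $\Pi_\ell$. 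Inside a segment only symmetric rules are used, and the non-symmetric connecting rules from $\Pi_{i+1}$ to $\Pi_i$ are used exactly at the fixed vertebrae. The copies need the order $\Pi_n,\dots,\Pi_0$ along the derivation direction, which matches the index shift in the connecting rules.

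Inside each segment I would follow the argument for $n=0$ from \cite{bodirskyStarke2024symmetriclineararcmonadicUnfoldedCaterpillar}. For each joint $a$ of the segment, choose as IDB the relation
\[
P(a)=\{b\in B \mid \text{there is an unfolding } \bC' \text{ of the part of } \bC \text{ up to } a \text{ with } h\colon\bC'\to\bB,\ h(a)=b\},
\]
where the unfoldings fix the relevant vertebrae among $v_1,\dots,v_n$. One then checks that consecutive joints are linked by rules that are valid in $\bB$ expanded by all unary relations, and hence belong to the canonical slam program. A rule and its reverse are both sound because unfoldings are closed under going back and forth along a vertebra: the $(n_1,\dots,n_k)$-unfoldings together with Lemma~\ref{lem:MultiUnfoldingCaterpillar} show that zig-zagging still yields an unfolding fixing the $v_i$. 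At a fixed vertebra $v_i$, the one-directional rule is valid because the sets are propagated only forward there, so it is a legitimate lam rule and hence a connecting rule of $\Pi_S$. Lemma~\ref{lem:commonUnfoldingExists} is used to merge the unfoldings witnessing different joints into one.

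At the final joint, the goal rule is valid provided the computed set is disjoint from the last hair. That disjointness is exactly the statement that no unfolding fixing $v_1,\dots,v_n$ maps to $\bB$, which holds by the choice of the $v_i$. The main obstacle is making the sets $P(a)$ well defined and showing that the symmetric rules, including their reversed versions, are sound. This requires that the family of admissible unfoldings is closed under concatenation with reflected intervals $\bC(j,i)$ while still fixing the chosen vertebrae. I expect this to follow from the concatenation description of unfoldings, but the bookkeeping of which vertebrae remain fixed when segments are reflected is where care is needed.
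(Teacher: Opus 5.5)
Your easy direction and the reduction to deriving $G$ on the caterpillar $\bC$ are fine. The step you flag as the main obstacle, however, is a genuine gap, and with your definition of $P(a)$ it fails. The reverse of the rule at a non-fixed vertebra $(a_i,a_{i+1})$ is valid in $\bB$ only if the set at $a_i$ already contains every $b$ with $\bB\models\phi_i(b,b')$ for some $b'$ in the set at $a_{i+1}$. Such a $b$ is the value of a turning point of a walk that has crossed $(a_i,a_{i+1})$ and comes back. That point carries no hairs (like $a''_i$ in an unfolding), and the walk is not an unfolding of the part of $\bC$ up to $a_i$.

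Concretely, let $\bB=\bT_3$, $n=1$, and let $\bC$ have spine $(a_0,a_1,a_2,a_3)$ and edges $x\to a_0\to a_1\leftarrow a_2\to a_3\to y$ and $w\to a_2$. The $(a_2,a_3)$-unfolding maps to $\bT_3$. Hence the only admissible choice is $v_1=(a_2,a_3)$, which works because every unfolding fixing it contains $w\to a_2\to a_3\to y$. So the rule at $(a_0,a_1)$ must be symmetric. Your set at $a_0$ is $\{1,2\}$, because of the hair $x\to a_0$. Your set at $a_1$ contains $1$, witnessed by the $(a_0,a_1)$-unfolding with $a_0\mapsto 1$, $a'_1\mapsto 2$, $a''_0\mapsto 0$, $a_1\mapsto 1$. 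Since $(0,1)$ is an edge of $\bT_3$, the reversed rule would force $0$ into the set at $a_0$. So it is not a rule of the canonical program.

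The missing idea is to define the IDBs as a least fixed point rather than semantically, which is what the paper does. For each joint take two sets, $\tilde Q_i$ before the hairs and $\tilde P_i$ after them. These are the smallest sets that contain the strongest sets $Q_i,P_i$ of the lam derivation and are closed under:
\begin{itemize}
\item forward propagation along every $\phi_i$;
\item backward propagation along $\phi_i$, for non-fixed vertebrae only;
\item $\tilde Q_i\cap\psi_i^{\bB}\subseteq\tilde P_i$ and $\tilde P_i\cap\psi_i^{\bB}\subseteq\tilde Q_i$.
\end{itemize}
By construction, all rules of the new derivation except the final goal rule are valid, and so are their reverses except at the at most $n$ fixed vertebrae.

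The hypothesis on unfoldings is then used only once, for the goal rule. Suppose some $b\in\tilde Q_m$ satisfied $\psi_m$. Tracing the closure steps back to the start of the spine gives a zig-zag path. This path is an $(n_1,\dots,n_k)$-unfolding of $\bC$ whose backward steps avoid $v_1,\dots,v_n$ and whose turning points carry no hairs. By Lemma~\ref{lem:MultiUnfoldingCaterpillar} it is an unfolding fixing $v_1,\dots,v_n$ with a homomorphism to $\bB$, a contradiction.

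So with the correct sets, the goal step is not literally the hypothesis: it needs this tracing argument. On the other hand, Lemma~\ref{lem:commonUnfoldingExists} is not needed at all.
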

The proof is a generalization of the proof of Lemma~3.4 in~\cite{bodirskyStarke2024symmetriclineararcmonadicUnfoldedCaterpillar}.
\begin{proof}
    Note that every rule of $\Pi_S$ is also a rule of $\Pi$. Hence if $\Pi_S$ can derive the goal predicate on $\bA$, then so can $\Pi$.
    Consider a derivation of the goal predicate $G$ for the Datalog program $\Pi$ on $\bA$. Let $R_0,\dots, R_{m+1}$ be the rules of $\Pi$ used in this derivation. We assume that the derivation has the form discussed in Observation~\ref{obs:lamDerivationIsCaterpillar}
    \[\vdash_{H_0} P_0(a_0)\vdash_{S_0} Q_1(a_1)\vdash_{H_1}P_1(a_1)\vdash_{S_1} \cdots \vdash_{H_{m-1}}P_{m-1}(a_{m-1})\vdash_{S_{m-1}}Q_m(a_m)\vdash_{H_{m}}G\]
    and that $\phi_0(x_1,x_2),\dots,\phi_{m-1}(x_{m-1},x_m)$ and $\psi_0(x_0),\dots,\psi_m(x_m)$ are the formulas corresponding to $S_0,\dots,S_{m-1}$ and $H_0,\dots,H_m$, respectively.    
    %For $i\in[n]$ define the binary relation $\to_i$ on $B$ that contains all tuples $(b,b')$ such that $\bB\models \Phi_i(b,b')$.
    Since $\Pi$ is canonical we may assume without loss of generality that all rules strongest possible, i.e.,
    \begin{itemize}
        \item[(C1)] $P_0=\{b\in B\mid \bB\models \Phi_0(b)\} =\Phi_0^{\bB}$,
        \item[(C2)] $P_{i}=\{b\in Q_{i}\mid \bB\models \psi_{i}(b)\}$ for all $i\in[m-1]$, and 
        \item[(C3)] $Q_i=\{b'\in B\mid \exists b\in P_{i-1}\colon \bB\models\phi_{i-1}(b,b')\}$ for all $i\in[m]$.
    \end{itemize}

    Note that $\phi_0(x_1,x_2)\wedge\dots\wedge\phi_{m-1}(x_{m-1},x_m)\wedge\psi_0(x_0)\wedge\dots\wedge\psi_m(x_m)$ is the conjunctive query representation of a caterpillar $\bC$ with spine $(x_0,\dots,x_m)$. Replacing $a_i$ by $x_i$ in the above derivation we obtain a derivation of $\Pi$ that derives the goal predicate on $\bC$. Therefore $\bC\not\to\bB$. By assumption there are vertebrae $v_1,\dots,v_n$ such that no unfolding of $\bC$ that fixes $v_1,\dots,v_n$ has a homomorphism to $\bB$.

    Let $\tilde Q_1,\dots,\tilde Q_m,\tilde P_0,\dots,\tilde P_{m-1}\subseteq B$ be the smallest sets such that
    \begin{enumerate}
        \item[(P)] $P_i\subseteq \tilde P_i$ and $Q_{i+1}\subseteq \tilde Q_{i+1}$ for all $i\in[m-1]$,
        \item[(H1)] $b\in \tilde Q_i$ and $\bB\models\psi_i(b)$ implies $b\in \tilde P_i$ for all $i\in[m-1]$,
        \item[(H2)] $b\in \tilde P_i$ and $\bB\models\psi_i(b)$ implies $b\in \tilde Q_i$ for all $i\in[m-1]$,
        \item[(S1)] $b\in \tilde P_i$ and $\bB\models\phi_i(b,b')$ implies $b'\in \tilde Q_{i+1}$ for all $i\in[0,m-1]$, and
        \item[(S2)] $b'\in \tilde Q_{i+1}$ and $\bB\models\phi_i(b,b')$ implies $b\in \tilde P_{i}$ for all $i\in[0,m-1]$ with $(x_i,x_{i+1})\notin\{v_1,\dots,v_n\}$.
    \end{enumerate}
    Note that there can be $b\in Q_i$ such that there is no $b'\in B$ with $b\to_i b'$. Analogously, there can be $b'\in Q_{i+1}$  such that there is no $b\in B$ with $b\to_i b'$.
    
    Let $\tilde H_0,\dots,\tilde H_{m},\tilde S_0,\dots,\tilde S_{m-1}$ be the rules obtained from $ H_0,\dots, H_{m},S_0,\dots,S_{m-1}$ by replacing each occurrence of $P_i$ and $Q_i$ by $\tilde P_i$ and $\tilde Q_i$, respectively.
    We will now show that 
    \[\vdash_{\tilde H_0} \tilde P_0(a_0)\vdash_{\tilde S_0} \tilde Q_1(a_1)\vdash_{\tilde H_1}\tilde P_1(a_1)\vdash_{\tilde S_1} \cdots \vdash_{\tilde H_{m-1}}\tilde P_{m-1}(a_{m-1})\vdash_{\tilde S_{m-1}}\tilde Q_m(a_m)\vdash_{\tilde H_{m}}G\]
    is a derivation of $\Pi_S$ that derives the goal predicate on $\bA$. Recall that $\Pi_S$ consists of $n+1$ many copies of the canonical slam Datalog program that are connected by rules of the canonical lam Datalog program. Hence, strictly speaking, renaming the IDBs is necessary to obtain rules in $\Pi_S$. We will omit giving the renaming explicitly and instead show that $\tilde H_0,\dots,\tilde H_{m},\tilde S_0,\dots,\tilde S_{m-1}$ are rules of $\Pi$ (the canonical lam Datalog program for $\bB$) and that all but at most $n$ of those rules are symmetric. 
    By definition $P_0\subseteq \tilde P_0$. Hence $\bB\models\Phi_0(b)$ implies $b\in\tilde P_0$ and $\tilde H_0$ is a rule of $\Pi$. Since there is no IDB in its body the rule is also symmetric. 
    Let $i\in[0,m-1]$. By (S1) $\tilde S_i$ is a rule of $\Pi$. 
    If $(x_i,x_{i+1})\notin\{v_1,\dots,v_n\}$, then, by (S2), the reversed rule of $S_i$ is also a rule of $\Pi$. Hence all but at most $n$ of the rules $\tilde S_0,\dots,\tilde S_{m-1}$ are symmetric rules of $\Pi$.
    Let $i\in[m-1]$. By (H1) and (H2) $\tilde H_{i}$ and its reverse are rules of $\Pi$.
    Finally we need to show that $\tilde H_m$ is a symmetric rule of $\Pi$. Note that since it derives the goal predicate it automatically is a symmetric rule. We need to show that there is no $b\in \tilde Q_m$ such that $\B\models \psi_m(b)$. Assume that such a $b$ exists. Then $b\in \tilde Q_m$ because of (P) or because of (S1). By (C3), in both cases there is a $b_M\in \tilde P_{m-1}$ such that $\phi_{m-1}(b_M,b)$. The element $b_M$ was added to $\tilde P_{m-1}$ because of 
    \begin{itemize}
        \item (P) in which case, by (C2), $\B\models \psi_{m-1}(b_M)$,
        \item (H1) in which case $\B\models \psi_{m-1}(b_M)$, or
        \item (S2) in which case there exists $b_{M-1}\in Q_m$ with $\B\models\phi_{m-1}(b_{M-1},b_{M})$.
    \end{itemize}
    Continuing in this manner we can construct a path $b_M,\dots,b_0$ to an element $b_0\in\tilde P_0$. There is a unique map $\iota\colon[0,M]\to[0,m]$ such that $b_\ell\in P_{\iota(\ell)}$ or $b_\ell\in Q_{\iota(\ell)}$ for all $\ell$. Let $0<n_1<\dots<n_k<M$ be all numbers $\ell$ such that the sequence  $(\iota(\ell-1),\iota(\ell),\iota(\ell+1))$ is not monotone. Note that $\iota(n_1),\dots,\iota(n_k)$ mark the orientation changes of the path $b_0,\dots,b_M,b$. This path  yields an $(\iota(n_1),\dots,\iota(n_k))$-unfolding $\bC'$ of $\bC$. Observe that, by (S2), $\bC'$ fixes $v_1,\dots,v_n$. By Lemma~\ref{lem:MultiUnfoldingCaterpillar} $\bC'$ is an unfolding of $\bC$ that fixes $v_1,\dots,v_n$ with spine $b_0,\dots,b_M,b$ such that each $b_j$ is an element of the corresponding $\tilde P_i$ and/or $\tilde Q_i$. By construction the map $\bC'\to\B,b_j\mapsto b_j,b\mapsto b$ can be extended to a homomorphism. A contradiction. Therefore such a $b$ can not exist and $\tilde H_m$ is a rule of $\Pi$. 
    In conclusion, $\Pi_S$ can derive the goal predicate on $\bA$ as  desired.
\end{proof}

%-----------------------------------------------------
\section{Going back from Datalog to Caterpillars}%S$_n$LAM Datalog implies unfolded caterpillar duality}

Let $\Pi$ be a lam Datalog program, $\bC$ be a caterpillar, $v=(a,b)$ be a vertebra of $\bC$, and let $t\in R^{\bC}$ be the unique $R$ and $t$ for which $a,b$ occur in $t$. A derivation of $\Pi$ applies non-symmetric rule to $v$ if the derivation contains a rule $R$ that can only be applies because its EDB matches $R(t)$ and the reversed rule $\tilde R$ is not a rule of $\Pi$.

\begin{lemma}\label{lem:lamDatalogUnfoldings}
    Let $\Pi$ be a lam Datalog program and $\bC$ be a caterpillar. If $\Pi$ can derive the goal predicate on $\bC$ and $v_1,\dots,v_n$ are the vertebrae of $\bC$ to which this derivation applied non-symmetric rules, then $\Pi$ can derive the goal predicate on any unfolding of $\bC$ that fixes $v_1,\dots,v_n$.
\end{lemma}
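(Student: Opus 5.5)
By the definition of unfoldings that fix $v_1,\dots,v_n$, it suffices to treat a single $(a_i,a_j)$-unfolding $\bC'$ of $\bC$ fixing $v_1,\dots,v_n$. We also have to show that the transported derivation on $\bC'$ again applies non-symmetric rules only to the vertebrae $v_1,\dots,v_n$, which still exist in $\bC'$. Once this holds, the statement follows by induction on the number of single unfoldings. First we normalize the derivation. By Lemmata~\ref{lem:DLinjective} and~\ref{lem:DLconnected} and Observation~\ref{obs:lamDerivationIsCaterpillar}, I would pass to an efficient derivation of the linear form
\[\vdash_{R_0} P_0(c_0)\vdash_{R_1}\cdots\vdash_{R_N}P_N(c_N)\vdash_{R_{N+1}}G\]
on $\bC$. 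Here each rule's EDB atom matches a tuple of $\bC$ (a vertebra tuple or a hair), and consecutive elements $c_{\ell-1},c_\ell$ either coincide or are the two joints of one vertebra. So the sequence $(c_\ell)$ is a walk on the spine, with hair rules applied at joints along the way. Some care is needed, since passing to an efficient derivation must not introduce new non-symmetric applications. I would argue that efficiency only deletes steps, so the set of vertebrae with non-symmetric applications can only shrink.

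Next I lift this walk to $\bC'$. Let $h\colon\bC'\to\bC$ be the surjective homomorphism from the definition of unfoldings, and let $(d_0,\dots,d_M)$ be the spine of $\bC'$. Its image $h(d_0),\dots,h(d_M)$ is the walk $a_0,\dots,a_j,a_{j-1},\dots,a_i,a_{i+1},\dots,a_m$ on the spine of $\bC$, where the folded part consists of the three copies of the body. I would build a lifted walk $(d'_\ell)$ in $\bC'$ with $h(d'_\ell)=c_\ell$ step by step, mirroring the derivation: hair steps stay at the same element, since every copy of a body joint carries the same hairs, and a vertebra step moves along the corresponding vertebra of $\bC'$. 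The only problem arises when the walk on $\bC$ crosses the body while the lift would have to turn at $a'_j$ or $a''_i$, which have no hairs. There the lift is forced to traverse a vertebra of the copy in the direction opposite to the original rule. Since the unfolding fixes $v_1,\dots,v_n$, every such vertebra is not among $v_1,\dots,v_n$. By hypothesis, the rule used on it is therefore symmetric, so the reversed rule is available in $\Pi$. I would use the reversed rule there, so the lift can move backward along the copy as required.

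The main obstacle is making this lifting precise. Concretely, I would define the lift as a walk in $\bC'$ that reaches every position where the original derivation applied a rule, with the same IDB at corresponding positions, and I would show that the IDBs match when we switch between copies. The cleanest approach is a zig-zag argument. Any IDB derived at a body joint $a_k$ of $\bC$ by a sub-derivation using only symmetric rules within $\bC(i,j)$ can be transported to each copy $a'_k,a_k,a''_k$. This works because the symmetric part of the derivation between the two ends can be reversed. The fixed vertebrae lie outside the body region and are traversed in the same direction in $\bC'$ as in $\bC$, so the non-symmetric rules are applied there unchanged. Finally, the goal rule $R_{N+1}$ is applied at a lift of $c_N$, which exists by surjectivity of $h$. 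This yields a derivation of $G$ on $\bC'$ whose non-symmetric applications still lie on $v_1,\dots,v_n$, completing the induction.
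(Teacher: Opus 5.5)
Your core idea matches the paper's: lift the derivation to a single unfolding $\bC'$, and cross the middle copy of the body by running the crossing part of the derivation backwards with reversed rules. Three things need fixing.

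The normalization step is not available here. Lemmata~\ref{lem:DLinjective} and~\ref{lem:DLconnected} and Observation~\ref{obs:lamDerivationIsCaterpillar} all presuppose that $\Pi$ solves a CSP. The latter two also replace $\Pi$ by a different program, so they tell you nothing about derivations or symmetric rules of the given $\Pi$. The paper simply works with the given derivation and uses that it cannot skip over a joint.

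More importantly, the lifting you defer as the ``main obstacle'' is the whole proof. The formulation you propose, that every IDB derived at a body joint $a_k$ can be transported to all three copies, is false in general: the middle copy is reachable from the rest of $\bC'$ only through $a'_j$ and $a''_i$. What works is the paper's explicit surgery (with $a=a_i$, $b=a_j$). Replace every stretch $P_s(a)\vdash_{R_{s+1}}\cdots\vdash_{R_t}P_t(b)$ with $c_{s+1},\dots,c_{t-1}\notin\{a,b\}$ by three passes:
a forward pass through the first copy ending in $P_t(b')$;
a backward pass through the middle copy applying $\tilde R_t,\dots,\tilde R_{s+1}$ and ending in $P_s(a'')$;
a forward pass through the third copy ending in $P_t(b)$.
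Stretches from $b$ to $a$ are handled analogously. The rest of the derivation is kept, and excursions into the body that return to the same end are replayed in the copy adjacent to that end.

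Second, that backward pass reverses \emph{every} step of the stretch, including hair steps $P_\ell(c)\vdash P_{\ell+1}(c)$ at body joints. So the hair rules used inside the body must be symmetric too. Your justification (the fixed vertebrae lie outside the body) only covers rules whose EDB atom is matched on a vertebra tuple, and the lemma's hypothesis says nothing about hair rules. The paper's proof makes the same silent assumption, and without it the statement fails.

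Counterexample: let $\bC$ be the path $a_0\to a_1\to a_2$ with hairs $U(a_0)$, $H(a_1)$, $W(a_2)$. Let $\Pi$ consist of $P(x)\,{:}{-}\,U(x)$, $P'(x)\,{:}{-}\,P(y)\wedge E(y,x)$, $Q(x)\,{:}{-}\,P'(x)\wedge H(x)$, $Q'(x)\,{:}{-}\,Q(y)\wedge E(y,x)$, $G\,{:}{-}\,Q'(x)\wedge W(x)$, and the reverses of the two $E$-rules.
$\Pi$ derives $G$ on $\bC$ via $P(a_0),P'(a_1),Q(a_1),Q'(a_2)$, without applying a non-symmetric rule to any vertebra.
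Now take the $(a_0,a_2)$-unfolding $a_0\to a'_1\to a'_2\leftarrow a_1\leftarrow a''_0\to a''_1\to a_2$. Its least fixpoint is $P=\{a_0\}$, $P'=\{a'_1\}$, $Q=\{a'_1,a_1\}$, $Q'=\{a'_2\}$, so $G$ is never derived.

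A repair is to count a non-symmetric rule applied to a hair of a joint as applied to an adjacent vertebra. Fixing that vertebra keeps the joint out of the interior of every body. In Lemma~\ref{lem:SLAMimpliesUnfoldedCaterpillar} this still yields at most $n$ vertebrae.
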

\begin{proof}
    Let $(a_0,\dots,a_m)$ be the spine of $\bC$ and $\phi\wedge\psi$ be the conjunctive query representation of $\bC$. Consider a derivation of $\Pi$ that derives the goal predicate on $\bC$. Let $R_0,\dots,R_{k+1}$ be the rules of $\Pi$ used in this derivation. Since $\Pi$ is linear and monadic, we can use the notation
    \begin{align*}
        \vdash_{R_0}P_0(c_0)\vdash_{R_1}\cdots\vdash_{R_k}P_k(c_k)\vdash_{R_{k+1}}G
    \end{align*}
    for this derivation. 
    Let $\bC'$ be an $(a,b)$-unfolding of $\bC$ such that no $v_1,\dots,v_n$ lies between $a$ and $b$. 
    We will show that $\Pi$ can derive the goal predicate on $\bC'$ without applying non-symmetric rules to $v_1,\dots,v_n$.
    The lemma then follows by induction.
    %By induction it suffices to show the statement for the case that $\bC'$ is an $(a_i,a_j)$-unfolding of $\bC$ that fixes $v_1,\dots,v_n$. 

    Note that if $P_\ell(c_\ell)\vdash_{R_{\ell+1}}P_{\ell+1}(c_{\ell+1})\vdash_{R_{\ell+2}}P_{\ell+2}(c_{\ell+2})$ and there is no relation in $\bC$ that contains a tuple which contains both $c_\ell$ and $c_{\ell+2}$, then $c_{\ell+1}$ is a joint. Hence the derivation can not jump over joints, i.e., for all $i<j$ and all joints $c$ between $c_i$ and $c_{j}$ there is a $\ell\in[i,j]$ with $c_\ell=c$.
    Consider an interval
    \begin{align*}
        P_i(c_i)\vdash_{R_{i+1}}\cdots\vdash_{R_j}P_j(c_j)
    \end{align*}
    of the derivation on $\bC$ such that $\{c_{i},c_j\}=\{a,b\}$ and $c_{i+1},\dots,c_{j-1}\notin\{a,b\}$ 
    If $c_i=a$ and $c_j=b$, then replace this interval with
    \begin{align*}
P_i(a)&\vdash_{R_{i+1}}P_{i+1}(c'_{i+1})\vdash_{R_{i+2}}\cdots\vdash_{R_{j-1}} P_{j-1}(c'_{j-1})\vdash_{R_j}P_j(b')\\
%\vdash_{\tilde R_{j-1}} P_{j-1}(c_{j-1})
P_i(a'')&\dashv_{\tilde R_{i+1}}P_{i+1}(c_{i+1})\dashv_{\tilde R_{i+2}}\cdots\dashv_{\tilde R_{j-1}} P_{j-1}(c_{j-1})\dashv_{\tilde R_j}P_j(b')\\
P_i(a'')&\vdash_{R_{i+1}}P_{i+1}(c''_{i+1})\vdash_{R_{i+2}}\cdots\vdash_{R_{j-1}} P_{j-1}(c''_{j-1})\vdash_{R_j}P_j(b)
    \end{align*}
    where $\tilde R_\ell$ is the reversed rule of $R_\ell$ and $c'$ and $c''$ come from the naming convention fixed in Definition~\ref{def:unfolding}.
    Since no $v_1,\dots,v_n$ lies between $a$ and $b$ the rules $R_{i+1},\dots, R_{j-1}$ are symmetric. Hence $\tilde R_{i+1},\dots,\tilde R_{j-1}$ are also rules of $\Pi$.
    If $c_i=b$ and $c_j=a$, then we replace the interval analogously. The resulting derivation is a derivation of $\Pi$ on $\bC'$ that derives the goal predicate and never applies a non-symmetric rule to $v_1,\dots,v_n$.
\end{proof}

\begin{lemma}\label{lem:SLAMimpliesUnfoldedCaterpillar}
    If $\Csp(\bB)$ is solved by a s$_n$lam Datalog program $\Pi$, then $\bB$ has  caterpillar duality and $\overline{\Csp(\bB)}$ is closed under $n$-fixed unfoldings of caterpillars. 
\end{lemma}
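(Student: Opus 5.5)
Caterpillar duality comes first and is easy. An s$_n$lam Datalog program is in particular a linear arc monadic Datalog program solving $\Csp(\bB)$. By Theorem~\ref{thm:mainOld} ((2)$\Rightarrow$(1)), $\bB$ therefore has caterpillar duality. The substance of the lemma is the closure of $\overline{\Csp(\bB)}$ under $n$-fixed unfoldings.

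For closure, let $\bC$ be a caterpillar with $\bC\not\to\bB$. We must find vertebrae $v_1,\dots,v_n$ such that every unfolding of $\bC$ fixing them lies outside $\Csp(\bB)$. Since $\Pi$ solves $\Csp(\bB)$, it derives the goal predicate on $\bC$. By Lemmas~\ref{lem:DLinjective} and~\ref{lem:DLconnected} we may assume all rules are injective and connected; the passage to connected rules preserves the quasi-order structure, as in the symmetric version of that lemma. Write the derivation in the linear form
\[\vdash_{R_0}P_0(c_0)\vdash_{R_1}\cdots\vdash_{R_k}P_k(c_k)\vdash_{R_{k+1}}G.\]

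The key point is that the derivation uses at most $n$ non-symmetric rules. Along the derivation the IDBs satisfy $P_{\ell+1}\leq P_\ell$, because each rule has head IDB below or equal to body IDB. A rule whose reverse is absent must have $P_{\ell}\not\leq P_{\ell+1}$, which is a strict step $P_{\ell+1}<P_\ell$. The strict steps form a strict chain of IDBs, which has at most $n+1$ elements, so there are at most $n$ such steps. Each non-symmetric rule application is applied to at most one vertebra; rules applied to hairs do not count. Hence the set of vertebrae to which non-symmetric rules are applied has size at most $n$, and we pad it arbitrarily to exactly $n$ vertebrae $v_1,\dots,v_n$; if $\bC$ has fewer than $n$ vertebrae, we take all of them, and then the only unfolding fixing them is $\bC$ itself.

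By Lemma~\ref{lem:lamDatalogUnfoldings}, $\Pi$ derives the goal predicate on every unfolding $\bC'$ of $\bC$ that fixes $v_1,\dots,v_n$. Since $\Pi$ solves $\Csp(\bB)$, this gives $\bC'\not\to\bB$, so $\bC'\in\overline{\Csp(\bB)}$, which proves closure.

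The main obstacle is making the counting of non-symmetric rules rigorous. Two issues must be checked. First, one has to verify that \emph{applying a non-symmetric rule to a vertebra} in Lemma~\ref{lem:lamDatalogUnfoldings} matches rules that are genuinely non-reversible in $\Pi$. Second, the modification of $\Pi$ in Lemma~\ref{lem:DLconnected} must keep the program $n$-almost symmetric with the same quasi-order. For the second point I would argue that every rule added there has the same head IDB and body IDB as the rule it replaces, or is a goal rule, and in the symmetric case reversed rules are added alongside. Alternatively, one can avoid Lemma~\ref{lem:DLconnected} entirely: the derivation on a caterpillar can be projected onto its single connected component, since $\bC$ is connected.
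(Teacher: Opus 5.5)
Your skeleton is the paper's argument: caterpillar duality from Theorem~\ref{thm:mainOld}, a goal derivation on $\bC$, at most $n$ non-symmetric applications, then Lemma~\ref{lem:lamDatalogUnfoldings}. Your strict-chain count is a correct expansion of the paper's one-line ``$m\leq n$''.

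\textbf{The gap.} The step that fails is ``rules applied to hairs do not count''. You inherit it from the statement of Lemma~\ref{lem:lamDatalogUnfoldings}, and the paper's own proof of this lemma relies on it in the same way. The proof of that lemma runs the body segment forward, backward, forward. It therefore needs the reverse of \emph{every} rule used strictly inside the body, including rules matched to hairs of inner joints.

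\textbf{A counterexample.} As stated, Lemma~\ref{lem:lamDatalogUnfoldings} is false. Let $\bC$ have spine $(a_0,a_1,a_2,a_3)$, vertebrae $E_1(a_0,a_1)$, $E_2(a_1,a_2)$, $E_3(a_2,a_3)$, and hairs $U(a_0)$, $H(a_1)$, $V(a_3)$. Let $\Pi$ consist of:
\begin{itemize}
\item $P(x) \; {:}{-} \; U(x)$;
\item rules propagating $P$ along $E_1$ in both directions;
\item the only non-symmetric rule $Q(x) \; {:}{-} \; P(x)\wedge H(x)$;
\item rules propagating $Q$ along $E_2$ and $E_3$ in both directions;
\item $G \; {:}{-} \; Q(x)\wedge V(x)$.
\end{itemize}
This program is $1$-almost symmetric via $Q<P$. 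It solves $\Csp(\bB)$ for the structure $\bB$ on $\{0,1\}^2$ given by:
\begin{itemize}
\item $U^{\bB}=\{1\}\times\{0,1\}$;
\item $E_1^{\bB}$ the pairs with equal first coordinate, and $E_2^{\bB}=E_3^{\bB}$ the pairs with equal second coordinate;
\item $H^{\bB}=\{(p,q)\mid p=1\Rightarrow q=1\}$;
\item $V^{\bB}=\{0,1\}\times\{0\}$.
\end{itemize}
The derivation on $\bC$ applies no non-symmetric rule to any vertebra. Yet on the $(a_0,a_2)$-unfolding, $P$ only reaches $a_0,a'_1$ and $Q$ only reaches $a'_1,a'_2,a_1$. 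So the goal is not derived, and this unfolding lies in $\Csp(\bB)$.

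\textbf{The repair.} It keeps your count. For each non-symmetric application whose EDB atom is matched to a tuple in a hair of a joint $c$, fix some vertebra incident to $c$. An unfolding fixing a vertebra incident to $c$ never has $c$ strictly between the endpoints of its body. Hence every rule used strictly inside the body is reversible, and the forward--backward--forward replacement goes through. The non-symmetric applications stay at the same head/tail tuples, so the chosen set remains valid for the next unfolding step. Your strict-chain argument bounds the total number of non-symmetric applications, hair ones included, by $n$, so you still fix at most $n$ vertebrae.

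\textbf{A smaller point.} Drop the fallback of ``projecting onto the connected component of $\bC$''. Connectivity of $\bC$ does not stop a disconnected rule from jumping across the body. You do need connected rules via Lemma~\ref{lem:DLconnected}, together with the order-preservation check you sketch.
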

\begin{proof}
    The Datalog program $\Pi$ is, in particular, linear arc monadic. Hence, by Theorem~\ref{thm:mainOld}, $\bB$ has caterpillar duality.
    Let $\bC$ be a caterpillar that has no homomorphism to $\bB$. Since $\Pi$ solves $\Csp(\bB)$, there exists a derivation of $\Pi$ on $\bC$ that derives the goal predicate. Let $v_1,\dots,v_{m}$ be the vertebrae of $\bC$ to which non-symmetric rules are applied.  By Lemma~\ref{lem:lamDatalogUnfoldings}, every unfolding of $\bC$ that fixes $v_1,\dots,v_m$ is also an element of $\overline{\Csp(\bB)}$. Since $\Pi$ is a s$_n$lam Datalog program, we have that $m\leq n$.   
    Therefore, $\overline{\Csp(\bB)}$ is closed under $n$-fixed unfoldings of caterpillars.
\end{proof}

\section{From Caterpillars to Hagemann-Mitschke Chains}
In this section we prove $(\ref{caterpillar})\Rightarrow(\ref{non-degenerate-minor})$ from Theorem~\ref{thm:main}. 
\begin{lemma}\label{lem:unfoldedCPDualityImpliesSigmaImpliesHM}
    Let $n\geq0$ and let $\bB$ be a relational structure with $n$-fixed  unfolded caterpillar duality. If $\Pol(\bB)$ does not satisfy a minor condition $\Sigma$, then $\Sigma$ implies (for finite structures) $\HM n$. 
\end{lemma}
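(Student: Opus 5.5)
We follow the strategy of the $n=0$ case in~\cite{bodirskyStarke2024symmetriclineararcmonadicUnfoldedCaterpillar}. By Lemma~\ref{lem:WlogSingleFunctionSymbol} we may assume that $\Sigma$ has a single function symbol $f$ of arity $m$. Suppose $\Pol(\bB)\not\models\Sigma$. By Lemma~\ref{lem:ind}, the indicator structure $\bI$ of $\Sigma$ with respect to $\bB$ has no homomorphism to $\bB$. Fix a set $\mathcal F$ of caterpillars that witnesses $n$-fixed unfolded caterpillar duality. Then some caterpillar $\bC\in\mathcal F$ maps to $\bI$ via some $g$, and there are vertebrae $v_1,\dots,v_n$ of $\bC$ such that every unfolding of $\bC$ fixing them lies in $\mathcal F$, hence does not map to $\bB$. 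To show that $\Sigma$ implies $\HM n$, let $\mathcal C$ be any clone on a finite set $D$ with $\mathcal C\models\Sigma$. We must produce operations in $\mathcal C$ forming a Hagemann--Mitschke chain of length $n$. The plan is to argue by contraposition on $\mathcal C$, using the caterpillar as a combinatorial template.

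The key device is a translation between $\Sigma$-derivations and walks in $\bI$. Each joint $c$ of $\bC$ is sent by $g$ to a class $[a]_\sim\subseteq B^m$. Each vertebra $(c,c')$ comes from a tuple in some $R^{\bC}$ mapped into a tuple of $R^{\bB^m}$ up to $\sim$. So moving along the spine records a sequence of coordinatewise relations between $m$-tuples, interleaved with $\sim$-steps, i.e.\ applications of identities of $\Sigma$. Following~\cite{bodirskyStarke2024symmetriclineararcmonadicUnfoldedCaterpillar}, the plan is to consider the spine of $\bC$ in $\bB^m$ coordinatewise and look at the ``pattern'' of which coordinates change. Then each non-fixed vertebra can be unfolded in the sense of Definition~\ref{def:unfolding}, which corresponds to backtracking along a $\sim$-chain. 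The fixed vertebrae are exactly the places where we cannot backtrack, i.e.\ the places where a ``$p_i(x,x,y)\approx p_{i+1}(x,y,y)$'' step is forced.

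Concretely, I would argue by contraposition in the following way. Assume $\mathcal C\not\models\HM n$. By Theorem~\ref{thm:TnAreHMBlocker} applied to a structure with polymorphism clone $\mathcal C$ (or directly to its invariant relations), there is a minion homomorphism from $\mathcal C$ to $\Pol(\T_{n+2})$. Since $\mathcal C\models\Sigma$, this gives $\Pol(\T_{n+2})\models\Sigma$. So it suffices to prove the statement for $\mathcal C=\Pol(\T_{n+2})$: show that $\Pol(\T_{n+2})\not\models\Sigma$. By Lemma~\ref{lem:ind}, this means showing that the indicator structure $\bI'$ of $\Sigma$ with respect to $\T_{n+2}$ has no homomorphism to $\T_{n+2}$. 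Since $(\{\P_{n+2}\},\T_{n+2})$ is a duality pair, we need $\P_{n+2}\to\bI'$, i.e.\ a directed walk of length $n+1$ in $\bI'$.

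To get this walk, we transfer the caterpillar $\bC$ to $\T_{n+2}$. For each $m$-tuple over $B$ appearing along $g(\bC)$, we build an $m$-tuple over $[0,n+1]$ by assigning to each coordinate a ``level''. The level of a coordinate should be the number of fixed vertebrae already passed in which that coordinate had to move non-symmetrically. The goal is that each fixed vertebra $v_k$ yields an edge of $\T_{n+2}^m$ between consecutive levels, modulo $\sim$. Non-fixed vertebrae, by contrast, should collapse to $\sim$-equivalences: otherwise we could unfold there and obtain an unfolding of $\bC$ fixing $v_1,\dots,v_n$ that maps to $\bB$, a contradiction. One extra non-symmetric step is contributed by the head and tail hairs, since $\bC\not\to\bB$ while the hairs force the ends. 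This yields $n+1$ strict edges and hence $\P_{n+2}\to\bI'$.

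The main obstacle is making this level assignment rigorous, and in particular proving that a non-fixed vertebra gives rise to a $\sim$-equivalence in $\bI'$ rather than a strict edge. My first attempt would be to choose $\bC$ minimal among the caterpillars in $\mathcal F$ mapping to $\bI$. Then I would show that a non-fixed vertebra producing a strict level increase allows an unfolding of $\bC$ that still maps to $\bI$ and is fixed appropriately, either contradicting minimality or producing a homomorphism to $\bB$ via a coordinate projection.
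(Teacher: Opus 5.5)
\textbf{There is a genuine gap: the reduction is fine, but the step that carries all the content, exhibiting a walk in $\bI'$, fails as sketched.}

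Your opening reduction is sound. If $\Pol(\T_{n+2})\not\models\Sigma$, then Theorem~\ref{thm:TnAreHMBlocker} shows that every finite clone satisfying $\Sigma$ satisfies $\HM n$. You first pass to finitely many invariant relations, which is possible because the domain is finite and $\HM n$ is a finite condition. Everything then rests on the walk in $\bI'$, and there are three problems with it.
\begin{enumerate}
\item There is an off-by-one. $\P_{n+2}$ has $n+2$ edges, but you count only $n+1$ (the $n$ fixed vertebrae plus one step from the end hairs). That would only give $\bI'\not\to\T_{n+1}$, i.e.\ that $\Sigma$ implies $\HM{n-1}$.
\item The level assignment cannot produce edges of $\T_{n+2}^m$.
\begin{itemize}
\item Such an edge needs every coordinate to increase strictly, but your levels increase only at coordinates that ``moved non-symmetrically''.
\item A coordinate is not a persistent object along the spine. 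The $\sim$-steps (identities of $\Sigma$) link coordinate $k$ of one representative to coordinate $\ell$ of another whenever the entries agree, and a vertebra may relate coordinate $k$ on one side to $\ell\neq k$ on the other. So labels must be constant on linked coordinates, and "the level of a coordinate" is not well defined without extra machinery.
\item Your labels never decrease along the spine and take values in $[0,n+1]$, so a walk built from them has at most $n+1$ edges. A longer walk in $\bI'$ must use identities of $\Sigma$ that replace a representative by one with smaller entries, and your scheme does not model this.
\end{itemize}
\item The minimality argument does not make sense: unfoldings of $\bC$ are larger than $\bC$, so they cannot contradict minimality.
\end{enumerate}

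The missing idea is a two-valued reachability labelling together with a dichotomy.
\begin{itemize}
\item Choose representatives $t_c^0,\dots,t_c^{h+1}$ of the image of each joint $s_c$ that witness the hairs and vertebrae.
\item Link coordinates by equal entries ($\sim$) and by the coordinatewise vertebra relations in $\bB$.
\item Allow non-fixed vertebrae to be traversed in both directions, and fixed ones only forward.
\end{itemize}
If some start coordinate reaches some end coordinate, read off the entries of $\bB$ along that walk. This gives an unfolding of $\bC$ fixing $v_1,\dots,v_n$ with a homomorphism to $\bB$, contradicting the duality. This is where your ``coordinate projection'' belongs.

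Otherwise, label reachable coordinates $x$ and the rest $y$. Across a fixed vertebra $v_c$ every coordinate then has type $(x,x)$, $(y,x)$ or $(y,y)$, never $(x,y)$. This pattern defines $p_c$ as a minor of $f$. Chaining the identities of $\Sigma$ through the segments between consecutive fixed vertebrae then derives $\HM n$ directly from $\Sigma$, with no detour through $\T_{n+2}$.
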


The proof is a generalization of the proof of Lemma~3.10 in~\cite{bodirskyStarke2024symmetriclineararcmonadicUnfoldedCaterpillar}. The rough idea of the proof is the following:
\begin{enumerate}
	\item Consider the indicator structure $\bI$ of $\Sigma$ with respect to $\bB$. 
	\item Since $\bB\not\models\Sigma$, there must be a forbidden caterpillar $\bC$ that has a homomorphism $h$ into $\bI$.
	\item Find the Hagemann-Mitschke chain in the image of $h$. 
\end{enumerate} 
\begin{proof} 
Let $\tau$ be the signature of $\bB$. 
Suppose that 
$\Pol(\bB)\not\models\Sigma$. 
By Lemma~\ref{lem:WlogSingleFunctionSymbol}, we may assume that $\Sigma$ only involves a single function symbol $f$ of arity $K$. 
Let $\bI$ be the indicator structure of $\Sigma$ with respect to $\bB$. Then $\bI\not\to\bB$ (Lemma~\ref{lem:ind}). 
By definition $\bB$ has caterpillar duality, hence there must be a caterpillar $\bC$ with a homomorphism to $\bI$ that does not have a homomorphism to $\bB$. 
Let $(s_0,\dots,s_m)$ be the spine of $\bC$. By $n$-fixed unfolded caterpillar duality there are vertebrae $v_1,\dots,v_n$ of $\bC$ such that no unfolding of $\bC$ that fixes $v_1,\dots,v_n$ has a homomorphism to $\bB$.
In the following we will show that  
\begin{itemize}
    \item $\Sigma$ implies $\HM n$, or that, 
    \item there exists an unfolding of $\bC$ fixing $v_1,\dots,v_n$ that has a homomorphism to $\bB$, which contradicts the $n$-fixed unfolded caterpillar duality of $\bB$.
\end{itemize} 
Showing this concludes the proof.

discuss case $m=0$ and $\psi_0$ only has 1 conjunct

Let $h$ be the maximal number of hairs of any $s_i$, i.e. the maximal number of conjuncts in any $\psi_i$. 

Let $\alpha$ be a homomorphism from $\bC$ to $\bI$. Note that the element $\alpha(s_i)$ of the indicator structure $\bI$ is a set of $K$-tuples of elements in $B$. To be able to better work with $\alpha$ we will choose representatives of these sets. We pick them in such a way that they witness that $\alpha$ is a homomorphism. %See Figure~\ref{fig:explainTiNotation} for a visualization of the chosen tuples in an example.   
Formally, fix tuples $t_{0}^0,\dots,t_{0}^{h+1},t_{1}^0,\dots,t_{1}^{h+1},\dots,t_{m}^0,\dots,t_{m}^{h+1}$ in $\bB^K$ such that
\begin{enumerate}
\item[(P1)] {for every $c \in [0,m]$ we have $t_{c}^0,\dots,t_{c}^{h+1}\in \alpha(s_c)\subseteq B^K$,}
    \item[(P2)] for every $c \in [0,m]$ and every conjunct $\psi(s_c)$ of $\psi_c(s_c)$ there is an $i\in[h]$ such that {$\bB^K\models\psi(t_c^i)$} %{, here $t_{i,j,k}$ denotes the $k$-th entry of the tuple $t_{i,j}$,}
    and
    %\item for every $i\in [n]$ we have that $\bB\models \phi_i(t_{i-1,m+1,k},t_{i,0,k})$ for all $k\in[K]$. 
    \item[(P3)] {for every $c\in [0,m-1]$ we have  $\bB^K\models\phi_c(t_{c}^{h+1},t_{c+1}^0)$.} 
\end{enumerate}
{
The existence of such tuples follows directly from $\alpha$ being a homomorphism and $\bI$ being an indicator structure. 
Note that   $\bB^K\models\psi(t_c^i)$ is equivalent to $\bB\models \psi((t_c^i)_k)$ for all $k\in[K]$ and that, similarly,  $\bB^K\models\phi_c(t_{c}^{h+1},t_{c+1}^0)$ is equivalent to $\bB\models \phi_c((t_{c}^{h+1})_k,(t_{c+1}^0)_k)$ for all $k\in[K]$.  
} 
%Such tuples exist directly by virtue of $h$ being a homomorphism and $\bI$ being an indicator structure.}

\begin{figure}
    \centering
        \begin{tikzpicture}[scale=0.55]
        	\node at (0,-1) {\figureCaterpillar};
        	
        	\node at (0,-5) {\figureBlocks};
        	
        	%\node[opacity=0.5] at (0,-10) {\figureBlocks};
        	%todo opacity version with x any instead of numbers?
        	%\node at (0,-10) {\figureBraidOverlay};
        	\node at (0,-10) {\figureBraid};
        	%\node at (0,-10) {\figureBlocksBraids};
        	
    \end{tikzpicture}
    \caption{
    At the top there is a picture of two fixed vertebrae of a caterpillar $\bC$.
    Below there is a visualization of the map $\pi$ and the relations $\connect{0}{c}$, $\protect\toEdge_{i_c}$,  $\connect{c}{c+1}$, $\protect\toEdge_{i_{c+1}}$,  $\connect{c+1}{c+2}$, $\protect\toEdge_{i_{c+2}}$, and  $\connect{c+2}{c+3}$. Observe that there is a 3-braid embedded in this picture.
    At the bottom we see that this construction yields the identities $p_c(x,x,y)\approx p_{c+1}(x,y,y)$ and $p_{c+1}(x,x,y)\approx p_{c+2}(x,y,y)$.}
    \label{fig:fixedEdgesToHM}
\end{figure}

Define binary relations $\sim$ and $\toEdge_a$ on $[0,m]\times [0,h+1]\times [K]$ for $c\in[0,m-1]$: %for $(c,i,k),(d,j,\ell)\in [n]\times [0,m+1]\times [K]$  we have 
\begin{itemize}
    %\item $(c,i,k)\sim (d,j,\ell)$ if and only if {$c=d$ and $(t_c^i)_k=(t_d^j)_\ell$, in other words, the tuples $t_c^i$ and $t_d^j$ represent the same element in $\bI$ and the element of $\bB$ that is the $k$-th entry of $t_c^i$ is also the $\ell$-th entry of $t_d^j$}
    %\item $(c,i,k) \toEdge_a (d,j,\ell)$  if and only if $c+1=d=a$, $i=m+1$, $j=0$, and {$\bB\models\phi_d((t_c^i)_k,(t_d^j)_\ell)$}.
    \item $(c,i,k)\sim (c,i+1,\ell)$ if and only if {$(t_c^i)_k=(t_c^{i+1})_\ell$, note that the tuples $t_c^i$ and $t_c^{i+1}$ represent the same element in $\bI$}
    \item $(c,h+1,k) \toEdge_c (c+1,0,\ell)$  if and only if {$\bB\models\phi_c((t_c^{m+1})_k,(t_{c+1}^0)_\ell)$}, {note that because of (P3) this conditions holds whenever $k=\ell$, though it may also hold for some $k\neq\ell$.}
\end{itemize}
We write $\fromEdge_c$ for the converse of $\toEdge_c$. 
For $c\in [n]$ the vertebra $v_c$ is of the form $(s_{i_c},s_{i_c+1})$. Define $i_0=-1$ and $i_{n+1}=m$. 
For $c\in [0,n]$ we define the binary relation $\connect{c}{c+1}$ on $[0,m]\times [0,h+1]\times [K]$ as the reflexive transitive  symmetric closure of $\sim\cup \toEdge_{i_c+1}\cup\dots\cup\toEdge_{i_{c+1}-1}$.
Now we consider two cases. The first case
is that there are no $k,\ell\in[K]$ such that 
\[(0,0,k) \connect{0}{1}{}\circ{} \toEdge_{i_1}{}\circ{} \connect{1}{2}{}\circ{} \toEdge_{i_2} {}\circ{}\dots{}\circ{} \connect{n-1}{n}{}\circ{} \toEdge_{i_n}   {}\circ{} \connect{n}{n+1} (m,h+1,\ell).\] 

%$H_c=\{k\in[K]\mid \exists\ell\in[K]\ (0,0,\ell) {}_0\to_{i_c} (i_c,h+1,k)\}$

%$T_c=\{k\in[K]\mid \exists\ell\in[K]\ (i_c+1,0,k) {}_{c}\to_{n+1} (m,h+1,\ell)\}$

%$M_c=\{k\in[K]\mid \exists \ell\in T_c\ (i_c,h+1,k) \connect{c-1}{c}\circ{}\toEdge_{i_c}(i_{c}+1,0,\ell)\}\cup \{k\in[K]\mid \exists \ell\in H_c\ (i_c,h+1,\ell) \connect{c}{c+1}{}\circ{}\toEdge_{i_c}(i_{c}+1,0,k)\}$

%Note that by assumption $T_c\cap H_c=T_c\cap M_c=M_c\cap H_c=\emptyset$.

Define the map $\pi$ from $[0,m]\times [0,h+1]\times [K]$ to $\{x,y\}$ with $\pi(c,i,\ell)=x$ if there exists $\ell\in[K]$ and $d\in[0,n]$ such that 
\[(0,0,k) \connect{0}{1}{}\circ{} \toEdge_{i_1}{}\circ{} \connect{1}{2}{}\circ{} \dots{}\circ{} \toEdge_{i_d}   {}\circ{} \connect{d}{d+1} (c,i,\ell)\]
and $\pi(c,i,\ell)=y$ otherwise.

See Figure~\ref{fig:fixedEdgesToHM} for a visualization. Note the similarities with Figure~\ref{fig:HM3}.
Observe the following.
\begin{enumerate}
    \item {By definition} $\pi((0,0,k))=x$ for all $k\in[K]$.
    \item By assumption $\pi((m,h+1,\ell))=y$ for all $\ell\in[K]$.
    \item Let $c\in[0,m]$ and let $i\in[0,h]$. {By (P1) both $t_c^i$ and $t_c^{i+1}$ represent the element $s_c$ in the indicator structure $\bI$. Hence, by definition of the indicator structure, when viewing the entries of $t_c^i$ and $t_c^{i+1}$ as variables, $\Sigma$ implies $f(t_c^i)\approx f(t_c^{i+1})$. By definition of $\pi$ and $\sim$ we have that \[\text{$(t_c^{i})_{k_1}=(t_c^{i+1})_{k_2}$ implies $\pi(c,i,{k_1})=\pi(c,i+1,{k_2})$ for all $k_1,k_2\in[K]$.}\] Therefore the minor condition $f(t_{c,i})\approx f(t_{c,i+1})$ implies the minor condition} 
    \[f(\pi(c,i,1),\dots,\pi(c,i,K))\approx f(\pi(c,i+1,1),\dots,\pi(c,i+1,K)).\]
    \item Let $c\in[0,m-1]$.  If $c\notin \{i_{1},\dots i_n\}$, then, {by definition of $\toEdge_c$ and (P3), we have that}  $\pi(c,h+1,k)=\pi(c+1,0,k)$ for all $k\in[K]$. Hence, 
    \[f(\pi(c,h+1,1),\dots,\pi(c,h+1,K))= f(\pi(c+1,0,1),\dots,\pi(c+1,0,K)).\]
    If $c\in \{i_{1},\dots i_n\}$, then, {by definition of $\toEdge_{c}$ and (P3), we have that}  $\pi(c,h+1,k)=x$ implies $\pi(c+1,0,k)=x$ for all $k\in[K]$. %Hence, 
    %\[f(\pi(c,m+1,1),\dots,\pi(c,m+1,K))= f(\pi(c+1,0,1),\dots,\pi(c+1,0,K)).\]
\end{enumerate}

Define the height-1 condition $\Sigma'$ as the set consisting of all identities from $\Sigma$ and the identities 
\begin{itemize}
    \item $p_0(x)\approx f(x,\dots,x)$, 
\item for $c\in[n]$ \[p_c(x,y,z)\approx f(\tau(\sigma(c,1)),\dots, \tau(\sigma(c,K))\] 
where $\sigma(c,j)=(\pi(i_c,h+1,j),\pi(i_c+1,0,j))$ and $\tau(x,x)= x, \tau(y,x)= y$, and $\tau(y,y)= z$, and 
\item $p_{n+1}(y)\approx f(y,\dots,y)$.
\end{itemize}
Since each $p_c$ only occurs in one identity of $\Sigma'$ and in this identity all variables also occur as parameters of $p_c$ we have that $\Sigma'$ and $\Sigma$ are equivalent. Note that $p_c(x,y,z)=p_c(\tau(x,x),\tau(y,x),\tau(y,y))$ and we could have also used $(x,x), (y,x)$, and $(y,y)$ as variable names, $\tau$ is just a renaming for better readability.

Define $\pi_1(x_1,x_2)=x_1$ and $\pi_2(x_1,x_2)=x_2$. Then, $\Sigma'$ (and hence $\Sigma$) implies 
\begin{align*}
p_0(x)&\approx f(x,\dots,x)\\ &=f(\pi(0,0,1),\dots,\pi(0,0,K))\\
& \approx f(\pi(0,1,1),\dots,\pi(0,1,K))\\
& \quad \vdots\\
%& \approx f(\pi(i_1,h+1,1),\dots,\pi(i_1,h+1,K))\approx p_1(x,y,y)
& \approx f(\pi(i_1,h+1,1),\dots,\pi(i_1,h+1,K))\\
& = f(\pi_1(\sigma(1,1)),\dots,\pi_1(\sigma(1,K)))\\
&\approx p_1(\pi_1(x,x),\pi_1(y,x),\pi_1(y,y))\\
&=p_1(x,y,y)
\intertext{for $c\in[n-1]$}
p_c(x,x,y)
&=p_c(\pi_2(x,x),\pi_2(y,x),\pi_2(y,y))\\
&\approx f(\pi_2(\sigma(c,1)),\dots,\pi_2(\sigma(c,K)))\\
&= f(\pi(i_c+1,0,1),\dots,\pi(i_c+1,0,K))\\
& \approx f(\pi(i_c+1,1,1),\dots,\pi(i_c+1,1,K))\\
& \quad \vdots\\
& \approx f(\pi(i_{c+1},h+1,1),\dots,\pi(i_{c+1},h+1,K))\\
& = f(\pi_1(\sigma(c+1,1)),\dots,\pi_1(\sigma(c+1,K)))\\
&\approx p_{c+1}(\pi_1(x,x),\pi_1(y,x),\pi_1(y,y))\\
&=p_{c+1}(x,y,y)
\intertext{and}
p_n(x,x,y)
&=p_n(\pi_2(x,x),\pi_2(y,x),\pi_2(y,y))\\
&\approx f(\pi_2(\sigma(n,1)),\dots,\pi_2(\sigma(n,K)))\\
&= f(\pi(i_n+1,0,1),\dots,\pi(i_n+1,0,K))\\
& \approx f(\pi(i_n+1,1,1),\dots,\pi(i_n+1,1,K))\\
& \quad \vdots\\
& \approx f(\pi(m,h+1,1),\dots,\pi(m,h+1,K))\\
&=f(y,\dots,y)\\
&\approx p_{n+1}(y).
\end{align*}
Hence, $\Sigma$ implies $\HM n$, as desired. The case $n=0$ is slightly different, for a detailed discussion of this case see proof of Lemma~3.10 in~\cite{bodirskyStarke2024symmetriclineararcmonadicUnfoldedCaterpillar}. 

Now we consider the case that there are $k,\ell\in[K]$ such that 
\[(0,0,k) \connect{0}{1}{}\circ{} \toEdge_{i_1}{}\circ{} \connect{1}{2}{}\circ{} \toEdge_{i_2} {}\circ{}\dots{}\circ{} \connect{n-1}{n}{}\circ{} \toEdge_{i_n}   {}\circ{} \connect{n}{n+1} (m,h+1,\ell).\] 
Then there exists a sequence  $a_0,\dots,a_M$ of distinct elements in $[0,m]\times[0,h+1]\times[K]$ such that $a_0=(0,0,k)$, $a_M=(m,h+1,\ell)$, and for every $i\in[M]$ we have $a_{i-1}\sim a_i$, $a_{i-1}\toEdge_j a_i$ for some $j\in[m]$, or $a_{i-1}\fromEdge_j a_i$ for some $j\in[m]\setminus\{i_1,\dots,i_n\}$.  
We will use this sequence to construct an unfolding of $\bC$ that fixes $v_1,\dots,v_n$ and has a homomorphism to $\bB${, contradicting that $\bB$ has $n$-fixed unfolded caterpillar duality while $\bC\not\to\bB$.} See Figure~4 in \cite{bodirskyStarke2024symmetriclineararcmonadicUnfoldedCaterpillar} 
for a visualization of the construction of the unfolding of $\bC$ that follows.  
First partition the sequence $0,\dots,M$ into the sequence $A_0,\dots,A_{M'}$ such that 
\begin{itemize}
    \item $a_0\in A_0$ and $a_M\in A_{M'}$,
    \item {each $A_i$ consists of an interval, i.e., for all $i\in[0,M']$ there are $0\leq j_1\leq j_2\leq M$ such that $A_i=\{{j_1},{j_1+1},\dots,{j_2}\}$},
    \item for all $i\in[0,M']$ and $j\in A_i$  with $j+1\in A_i$ we have $a_{j}\sim a_{j+1}$,
    \item for all $0\leq i_1<i_2\leq M'$ and all ${j_1}\in A_{i_1}$, ${j_2}\in A_{i_2}$ we have $j_1<j_2$, and
    \item for all $i\in[M']$ we have $\max(A_{i-1})+1=\min(A_i)$ and there is a $c\in[m]$ with $a_{\max(A_{i-1})}\toEdge_c a_{\min(A_i)}$ or $a_{\max(A_{i-1})}\fromEdge_c a_{\min(A_i)}$.%, where $\max(A_i)=a_{\max\{j\mid a_j\in A_i\}}$ and 
\end{itemize}
%{Note that for all $i$ all tuples $t_c^{i'}$ for which there are $k$ and $j$ such that $(c,i',k)=a_j$ and $j\in A_i$ represent the same element in $\bC$,in particular, they all have the same value for $c$. }
We write $A_{i}\toEdge_c A_{i+1}$ if $a_{\max(A_{i-1})}\toEdge_c a_{\min(A_i)}$ and  $A_{i}\fromEdge_c A_{i+1}$ if $a_{\max(A_{i-1})}\fromEdge_c a_{\min(A_i)}$. Note that 
\begin{align*}
|\{c\mid (c,i',k) \in \{a_j\mid j \in A_i\}\}|=1.\tag{$\ast$}
\end{align*} 
Define the primitive positive formula $\Phi(0,\dots,M')$ by adding conjuncts in the following way: 
For all $i$ and all $c$
\begin{itemize}
    \item if $A_{i}\toEdge_c A_{i+1}$, then add the conjunct $\phi_c(i,{i+1})$,
    \item if $A_{i}\fromEdge_c A_{i+1}$, then add the conjunct $\phi_c({i+1},{i})$,
    \item if $A_{i-1}\toEdge_c A_{i}\toEdge_{c+1} A_{i+1}$, then add all conjuncts of $\psi_c(i)$, and %for all $\phi(x)$ corresponding to a neighbour $(t,R)$ of $v_j$ in the incidence graph of $\bC$ that is not in $P$ add the conjunct $\phi({i})$, and
    \item if $A_{i-1}\fromEdge_{c+1} A_{i}\fromEdge_{c} A_{i+1}$, then add all  conjuncts of $\psi_c(i)$. %for all $\phi(x)$ corresponding to a neighbour $(t,R)$ of $v_j$ in the incidence graph of $\bC$ that is not in $P$ add the conjunct $\phi({i})$.
\end{itemize}
Note that in the last two cases, by definition of $\sim$,  \begin{align*}
[0,h+1]=\{i'\mid (c,i',k) \in \{a_j\mid j \in A_i\}\}.\tag{$\ast\ast$}
\end{align*} 
Denote the canonical database of $\Phi$ by $\bC'$. Note that $\bC'$ is a caterpillar with spine $(0,\dots,M')$.
Observe that, by $(\ast)$, {for all $i\in[0,M']$ and $j_1,j_2\in A_{i}$ with $a_{j_1}=(c,i',k)$ and $a_{j_2}=(d,j',\ell)$ we have that $c=d$. Hence $t_c^{i'}$ and $t_d^{j'}$ both represent the element $s_c$ in $\bC$. By definition of $A_i$ and $\sim$ we have that $(t_c^{i'})_k=(t_d^{j'})_{\ell}$. Hence, the map $\iota\colon [0,M']\to B$ that maps $i$ to $(t_c^{i'})_k$ for some (any) $(c,i',k)$ for which there is a $j\in A_{i}$ with $(c,i',k)=a_{j}$ is well defined.}   %i\mapsto t_{\max(A_i)}$.
The map $\iota$ is a satisfying assignment of $\Phi$ in $\bB$:
\begin{itemize}
    \item If $A_{i}\toEdge_c A_{i+1}$, then $(c,i',k)=a_{\max(A_i)}\toEdge_j a_{\min(A_{i+1})}=(d,j',\ell)$ and $\iota(i)=(t_c^{i'})_k$ and $\iota(i+1)=(t_d^{j'})_{\ell}$. By definition of $\toEdge_j$ we have $\bB\models\phi_c((t_c^{i'})_k,(t_d^{j'})_{\ell})$.
    \item The case $A_{i}\fromEdge_c A_{i+1}$ is analogous.
    \item If $A_{i-1}\toEdge_c A_{i}\toEdge_{c+1} A_{i+1}$ and $\psi(i)$ is a conjunct of $\psi_c(i)$, then, by (P2), there is a $i'\in[h]$ such that $\bB\models \phi((t_c^{i'})_k)$ for all $k\in[K]$. By $(\ast\ast)$ there is some $\ell\in[K]$ and some $j\in A_i$ such that $(c,i',\ell)=a_j$. Hence $\iota(i)=(t_c^{i'})_\ell$ and $\bB\models \psi(\iota(i))$.

    \item The case $A_{i-1}\fromEdge_{c+1} A_{i}\fromEdge_{c} A_{i+1}$ is analogous.
\end{itemize}
Therefore, $\iota$ can be extended to a homomorphism from $\bC'$ to $\bB$. 
Similar to the proof of Lemma~\ref{lem:MaltimpliesLAMDatalogIsSLAMDatalog} we can use Lemma~\ref{lem:MultiUnfoldingCaterpillar} to show that $\bC'$ is an unfolding of $\bC$ that fixes $v_1,\dots,v_n$. This concludes the proof.
%To derive a contradiction it remains to show that $\bC'$ is an unfolding of $\bC$ that fixes $v_1,\dots,v_n$. 
%By $(\ast)$ the map $\mu\colon[0,M']\to[0,m]$ that maps $i$ to the single element in $\{c\mid (c,i',k) \in \{a_j\mid j \in A_i\}\}$ is well defined.
%Note that that the map $i\mapsto s_{\mu(i)}$ can be extended to a homomorphism from $\bC'$ to $\bC$. 
%Let $m_1,\dots,m_{??}\in\N$ such that  
%Since alternating sum $\geq 1$ it is an easy exercise to show that there is an $i$ such that $m_i\leq m_{i-1},m_{i+1}$. 
%we only give a sketch of the proof here define sequence $(m_1,\dots,m_p)$ of numbers, von fixed edges getrennte invervalle anschauen, wenn in drei aufeinanderfolgenden zahlen die mittlere kleiner gleich den anderen ist, dann kann gefaltet werden, da summe im intervall position muss es so ein tripel geben
\end{proof}

It is possible extend the definition of indicator structures to minor conditions $\Sigma$ with multiple function symbols $f_1,\dots,f_n$ of arities $k_1,\dots,k_n$, respectively. Let $\bB$ be a structure. $\bB^{f_1,\dots,f_n}$ is the structure with domain $\{f_i(b_1,\dots,b_{k_i})\mid i\in[n], b_1,\dots,b_{k_i}\in B\}\}$ such that for each $i\in[n]$ the map $h_i\colon \bB^{k_i}\to\bB^{f_1,\dots,f_n}, t\mapsto f_i(t)$ is an embedding and $\bB^{f_1,\dots,f_n}$ is the disjoint union of the images of $h_1,\dots,h_n$. 
Let $\sim$ be the smallest equivalence relation on $B^{f_1,\dots,f_n}$ such that 
$f_i(a) \sim f_j(b)$ if $\Sigma$ contains $f_i(x_1,\dots,x_{k_i})\approx f_j(y_1,\dots,y_{k_j})$
and there is a map $s \colon \{x_1,\dots,x_{k_i},y_1,\dots,y_{k_j}\} \to B$ with 
$a = (s(x_1),\dots,s(x_{k_i}))$ and $b = (s(y_1),\dots,s(y_{k_j}))$. Then the \emph{indicator structure of $\Sigma$ with respect to $\bB$} is $\bB^{f_1,\dots,f_n}/_\sim$. 
Using these indicator structures the proof of Lemma~\ref{lem:unfoldedCPDualityImpliesSigmaImpliesHM} with (for finite structures) replaced by (for minions) works essentially in the same way, it just becomes even more technical. 
%Observe that the use of Lemma~\ref{lem:WlogSingleFunctionSymbol} to restrict to conditions $\Sigma$ with a single function symbol restriction to 

\section{Transitive tournaments have elevator chains}
To prove $(\ref{minor-p2})\Rightarrow(\ref{maltsev})$ we need the following statement. 

\begin{lemma}\label{lem:TnHasElevatorChains}
	Let $n\in\N$. Then $\Pol(\T_{n+1})$ contains operations that form an elevator chain of length $n$. 
\end{lemma}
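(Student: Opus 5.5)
The plan is to avoid writing down explicit operations and instead argue through the indicator structure, using the finite duality $(\{\bP_{n+1}\},\T_{n+1})$. By Lemma~\ref{lem:WlogSingleFunctionSymbol}, or directly by the multi-symbol indicator structures described after Lemma~\ref{lem:unfoldedCPDualityImpliesSigmaImpliesHM}, let $\bI$ be the indicator structure of $\Ele n$ with respect to $\T_{n+1}$. Its elements are the $\sim$-classes of the formal terms $e_i(a)$, with $a$ a tuple over $[0,n]$ of the appropriate arity. By Lemma~\ref{lem:ind}, in its multi-symbol form, it suffices to show $\bI\to\T_{n+1}$. By the duality $(\{\bP_{n+1}\},\T_{n+1})$ this reduces to showing $\bP_{n+1}\not\to\bI$, i.e.\ that $\bI$ contains no directed walk with $n+1$ edges.

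To rule out long walks, I would look for a potential $\lambda\colon\bI\to[0,n]$ that strictly increases along every edge; such a $\lambda$ is itself the desired homomorphism. An edge $[u]\to[w]$ of $\bI$ comes from representatives $e_i(a)\in[u]$ and $e_i(b)\in[w]$ of the same symbol with $a<b$ in every coordinate. So I will first define $\lambda$ on the representatives $e_i(a)$. The requirements are two.
\begin{enumerate}
\item $\lambda$ is strictly monotone under coordinatewise strict increase, for each fixed $i$.
\item $\lambda$ is constant on $\sim$-classes, i.e.\ it respects each identity of $\Ele n$ under every substitution.
\end{enumerate}

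The candidate I would try first is modelled on the known HM chains of $\T_{n+1}$ (which exist, since $\T_{n+1}\models\HM n$ by Theorem~\ref{thm:TnAreHMBlocker}). For $e_0$ and $e_{n+1}$ I take $\lambda(e_0(x_1,\dots,x_n))=\min_j x_j$. For $e_i$ with middle block $(u,v,w)$ at positions $i,i+1,i+2$, I would combine the remaining arguments and the block by a threshold rule. The intended behaviour is that the value agrees with $\min$ whenever the block degenerates to $(x,z,z)$ or $(y,y,x)$, and that the $i$-th chain step is allowed to ``lose'' exactly one level. Concretely, I would try something like $\min(x_{j}\ (j\ne i),\,u,\,w,\,\ldots)$, adjusted by the comparison of $v$ with $u$ and $w$ and truncated at level $i$. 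These are exactly the freedoms used in HM chains for transitive tournaments.

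Checking that $\lambda$ respects the three families of identities is routine substitution. Checking requirement (1) is a case analysis over the relative order of $u,v,w$, done separately for $a$ and for $b$.

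The main obstacle is finding the right threshold formula. The identities $e_i(\dots,y,y,x_i,\dots)\approx e_{i+1}(\dots,x_{i+1},z,z,\dots)$ couple neighbouring blocks, so the function cannot simply ignore $y$ and $z$; that would be a Maltsev-type condition, which $\T_{n+1}$ does not satisfy. The index $i$ must be used to break this, and the counting has to ensure that along any walk the potential rises by at least one per edge while staying within $[0,n]$.

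If a closed formula proves elusive, I would fall back on the combinatorial route. Suppose a walk of length $n+1$ in $\bI$ exists. Follow the $\sim$-chains through the $e_i$'s, exactly as in the braid and staircase pictures of Figure~\ref{fig:Ele3}. Then project onto suitable coordinates as in the proof of Lemma~\ref{lem:EleImpliesHM}. The aim is to extract a directed path of length $n+1$ inside $\T_{n+1}$ itself, which is impossible.
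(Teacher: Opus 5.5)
Your reduction is essentially the paper's: take the indicator structure $\bI$ of $\Ele{n}$ and use the duality $(\{\bP_{n+1}\},\T_{n+1})$ to reduce to excluding directed walks with $n+1$ edges. But the step that carries all the content is missing. You never exhibit the potential, and the fallback only states an intention.

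Worse, the one concrete choice you commit to, $\lambda(e_0(x))=\min_j x_j$, is impossible for $n\ge 2$. The tuples $(1,0,0,1,\dots,1)$ and $(2,2,1,2,\dots,2)$ form an edge of $\T_{n+1}^{n+2}$. By the first identity with $z=0$, $e_1(1,0,0,1,\dots,1)\approx e_0(1,\dots,1)$. On the other side, $e_1(2,2,1,2,\dots,2)\approx e_2(1,2,2,2,\dots,2)\approx e_1(1,1,1,2,\dots,2)\approx e_0(1,2,\dots,2)$. So every elevator chain of length $n$ in $\Pol(\T_{n+1})$ satisfies $e_0(1,\dots,1)<e_0(1,2,\dots,2)$, whereas $\min$ gives $1$ for both. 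No threshold rule for the middle $e_i$ can repair this, since the constraint is on $e_0$ alone.

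Your fallback of tracking a single coordinate edge by edge fails for a related reason. Along an $e_i$-edge, the entry standing in for the block $(u,v,w)$ need not increase: tail block $(0,0,n-1)$ and head block $(1,n,n)$ give $n-1$ and $1$. Moreover, the position of the block changes from edge to edge.

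The missing idea is to keep all $n$ coordinates at once and count. Map each $\sim$-class to $[0,n]^n$ as follows.
\begin{itemize}
\item The class of $e_0(x)$ goes to $x$. Equivalently, for any representative $e_i(\dots)$, keep its $n-1$ outer arguments and replace the block by $w$ if $u=v$, and by $u$ if $v=w$.
\item The remaining classes are singletons $\{e_i(\dots)\}$ with $u\neq v\neq w$. Send such a class to the tuple obtained by replacing its block by $\min(u,v,w)$.
\end{itemize}
Take an edge witnessed by representatives of some $e_i$ with $1\le i\le n$. All coordinates other than the $i$-th strictly increase. The $i$-th coordinate is $<n$ at the tail and $>0$ at the head, because each block entry of the tail is below the corresponding entry of the head.

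Now consider a walk $w_0\to w_1\to\dots\to w_m$ with $m\ge n$. All coordinates of $w_1$ are $\ge 1$. Each of the $n-1$ edges from $w_1$ to $w_n$ spoils at most one coordinate, so some coordinate increases $n-1$ times from a value $\ge 1$. It therefore equals $n$ at $w_n$, so $w_n$ has no out-edge, and $m\le n$. This is the paper's homomorphism $f\colon\bI\to{\vphantom{\tilde\T}}^1\tilde\T_{n+1}^n$ followed by its path argument.

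If you want a scalar $\lambda$, compose $f$ with $g(a)=\max(\{0\}\cup\{L\in[n] : a_{(L)}\ge L\})$, where $a_{(1)}\le\dots\le a_{(n)}$ is $a$ sorted. Along every edge $a\to b$ of ${\vphantom{\tilde\T}}^1\tilde\T_{n+1}^n$ one has $a_{(n)}<n$, $b_{(1)}\ge 1$, and $b_{(k+1)}\ge a_{(k)}+1$ for $k\in[n-1]$, so $g(b)>g(a)$. Note that $g$ is not $\min$ on the $e_0$-classes: $g(1,\dots,1)=1$ while $g(1,2,\dots,2)=2$.
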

This proof was discovered in discussion with Marlena Wünning.
\begin{proof}
	For $n=0$ the structure $\T_{0+1}$ consists of a single isolated point and has a constant polymorphism satisfying $\Ele0$.
	Now we consider the case $n\geq 1$. Let $\bI$ be the indicator structure of $\Ele n$ with respect to $\T_{n+1}$. Note that for every element $s$ of $\bI$ there are $a_1,\dots,a_n\in\T_{n+1}$ such that either
	\begin{itemize}
		\item $s$ is a singleton and there is an $i\in[n]$ and $a'_{i},a''_i\in\T_n$ with \[s=\{e_i(a_1,\dots,a_{i-1},a_{i},a'_{i},a''_i,a_{i+1},\dots,a_{n})\},\] $a_i\neq a'_i$, and $a'_i\neq a''_i$ or
		\item $s$ contains $e_0(a_1,\dots,a_n)$ and 
		\begin{align*}
			&e_i(a_1,\dots,a_{i-1},b,b,a_i,a_{i+1},\dots,a_n)\text{ and}\\ 
			&e_i(a_1,\dots,a_{i-1},a_i,b,b,a_{i+1},\dots,a_n)
		\end{align*} for all $b\in\T_{n+1}$ and all $i\in[n]$.
	\end{itemize}
	We define a map $f$ from $\bI$ to $[0,n]^n$. Elements of the first form are mapped to \[(a_1,\dots,a_{i-1},\min(a_i,a'_i,a''_i),a_{i+1},\dots,a_n)\] and elements of the second form are mapped to $(a_1,\dots,a_n)$. Clearly, $f$ is well defined.
	
	We define the digraph ${\vphantom{\tilde\T}}^1\tilde\T_{n+1}^n$ with domain $[0,n]^n$ that has an edge from $(a_1,\dots,a_n)$ to $(b_1,\dots,b_n)$ if and only if for all but at most one $j\in[n]$ we have $a_j<b_j$ and if $a_j\geq b_j$, then $a_j<n$ and $0<b_j$.
	
	Now we will show that $f$ is a homomorphism from $\bI$ to ${\vphantom{\tilde\T}}^1\tilde\T_{n+1}^n$. Let $(s,t)$ be an  edge in $\bI$ and %, then for all but at most one $j\in[n]$ we have $f(s)_j<f(t)_j$ and if $f(s)_j\geq f(t)_j$, then $n>f(s)_j$ and $f(t)_j>0$. 
	\begin{align*}
		e_i(a_1,\dots,a_{i-1},a_{i},a'_{i},a''_i,a_{i+1},\dots,a_{n})&\in s\text{ and}\\
		e_i(b_1,\dots,b_{i-1},b_{i},b'_{i},b''_i,b_{i+1},\dots,b_{n})&\in t
	\end{align*} be witnesses for this edge% $(s,t)$ in $\bI$. 
	Note that it is always possible to pick witnesses with $i\notin\{0,n+1\}$. Then $a_1<b_1,\dots,a_n<b_n,a'_i<b'_i$, and $a''_i<b''_i$. Hence $f(s)_j=a_j<b_j=f(t)_j$ for all $j\in[n]\setminus\{i\}$. Furthermore, $a_i<b_i$, $a'_i<b'_i$, and $a''_i<b''_i$ imply that $a_i,a'_i,a''_i<n$ and that $0<b_i,b'_i,b''_i$. Since $f(s)_i\in\{a_i,a'_i,a''_i\}$ and $f(t)_i\in\{b_i,b'_i,b''_i\}$ we get $f(s)_i<n$ and $0<f(t)_i$. Hence, $f$ is a homomorphism. 
	Note that it can be the case that $f(s)_j\geq f(t)_j$, if for example $a_i=0=a'_i$, $b'_i=n=b''_i$, and $n>a''_i\geq b_i>0$.
	
	Finally, we show ${\vphantom{\tilde\T}}^1\tilde\T_{n+1}^n\to\T_{n+1}$. %that there is a homomorphism from ${}^1\tilde\T_{n+1}^n$ to $\T_{n+1}$. 
	Since $(\{\P_{n+1}\},\T_{n+1})$ is a duality pair it suffices to show that $\P_{n+1}\not\to{\vphantom{\tilde\T}}^1\tilde\T_{n+1}^n$. Let $(a_1^0,\dots,a_n^0),\dots,(a_1^m,\dots,a_n^m)$ be a directed path in ${\vphantom{\tilde\T}}^1\tilde\T_{n+1}^n$. We want to show that $m\leq n$. By definition of the edge relation $a_1^1,\dots,a_n^1\geq 1$. Since in an edge all but at most one entry strictly increases we have that there are at least $n-j+1$ $i\in[n]$ such that $a_i^1<\dots<a_i^j$ for all $j\in[1,m]$. Hence for $j=n$ there is at least one $i\in[n]$ such that $a_i^n\geq n$. Since $a_i^n\in[0,n]$ we conclude $a_i^n=n$ and by definition of the edge relation $(a_1^n,\dots,a_n^n)$ does not have an outgoing edge. Therefore $m\leq n$ and $\P_{n+1}\not\to{\vphantom{\tilde\T}}^1\tilde\T_{n+1}^n$.
	In conclusion $\bI\to {\vphantom{\tilde\T}}^1\tilde\T_{n+1}^n\to\T_{n+1}$. Hence $\T_{n+1}\models\Ele n$.
\end{proof}
Note that ${\vphantom{\tilde\T}}^1\tilde\T_{n+1}^n$ is similar but not the same as the one-tolerant $n$-th power ${\vphantom{\T}}^1\mathord{\T}_{n+1}^n$ of $\T_{n+1}$ introduced by Larose, Loten, and Tardif in Section 4.2 in~\cite{LLT}. In fact, the smallest $k$ such that ${\vphantom{\T}}^1\mathord{\T}_{n+1}^k$ has a homomorphism to $\T_{n+1}$ is $n+1$.
We have seen in Lemma~\ref{lem:EleImpliesHM} that $\Ele n$ implies $\HM n$. In combination with Lemma~\ref{lem:TnHasElevatorChains} we obtain a new proof of the following well known fact. %and it is well known that $\bT_{n+1}\models\HM{n}$. 
\begin{corollary}
	Let $n\in\N$. Then $\Pol(\bT_{n+1})\models\HM n$.
\end{corollary}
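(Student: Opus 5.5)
The plan is to obtain the corollary directly from two results already in the paper. First, by Lemma~\ref{lem:TnHasElevatorChains}, $\Pol(\T_{n+1})$ contains operations $e_0,\dots,e_{n+1}$ forming an elevator chain of length $n$, that is, $\Pol(\T_{n+1})\models\Ele n$. Second, for $n\geq1$, Lemma~\ref{lem:EleImpliesHM} says that $\Ele n$ implies $\HM n$ for all minions. Since $\Pol(\T_{n+1})$ is a clone, and in particular a minion, it follows that $\Pol(\T_{n+1})\models\HM n$.

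To make this concrete, I would take the minors from the proof of Lemma~\ref{lem:EleImpliesHM}:
\[p_0(a)\coloneqq e_0(a,\dots,a),\qquad p_i(a,b,c)\coloneqq e_i(c,\dots,c,a,b,c,a,\dots,a),\qquad p_{n+1}(c)\coloneqq e_{n+1}(c,\dots,c).\]
These are polymorphisms of $\T_{n+1}$, and they satisfy the identities of $\HM n$. Nothing new needs to be checked beyond citing the two lemmas.

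The only step needing separate attention is the degenerate case $n=0$, which Lemma~\ref{lem:EleImpliesHM} does not cover. Here $\T_1$ is a single vertex with no edges, so every constant map is a polymorphism. Setting $p_0=p_1$ equal to this constant gives $p_0(x)\approx p_1(y)$, which is exactly $\HM0$. Equivalently, this is the constant polymorphism found in the $n=0$ case of the proof of Lemma~\ref{lem:TnHasElevatorChains}.

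I do not expect any real obstacle. The corollary is just the composition of Lemma~\ref{lem:TnHasElevatorChains} with Lemma~\ref{lem:EleImpliesHM}, plus the trivial boundary case $n=0$. It is consistent with Theorem~\ref{thm:TnAreHMBlocker}, which gives $\T_{n+2}\not\models\HM n$ and $\T_{n+1}\models\HM n$; the latter is the well-known fact being reproved here.
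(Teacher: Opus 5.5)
Your proposal is correct and matches the paper's argument exactly: the corollary comes from combining Lemma~\ref{lem:TnHasElevatorChains} with Lemma~\ref{lem:EleImpliesHM}. Your separate treatment of $n=0$, using the constant polymorphism of $\bT_1$, is a useful bit of extra care, since Lemma~\ref{lem:EleImpliesHM} is only stated for $n\geq1$ and the paper leaves this case implicit.
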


\section{Proof of the main result}

Combining the previous section we can prove the main result.
\begin{proof}[Proof of Theorem~\ref{thm:main}]
	Let $n\in\N$ and let $\bB$ be a structure with a finite domain and a finite
	relational signature. 
	The implication $(\ref{maltsev})
	\Rightarrow(\ref{closedUnderFoldings})$ follows from Lemma~\ref{lem:elevatorChainsImplyClosedUndernFoldings} and Theorem~\ref{thm:mainOld}, the implication	
	$(\ref{closedUnderFoldings})
	\Rightarrow(\ref{caterpillarAndUnfolding})$ from Lemma~\ref{lem:closedUnderFoldingsImpliesClosedUnderUnfoldings}, the implication
	$(\ref{caterpillarAndUnfolding})
	\Rightarrow(\ref{caterpillar})$ is trivial, the implication	
	$(\ref{caterpillar})
	\Rightarrow(\ref{non-degenerate-minor})$ from Lemma~\ref{lem:unfoldedCPDualityImpliesSigmaImpliesHM}. 
	For the implication 	
	$(\ref{non-degenerate-minor})
	\Rightarrow(\ref{minor-p2})$ suppose that $\Sigma$ is a minor condition that holds in $\Pol(\bT_{n+2})$. By Theorem~\ref{thm:TnAreHMBlocker} we have $\T_{n+2}\not\models \HM n$. Hence, $\Sigma$ can not imply (for finite structures) $\HM n$. Therefore 
	$(\ref{non-degenerate-minor})$ implies $\bB\models\Sigma$, as desired.
	For the implication 
	$(\ref{minor-p2})
	\Rightarrow(\ref{maltsev})$ it suffices to show that $\Pol(\bT_{n+2})$ contains the desired operations. Since $(\{\bP_{n+2}\},\bT_{n+2})$ is a duality pair $\bT_{n+
	2}$ has caterpillar duality. Hence, by Theorem~\ref{thm:mainOld}, $\Pol(\bT_{n+2})$ contains $k$-absorptive
	operations of arity $\ell k$, for all $\ell, k \geq 1$. By Lemma~\ref{lem:TnHasElevatorChains} $\Pol(\bT_{n+2})$ contains elevator operations of length $n+2$.

	For the implication	
	$(\ref{caterpillarAndUnfolding})
	\Rightarrow(\ref{can-sym-lin-arc})$ let $\Pi_S$ be the canonical s$_n$lam Datalog program for $\bB$. Since $\bB$ has caterpillar duality Theorem~\ref{thm:mainOld} implies that the canonical lam Datalog program $\Pi$ for $\bB$ solves $\Csp(\bB)$.  Lemma~\ref{lem:MaltimpliesLAMDatalogIsSLAMDatalog} implies that $\Pi$ and $\Pi_S$ derive the goal predicate on the same instances of $\Csp(\bB)$. Therefore, $\Pi_S$ solves $\Csp(\bB)$.
	The implication	
	$(\ref{can-sym-lin-arc})
	\Rightarrow(\ref{sym-lin-arc})$ is trivial, the implication
	$(\ref{sym-lin-arc})
	\Rightarrow(\ref{caterpillarAndUnfolding})$ is Lemma~\ref{lem:SLAMimpliesUnfoldedCaterpillar}.

	The equivalence of $(\ref{minor-p2})-(\ref{pp-p2})$ follows from Theorem~\ref{thm:PPisMinorIsMinionHom}. That $(\ref{maltsev})\Leftrightarrow(\ref{lattice})$ follow from the equivalence of items (3) and (4) in Theorem~\ref{thm:mainOld} and the fact that the existence of an elevator chain of length $n$ is preserved under homomorphic equivalence. 
%	For the final comment assume that $\bB$ is a core. core kommentar beweisen
\end{proof}

Combining Lemma~\ref{lem:finiteCaterpillarDuality} and Theorem~\ref{thm:main} yields the following corollary.
\begin{corollary}
    If $\bB$ is a finite structure and $\mathcal F$ is a finite set of caterpillars such that $(\mathcal F,\B)$ is a duality pair, then items $(\ref{maltsev})-(\ref{lattice})$ hold for $\bB$, where 
    $n$ is the maximal number of vertebrae in a caterpillar in $\mathcal F$.
    %$\T_{n+2}$ pp-constructs $\bB$, where $n$ is the maximal number of vertebrae in a caterpillar in $\mathcal F$.
\end{corollary}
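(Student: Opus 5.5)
The corollary follows by chaining Lemma~\ref{lem:finiteCaterpillarDuality} with Theorem~\ref{thm:main}; the only real work is checking that the hypotheses of the theorem are met.

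\textbf{Step 1 (fix $n$ and get the duality).} Let $\mathcal F$ be a finite set of caterpillars with $(\mathcal F,\bB)$ a duality pair. Let $n$ be the maximal number of vertebrae among the members of $\mathcal F$; this maximum exists because $\mathcal F$ is finite. Lemma~\ref{lem:finiteCaterpillarDuality} then gives that $\bB$ has $n$-fixed unfolded caterpillar duality.

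\textbf{Step 2 (check the lemma's mechanism).} The point behind that lemma is worth confirming. A caterpillar $\bC\in\mathcal F$ with $n'\le n$ vertebrae can fix all of them. If $n'<n$, we pad the list $v_1,\dots,v_n$ with repetitions. This is harmless, since fixing a vertebra twice imposes no new constraint. An unfolding that fixes every vertebra of $\bC$ is $\bC$ itself, as observed at the end of the proof of Lemma~\ref{lem:closedUnderFoldingsImpliesClosedUnderUnfoldings}. Hence $\mathcal F$ is closed under $n$-fixed unfoldings, as required.

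\textbf{Step 3 (check the standing hypotheses).} Theorem~\ref{thm:main} requires $\bB$ to have a finite domain and a finite relational signature.
\begin{itemize}
\item The corollary says $\bB$ is finite; we read this as finite domain.
\item The signature is implicitly finite throughout the paper, and Lemma~\ref{lem:finiteCaterpillarDuality} is stated in the same setting.
\end{itemize}
This is the only spot needing care. If an infinite signature were allowed, we could restrict to the finitely many symbols occurring in $\mathcal F$. That reduction is tangential, so we simply assume the signature is finite, as the paper does everywhere.

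\textbf{Step 4 (conclude).} With $\bB$ having $n$-fixed unfolded caterpillar duality, item~(\ref{caterpillar}) of Theorem~\ref{thm:main} holds for this $n$. The equivalence of items $(\ref{maltsev})$--$(\ref{lattice})$ in that theorem then gives all of them for $\bB$. In particular, $\T_{n+2}$ pp-constructs $\bB$.

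No step is a genuine obstacle, since the corollary is a direct composition of the two cited results.
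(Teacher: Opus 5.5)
Your proposal is correct and is exactly the paper's argument: Lemma~\ref{lem:finiteCaterpillarDuality} gives item~(\ref{caterpillar}) for this $n$, and Theorem~\ref{thm:main} then yields all of items~(\ref{maltsev})--(\ref{lattice}). Your additional checks only spell out what the paper leaves implicit, with one trivial caveat: padding by repetition is unavailable for a caterpillar with no vertebrae, but such a caterpillar has no nontrivial unfoldings anyway.
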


\section{Examples}\label{sec:examples}

Here we present some examples of interesting structures that can be pp-constructed from $\stCon$. 

\begin{example}\label{exa:aboveT4AndBelowT3}
	Let $\P_{313}$ be the digraph with vertices  $\{0,1,2,3,2',3',4',5'\}$ and edges $\{(0,1),(1,2),(2,3),(2',3),(2',3'),(3',4'),(4',5')\})$. Let $\mathcal F$ be the set of all unfoldings of $\P_{323}$ that fix the vertebrae indicated by the bold edges in Figure~\ref{fig:aboveT4AndBelowT3}. Observe that $(\mathcal F,\P_{313})$ is a duality pair. Hence $\P_{313}$ has 2-fixed unfolded caterpillar duality and, by Theorem~\ref{thm:main}, $\P_{313}$ is pp-constructable from $\bT_4$ but not from $\bT_3$. Furthermore, $\P_{313}\models\HM2$ and $\P_{313}\not\models\HM1$ (verified by computer). Hence, $\P_{313}$ can not pp-construct $\bT_4$ and it can pp-construct $\bT_3$. Therefore $\P_{313}$ lies strictly between $\bT_4$ and $\bT_3$.
	\begin{figure}
		\begin{tikzpicture}
			\node at (0,0) {
				\begin{tikzpicture}[scale = 0.5]
					\node[bullet] (1) at (0,1) {};
					\node[bullet] (2) at (0,2) {};
					\node[bullet] (3) at (0,3) {};
					\node[bullet] (4) at (0,4) {};
					
					\node[bullet] (3b) at (1,3) {};
					\node[bullet] (4b) at (1,4) {};
					\node[bullet] (5b) at (1,5) {};
					\node[bullet] (6b) at (1,6) {};
					\node at (0.5,0.0) {$\mathbb P_{313}$};
					\path[->,>=stealth'] 
					(1) edge (2)
					(2) edge (3)
					(3) edge (4)
					(3b) edge (4)
					(3b) edge (4b)
					(4b) edge (5b)
					(5b) edge (6b)
					;
			\end{tikzpicture}};
			\node at (3.5,0) {
				\begin{tikzpicture}[scale = 0.5]
					
					\node[bullet] (0) at (1,0) {};
					\node[bullet] (1) at (1,1) {};
					\node[bullet] (2) at (1,2) {};
					\node[bullet] (3) at (1,3) {};
					\node[bullet] (21) at (1.5,2) {};
					\node[bullet] (12) at (2,1) {};
					\node[bullet] (22) at (2,2) {};
					\node[bullet] (32) at (2,3) {};
					\node[bullet] (42) at (2,4) {};
					\node[bullet,opacity=0] (52) at (2,5) {};
					\node at (1.5,-1.0) {$\mathbb P_{323}$};
					
					\path[->,>=stealth'] 
					(0) edge (1)
					(1) edge[very thick] (2)
					(2) edge (3)
					(21) edge (3)
					(12) edge (21)
					(12) edge (22)
					(22) edge[very thick] (32)
					(32) edge (42)
					;
					
			\end{tikzpicture}};
		\end{tikzpicture}
		\centering
		\caption{An example of a graph (left) that is strictly between $\bT_4$ and $\bT_3$ and one of its obstructions (right).}\label{fig:aboveT4AndBelowT3}
	\end{figure}
\end{example}

\begin{example}\label{exa:aboveT5AndIncomparableWithT4}
    Let $\P_{414}$ be the digraph with vertices  $\{0,1,2,3,4,3',4',5',6',7'\}$ and edges $\{(0,1),(1,2),(2,3),(3,4),(3',4),(3',4'),(4',5'),(5',6'),(6',7')\})$. Let $\mathcal F$ be the set of all unfoldings of $\P_{423}$ and $\P_{324}$ that fix the vertebrae indicated by the bold edges in Figure~\ref{fig:aboveT5AndIncomparableWithT4}. Observe that $(\mathcal F,\P_{414})$ is a duality pair. Hence $\P_{414}$ has 3-fixed unfolded caterpillar duality and, by Theorem~\ref{thm:main}, $\P_{414}$ is pp-constructable from $\bT_5$ but not from $\bT_4$. Furthermore, $\P_{414}\models\HM2$ (verified by computer). Hence $\P_{414}$ can not pp-construct $\bT_4$. Therefore $\P_{414}$ can be pp-constructed from $\bT_5$ and is (pp-)incomparable with $\bT_4$.
    \begin{figure}
    	\begin{tikzpicture}
    		\node at (0,0) {
    		\begin{tikzpicture}[scale = 0.5]
   				\node[bullet] (0) at (0,0) {};
   				\node[bullet] (1) at (0,1) {};
   				\node[bullet] (2) at (0,2) {};
   				\node[bullet] (3) at (0,3) {};
   				\node[bullet] (4) at (0,4) {};
   				
   				\node[bullet] (3b) at (1,3) {};
   				\node[bullet] (4b) at (1,4) {};
   				\node[bullet] (5b) at (1,5) {};
   				\node[bullet] (6b) at (1,6) {};
   				\node[bullet] (7b) at (1,7) {};
   				\node at (0.5,-1.0) {$\mathbb P_{414}$};
   				\path[->,>=stealth'] 
   				(0) edge (1)
   				(1) edge (2)
   				(2) edge (3)
   				(3) edge (4)
   				(3b) edge (4)
   				(3b) edge (4b)
   				(4b) edge (5b)
   				(5b) edge (6b)
   				(6b) edge (7b)
   				;
    		\end{tikzpicture}};
    		\node at (3.5,0) {
    		\begin{tikzpicture}[scale = 0.5]
			
			\node[bullet] (m1) at (1,-1) {};
			\node[bullet] (0) at (1,0) {};
			\node[bullet] (1) at (1,1) {};
			\node[bullet] (2) at (1,2) {};
			\node[bullet] (3) at (1,3) {};
			\node[bullet] (21) at (1.5,2) {};
			\node[bullet] (12) at (2,1) {};
			\node[bullet] (22) at (2,2) {};
			\node[bullet] (32) at (2,3) {};
			\node[bullet] (42) at (2,4) {};
			\node[bullet,opacity=0] (52) at (2,5) {};
			\node at (1.5,-2.0) {$\mathbb P_{423}$};
			
			\path[->,>=stealth'] 
			(m1) edge (0)
			(0) edge[very thick] (1)
			(1) edge[very thick] (2)
			(2) edge (3)
			(21) edge (3)
			(12) edge (21)
			(12) edge (22)
			(22) edge[very thick] (32)
			(32) edge (42)
			;
			
			\end{tikzpicture}};
			\node at (5.5,0) {
			\begin{tikzpicture}[scale = 0.5]
					
			\node[bullet] (0) at (3,0) {};
			\node[bullet] (1) at (3,1) {};
			\node[bullet] (2) at (3,2) {};
			\node[bullet] (3) at (3,3) {};
			\node[bullet] (21) at (3.5,2) {};
			\node[bullet] (12) at (4,1) {};
			\node[bullet] (22) at (4,2) {};
			\node[bullet] (32) at (4,3) {};
			\node[bullet] (42) at (4,4) {};
			\node[bullet] (52) at (4,5) {};
			\node at (3.5,-2.0) {$\mathbb P_{324}$};
			\path[->,>=stealth'] 
			(0) edge (1)
			(1) edge[very thick] (2)
			(2) edge (3)
			(21) edge (3)
			(12) edge (21)
			(12) edge (22)
			(22) edge[very thick] (32)
			(32) edge[very thick] (42)
			(42) edge (52)
			;
   		\end{tikzpicture}};
    	\end{tikzpicture}
    	\centering
    	\caption{An example of a graph (left) that is strictly above $\bT_5$ and is incomparable with $\bT_4$ and some of its obstructions (right).}\label{fig:aboveT5AndIncomparableWithT4}
    \end{figure}
\end{example}

\begin{example}\label{exa:noElevatorTermsButHM}
    Let $\mathbb G$ be the digraph with vertices $\{0,1,1',2,2',3\}$ and edges \[\{(0,1),(1,2),(2,3),(0,3),(1',2),(1',3),(0,2'),(1,2'),(1',2')\})\] and $\P_4,\P_{323},\P_{32223},\dots$ be oriented paths as depicted in Figure~\ref{fig:noElevatorTermsButHM}. Observe that the tuple  $(\{\P_4,\P_{323},\P_{32223},\dots\},\mathbb G)$ is a duality pair. By Theorem~\ref{thm:mainOld} $\mathbb G$ is pp-constructable from st-Con. Furthermore, $\mathbb G\models\HM3$ (verified by computer). Hence $\mathbb G$ is strictly above st-Con.
    Note that $\Csp(\mathbb G)$ can be solved by a symmetric linear monadic Datalog Program but not by a s$_n$lam Datalog program. Hence, by Theorem~\ref{thm:main}, there is no $n$ such that $\bT_n$ pp-constructs $\mathbb G$. 
    \begin{figure}
    	\begin{tikzpicture}
    		\node at (0,0) {
    		\begin{tikzpicture}
    			\node[bullet,label=left:0] (0) at (0,0) {};
    			\node[bullet,label=left:1] (1) at (0,1) {};
    			\node[bullet,label=left:2] (2) at (0,2) {};
    			\node[bullet,label=left:3] (3) at (0,3) {};
    			
    			\node[bullet,label=right:1'] (1P) at (1,1) {};
    			\node[bullet,label=right:2'] (2P) at (1,2) {};
    			\node at (0.5,-0.5) {$\mathbb G$};
    			
    			\path[->,>=stealth'] 
    				(0) edge (1)
    				(1) edge (2)
    				(2) edge (3)
    				(0) edge[bend left=40] (3)
    				(1P) edge (2P)
    				(0) edge (2P)
    				(1) edge (2P)
    				(1P) edge (2)
    				(1P) edge (3)
    				;
    			\end{tikzpicture}};
    		\node at (6.5,0) {\begin{tikzpicture}[scale=0.5]
    				\node[bullet] (0) at (-2,0) {};
    				\node[bullet] (1) at (-2,1) {};
    				\node[bullet] (2) at (-2,2) {};
    				\node[bullet] (3) at (-2,3) {};
    				\node[bullet] (4) at (-2,4) {};
    				\node at (-2.0,-1) {$\mathbb P_4$};
    				\path[->,>=stealth'] 
    				(0) edge (1)
    				(1) edge (2)
    				(2) edge (3)
    				(3) edge (4)
    				;
    				
    				\node[bullet] (0) at (1,0) {};
    				\node[bullet] (1) at (1,1) {};
    				\node[bullet] (2) at (1,2) {};
    				\node[bullet] (3) at (1,3) {};
    				\node[bullet] (21) at (1.5,2) {};
    				\node[bullet] (12) at (2,1) {};
    				\node[bullet] (22) at (2,2) {};
    				\node[bullet] (32) at (2,3) {};
    				\node[bullet] (42) at (2,4) {};
    				\node at (1.5,-1.) {$\mathbb P_{323}$};
    				\path[->,>=stealth'] 
    				(0) edge (1)
    				(1) edge (2)
    				(2) edge (3)
    				(21) edge (3)
    				(12) edge (21)
    				(12) edge (22)
    				(22) edge (32)
    				(32) edge (42)
    				;
    				
    				\node[bullet] (0) at (5,0) {};
    				\node[bullet] (1) at (5,1) {};
    				\node[bullet] (2) at (5,2) {};
    				\node[bullet] (3) at (5,3) {};
    				\node[bullet] (21) at (5.5,2) {};
    				\node[bullet] (12) at (6,1) {};
    				\node[bullet] (22) at (6,2) {};
    				\node[bullet] (32) at (6,3) {};
    				\node[bullet] (23) at (6.5,2) {};
    				\node[bullet] (14) at (7,1) {};
    				\node[bullet] (24) at (7,2) {};
    				\node[bullet] (34) at (7,3) {};
    				\node[bullet] (44) at (7,4) {};
    				\node at (6,-1.0) {$\mathbb P_{32223}$};
    				\path[->,>=stealth'] 
    				(0) edge (1)
    				(1) edge (2)
    				(2) edge (3)
    				(21) edge (3)
    				(12) edge (21)
    				(12) edge (22)
    				(22) edge (32)
    				(23) edge (32)
    				(14) edge (23)
    				(14) edge (24)
    				(24) edge (34)
    				(34) edge (44)
    				;
    				
    				\node at (10.5,2) {$\dots$};
    		\end{tikzpicture}};
    	\end{tikzpicture}
    	\centering
    	\caption{An example of a graph (left) that is strictly above st-Con but can not be pp-constructed from any $\bT_n$ and its obstructions (right).}\label{fig:noElevatorTermsButHM}
    \end{figure}
\end{example}

\section{Conclusions}
We characterized the structures whose CSP is solvable by s$_n$lam Datalog as those that are pp-constructable from $\T_n$. Recall that pp-constructability orders the transitive tournaments in the following way
\[\T_1<\T_2<\T_3<\T_4<\dots<\stCon.\]
In Section~\ref{sec:examples} we have seen that not all structures pp-constructable from $\stCon$ are pp-interconstructable with some $\T_n$. Based on these examples we pose the following conjecture. 
\begin{conjecture}
    A structure $\bB$ lies strictly above $\stCon$ if and only if there are $n,m\in\N$ such that $\Csp(\bB)$ is solved by an $n$-almost symmetric linear monadic Datalog program, where for every rule the canonical database of its body is a directed path of length at most $m$. % (a$_m$ should only allow paths otherwise you can solve $\bB_n$)
\end{conjecture}
Another avenue of future research is starting again with slam Datalog and replacing linearity by \emph{$n$-almost linearity}.  We expect this to yield Datalog fragments for CSPs of the structures pp-constructed from \[\bB_n=(\{0,1\},\{1\},\{0,1\}^n\setminus\{(1,\dots,1)\}).\] It is well known that these structures form a chain 
\[\bB_2<\bB_3<\dots<\bB_{\infty}\]
and that $\bB_2$ is pp-interconstructable with $\T_3$. This would be a next step in characterizing all structures that are pp-constructable from a structure with finite duality. %mention duality  pair or end here?

\begin{acknowledgements*} 
The authors would like to thank Andrew Moorhead for many helpful and motivating discussions.
\end{acknowledgements*}

\bibliography{citations} 
\bibliographystyle{alpha}
\end{document}